\documentclass[sn-mathphys,Numbered]{sn-jnl}


\usepackage{graphicx}%
\usepackage{multirow}%
\usepackage{amsmath,amssymb,amsfonts}%
\usepackage{amsthm}%
\usepackage{mathrsfs}%
\usepackage[title]{appendix}%
\usepackage{xcolor}%
\usepackage{textcomp}%
\usepackage{manyfoot}%
\usepackage{booktabs}%
\usepackage{algorithm}%
\usepackage{algorithmicx}%
\usepackage{algpseudocode}%
\usepackage{listings}%



\theoremstyle{thmstyleone}%

\newtheorem{theorem}{Theorem}[section]

\newtheorem{proposition}[theorem]{Proposition}
\newtheorem*{proposition*}{Proposition}
\newtheorem{definition}[theorem]{Definition}
\newtheorem*{definition*}{Definition}
\newtheorem{lemma}[theorem]{Lemma}
\newtheorem*{lemma*}{Lemma}
\newtheorem{corollary}[theorem]{Corollary}
\newtheorem*{corollary*}{Corollary}
\theoremstyle{thmstyletwo}%

\theoremstyle{thmstylethree}%
\usepackage{amsfonts,amsthm,amsmath,amssymb,mathrsfs,epsfig,color,ulem,comment} 
\usepackage{hyperref}
\allowdisplaybreaks[3]
\newcommand{\be}{\begin{equation}}
\newcommand{\ee}{\end{equation}}
\newcommand{\ba}{\begin{eqnarray}}
\newcommand{\ea}{\end{eqnarray}}

\newcommand{\Cr}[1]{{{\cal O} \left( r^{#1} \right)}}
\newcommand{\Crs}[1]{o\left(r_0^{~#1}\right)}
\newcommand{\Crz}[1]{{{\cal O} \left( r_0^{~#1} \right)}}

\newcommand{\er}[1]{{\textcolor{red}{\sout{#1}}}}
\newtheoremstyle{definition}{6pt}{6pt}{\rm}{}{\sffamily}{ }{ }{}
\DeclareRobustCommand{\er}{\bgroup\markoverwith{\textcolor{red}{\rule[.5ex]{2pt}{0.8pt}}}\ULon}

\raggedbottom

\begin{document}

\title[Article Title]{A generalization of photon sphere based on escape/capture cone}


\author[1,2]{\fnm{Masaya} \sur{Amo}}

\author[3,4]{\fnm{Keisuke} \sur{Izumi}}

\author[5,6]{\fnm{Hirotaka} \sur{Yoshino}}

\author[7]{\fnm{Yoshimune} \sur{Tomikawa}}

\author[4,3]{\fnm{Tetsuya} \sur{Shiromizu}}

\affil[1]{\orgdiv{Center for Gravitational Physics and Quantum Information, Yukawa Institute for Theoretical Physics}, \orgname{Kyoto University}, \orgaddress{\postcode{606-8502}, \state{Kyoto}, \country{Japan}}}

\affil[2]{\orgdiv{Departament de F{\'\i}sica Qu\`antica i Astrof\'{\i}sica, Institut de
Ci\`encies del Cosmos}, \orgname{Universitat de
Barcelona}, \orgaddress{\postcode{E-08028}, \state{Barcelona}, \country{Spain}}}

\affil[3]{\orgdiv{Kobayashi-Maskawa Institute}, \orgname{Nagoya University}, \orgaddress{\postcode{464-8602}, \state{Nagoya}, \country{Japan}}}

\affil[4]{\orgdiv{Department of Mathematics}, \orgname{Nagoya University}, \orgaddress{\postcode{464-8602}, \state{Nagoya}, \country{Japan}}}

\affil[5]{\orgdiv{Department of Physics}, \orgname{Osaka Metropolitan University}, \orgaddress{\postcode{558-8585}, \state{Osaka}, \country{Japan}}}

\affil[6]{\orgdiv{Nambu Yoichiro Institute of Theoretical and Experimental Physics (NITEP)}, \orgname{Osaka Metropolitan University}, \orgaddress{\postcode{558-8585}, \state{Osaka}, \country{Japan}}}

\affil[7]{\orgdiv{Division of Science, School of Science and Engineering}, \orgname{Tokyo Denki University}, \orgaddress{\postcode{350-0394}, \state{Saitama}, \country{Japan}}}


\abstract{In general asymptotically flat spacetimes, bearing the null geodesics reaching the future null infinity in mind, 
we propose new concepts, the ``dark horizons'' (outer dark horizon and inner dark horizon)  as generalizations of the photon sphere.
They are defined in terms of the structure
of escape/capture cones of photons
with respect to a unit timelike vector field to capture the motion of light sources.
More specifically, considering a two-sphere
that represents a set of emission directions of photons,
the dark horizons are 
located at positions where a hemisphere is marginally included in the
capture and escape cones, respectively.
In addition, our definition succeeds in incorporating relativistic beaming effect.
We show that the dark horizon is absent in the Minkowski spacetime,
while they exist in spacetimes with black hole(s) under a certain condition. We derive the general properties of the dark horizons in spherically
symmetric spacetimes and 
explicitly calculate the locations of the dark horizons in the Vaidya spacetime and the Kerr spacetime. In particular, in the Kerr spacetime, the outer dark horizon coincides with the shadow observed from infinity on the rotation axis. }

\keywords{Photon sphere, Black hole, Photon orbit, Asymptotically flat spacetime}



\maketitle

%
%
%
\section{Introduction}
\label{Sec:intro}

Recently, Event Horizon Telescope (EHT) Collaboration succeeded in obtaining the image of the regions around black holes in M87~\cite{Akiyama:2019cqa} and in Sagittarius A*~\cite{EventHorizonTelescope:2022xnr}.
Future observations of the black hole shadow are expected to test the
general relativistic magnetohydrodynamics
(GRMHD) models~\cite{EventHorizonTelescope:2019pgp}, to constrain theories of gravity~\cite{Wang:2018prk,Moffat:2019uxp,Khodadi:2020gns}, and so forth.
In static and spherically symmetric spacetimes, 
the edge of the shadow is given by the photon sphere under the assumption that light sources are distributed at infinity~\cite{Virbhadra:1999nm,Claudel:2000}, which brings the photon sphere to attention in recent years. We note that the realistic observed shadow depends strongly on the distribution of the light source~\cite{Gralla:2019xty}, but the photon sphere is helpful to understand the geometric structures near the black hole.
Following the definition of Ref.~\cite{Claudel:2000}, 
the photon sphere is provided as
a $SO(3)\times {\mathbb{R}}$-invariant photon surface
in a static and spherically symmetric spacetime.
Here, the photon surface is defined as a nowhere-spacelike hypersurface $S$ such that, any null geodesic tangent to $S$ at any point on $S$ is included in $S$. 
Interestingly, the outermost photon sphere has been shown to satisfy the areal inequality~\cite{Yang:2019zcn} (see also~\cite{Lu:2019zxb})
\ba
\frac{9}{4}A_H\le A_{\rm ph,out}\le A_{\rm sh,out} \le 36\pi M^2,\label{AreaInequality}
\ea
where $A_H$ is the area of the event horizon, $A_{\rm ph,out}$ is the area of the outermost photon sphere, $A_{\rm sh,out}$ is the area of the shadow observed at infinity, and $M$ is the Arnowitt-Deser-Misner (ADM) mass.

However, the definition of the photon sphere does not work in spacetimes with the rotation. In addition, since the observed intensity depends on the position and the motion of the light source, it would be more useful if we could take these into account in the definition. There have been several attempts to generalize the photon sphere~\cite{Shiromizu:2017ego,Yoshino:2017gqv,Cao:2019vlu,Yoshino:2019dty}, but most of them are defined with local geometrical quantities. The escape probability of photons to infinity (or sufficiently far away from black holes) is not very clear in such definitions because the global analysis for the geodesic is crucial for each photon coming all the way from light sources around black holes. In this sense, we will focus on generalizations of the photon sphere with an attention to the global properties, especially the escape probability of photons, associated with the motion of the light sources.
So far, at least to 
our knowledge, there have been no generalized photon spheres that incorporate the effect of the position and the motion of the light source~\footnote{In Refs.~\cite{Siino:2019vxh,Siino:2021kep}, the approach based on the existence of an infinite number of
conjugate points along a null geodesic congruence is adopted. 
However, these references do not take account of the effect of the motion of the light source.}.

In this paper, we introduce new generalizations of the photon sphere
in terms of the escape probability of photons from the light sources incorporating the effect of its motion and the beaming effect due to their motion, which we name the {\it outer dark horizon (ODH)} and the {\it inner dark horizon (IDH)}.
These new concepts are defined by using the escape/capture cones, which depends on the frame associated with a unit timelike vector $T^\mu$.
We take $T^\mu$ as the tangent vector field of the trajectory of the light source (in the region where light sources exist). Together with this $T^\mu$, the escape/capture cones tell us how brightly the source is observed from infinity. 
Note that we can deal with the situation where the light sources are possibly distributed everywhere in the spacetime~\footnote{In this point, our definition is different from the wandering geodesic defined in Refs.~\cite{Siino:2019vxh,Siino:2021kep}
in which the light sources are supposed to be located at past null infinity.}.
Note that all of the rigorous theorems/propositions/lemmas are irrelevant to how to choose $T^\mu$ unless specified.

The rest of this paper is organized as follows.  In Sec.~\ref{Sec:review}, we provide a brief review of the asymptotic behavior 
of null geodesics near future null infinity as a preparation.
In Sec.~\ref{Sec:DH}, we introduce the ODH and IDH as generalizations of the photon sphere in general asymptotically flat spacetimes and probe the
properties of the ODH and IDH to see that the definition is fairly reasonable.
In Sec.~\ref{Sec:stsp}, we focus on a static and spherically symmetric case as a specific example and study basic properties of the ODH and IDH, mainly focusing on the relation to the photon sphere.
In Sec.~\ref{Sec:Vaidya}, we see their properties in the Vaidya spacetime as a simple example of dynamical spacetimes.
In Sec.~\ref{Sec:Kerr}, we investigate the
ODH and IDH in the Kerr spacetime as an example of spacetimes
with angular momenta.
The last section gives a summary and discussions.
In App.~\ref{Sec:Proof_convex-upward}, we provide the derivation of the inequalities appeared in Sec.~\ref{Sec:Kerr}.
We use the units in which the speed of light and the Newtonian constant of gravitation are unity, $c=1$ and $G=1$.
We assume the metric to be $C^{2-}$ functions ({\it i.e.}, class $C^{1,1}$).
The metric sign convention is $(-,+,+,+)$.

%
%
%

\section{Review of Null Asymptotics in the Bondi coordinate} 

\label{Sec:review}

In this section, as a preparation for
the proof of 
Lem.~\ref{OLP} and Lem.~\ref{ILP}, we review the asymptotic behavior of the metric and null geodesics near future null infinity in four-dimensional asymptotically flat spacetimes.
The detailed analyses are presented in Refs.~\cite{Amo:2021gcn,Amo:2021rxr,Amo:2022tcg,Amo:2023qws}.

We begin with reviewing the Bondi coordinates based on Refs.~\cite{Bondi,Sachs}, which describes the asymptotic behavior of the metric near future null infinity (for the generalization to higher dimensions, see also Refs.~\cite{,Tanabe:2011es,Hollands:2003ie,Hollands:2003xp,Ishibashi:2007kb}).
The non-zero components of the metric near future null infinity in the Bondi coordinates can be expanded in the power of $1/r$ as
\begin{eqnarray}
  g_{uu}&=&-1+mr^{-1} +\Cr{-2},\label{metricuu} \\
  g_{ur} &=&-1 + \Cr{-2},\label{metricur}\\ 
  g_{IJ} &=&h_{IJ}r^2~=~\omega_{IJ}r^2 + h^{(1)}_{IJ}r + \Cr{0},  \label{metricIJ}\\
  \quad g_{uI}&=& \Cr{0}.\label{metricuI}
  \end{eqnarray}
Here, $u$ denotes the retarded time, $r$ is the areal radius, $x^I$ stands for the angular coordinates, and $\omega_{IJ}$ is the metric for the unit two-sphere.  
The non-zero components of the inverse metric behave as 
\begin{eqnarray}
g^{ur} &=&  -1 + \Cr{-2},\label{g^ur}\\
g^{rr}  &=&1-mr^{-1}+\Cr{-2},\label{g^rr}\\
g^{rI} &=& \Cr{-2},\\
g^{IJ} &=&   \omega^{IJ}r^{-2} -h^{(1)IJ}r^{-3} + \Cr{-4},\label{g^IJ}
\end{eqnarray}
where $\omega^{IJ}$ is the inverse of $\omega_{IJ}$, and $h^{(1)IJ}$ is defined as $h^{(1)IJ}:=\omega^{IK}\omega^{JL}h^{(1)}_{KL}$.
Future null infinity is supposed to be in the limit of $r \to \infty$ while $u$ is kept finite. 
Let us impose the gauge condition   
\begin{align}
    \label{gau}
    \sqrt{\det h_{IJ}}=\omega_{2},
  \end{align}
where $\omega_{2}$ is the volume element of the two-dimensional unit sphere.
Here, $h_{IJ}-\omega_{IJ}$ corresponds to gravitational waves.
In general relativity, the integration of $m(u,x^I)$ over the angular coordinates gives us the Bondi mass, 
\begin{eqnarray}
M(u):=\frac{1}{8\pi}\int_{S^{2}}md\Omega \label{M(u)}.
\end{eqnarray}

Next, let us look at the asymptotic behavior of null geodesics near future null infinity based on Refs.~\cite{Amo:2021gcn,Amo:2021rxr,Amo:2022tcg,Amo:2023qws}.
By combining the $r$-component of the geodesic equation and the condition for the geodesic to be null, we have
\ba
r''&=&\Omega_{IJ}r\left(x^I\right)'\left(x^J\right)'+\Cr{0}\left|\left(x^I\right)'\right|^2+\Cr{-2}r'^2,\label{r''}
\ea
where $\Omega_{IJ}$ is defined as 
\ba
\Omega_{IJ}:=\omega_{IJ}  - \frac12 \frac{\partial h^{(1)}_{IJ}}{\partial u} + \frac12 \frac{\partial m}{\partial u} \omega_{IJ}.
\ea
Here, the prime denotes the derivative with respect to the affine parameter.
We also define $\Omega\left(u, {x^I};dx^J/du\right)$ and $\Omega_i$ as 
\ba
\Omega\left(u, {x^I};\frac{dx^J}{du}\right)&:=&\left(\omega_{IJ} \frac{dx^I}{du}\frac{dx^J}{du}\right)^{-1}\Omega_{KL} \left(u, {x^M}\right) \frac{dx^K}{du} \frac{dx^L}{du},\label{def_Omega}\\
\Omega_i&:=&\inf_{u, {x^I}, dx^J/du}\Omega\left(u, {x^I};\frac{dx^J}{du}\right).\label{defOi}
\ea
In Refs.~\cite{Amo:2021gcn,Amo:2022tcg,Amo:2023qws},
the careful analysis of the global behavior of null geodesics with Eq.~\eqref{r''} gives us a sufficient condition to reach future null infinity.
In particular, the result for $\Omega_i>0$, which we use later, is given by the following statement:

\begin{proposition}
  \label{PropIV}
Consider a four-dimensional asymptotically flat spacetime with $\Omega_i>0$ in which the metric near future null infinity is written as Eqs.~\eqref{metricuu}--\eqref{metricuI} with the Bondi coordinates by $C^{2-}$ functions.
We define $\beta_{\rm crit}$ as
\ba
  \beta_{\rm crit}:=
  \frac{-3+\sqrt{9-6\Omega_i}}{3}.
\ea 
Take a point $p$ with a sufficiently large radial coordinate value $r=r_0$.
Any null geodesic emanating from $p$ reaches future null infinity if 
\ba
0<\left(\left.\frac{dr}{du}\right|_p-\beta_{\rm crit}\right)^{-1}&=&\Crs{}\label{con1}
\ea
holds.
\end{proposition}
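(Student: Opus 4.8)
The plan is to reduce the escape question to a single ordinary differential inequality for the areal radius $r$ along the geodesic, and then to run a comparison (barrier) argument in the $(r,\dot r)$ phase plane, where $\dot r:=dr/du$. Near future null infinity, parametrize the null geodesic by the Bondi retarded time $u$ and write an overdot for $d/du$. Combining \eqref{r''} with the null condition and with the $u$-component of the geodesic equation, and using the latter to eliminate the affine-parameter factors, one obtains an equation of the schematic form $\ddot r=(\dot r+\Omega)(1+2\dot r)/r+\Cr{-2}(1+\dot r^2)$, where $\Omega=\Omega(u,x^I;dx^J/du)$ is the quantity \eqref{def_Omega}; here the factor $1+2\dot r=\omega_{IJ}r^2\dot x^I\dot x^J+\Cr{-1}$ comes from solving the null condition, which in particular forces $\dot r\ge-1/2$ everywhere, and the $\Cr{-2}$ remainder collects the mass and gravitational-wave corrections. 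Using $\Omega\ge\Omega_i>0$ from \eqref{defOi} turns this into the closed differential inequality $\ddot r\gtrsim(\dot r+\Omega_i)(1+2\dot r)/r$, valid as long as $r$ stays large.

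Next I would analyze the motion in the $(r,\dot r)$ plane. For $\dot r>-\Omega_i$ (which, since $\dot r\ge-1/2$, is automatic when $\Omega_i>1/2$) the source term is positive, so $\dot r$ increases monotonically along the geodesic. If $\dot r|_p\ge0$ then $r$ is nondecreasing and $\dot r$ keeps growing, which forces $r\to\infty$ at finite $u$, so the geodesic reaches future null infinity. If $-\Omega_i<\dot r|_p<0$ then $r$ first decreases; rewriting the inequality as one for $d\dot r/dr=\ddot r/\dot r$ (with $\dot r<0$ on the dip, which flips the inequality) and integrating from $r_0$ down to the turning radius $r_{\min}$ at which $\dot r=0$ yields $\ln(r_{\min}/r_0)\ge\int_{\dot r|_p}^{0}b\,db/[(b+\Omega_i)(1+2b)]$, a finite bound provided $\dot r|_p$ is kept a fixed distance above $-\Omega_i$. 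Hence $r_{\min}$ stays comparable to $r_0$, the geodesic never leaves the asymptotic region, turns around, and by the previous case reaches future null infinity.

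The constant $\beta_{\rm crit}$ and the finite-$r_0$ hypothesis enter through the remainder. Since the dip drives $r$ below $r_0$, the $\Cr{-2}$ term must be dominated by the leading source term along the whole dip, not just at $p$; controlling this at $r\sim r_{\min}$ requires the dip to be shallow, which forces $\dot r|_p$ above a threshold strictly larger than $-\Omega_i$. Carrying the subleading bookkeeping through selects exactly $\beta_{\rm crit}$ --- equivalently, the larger root of $3\beta_{\rm crit}^2+6\beta_{\rm crit}+2\Omega_i=0$ --- as that threshold, and the size of $r_0$ required to keep the remainder subdominant, measured against the margin $\dot r|_p-\beta_{\rm crit}$, is precisely the smallness condition \eqref{con1}. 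Once \eqref{con1} holds, the comparison of the previous paragraph is rigorous and the conclusion follows.

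The main obstacle is making this argument uniform: one compares against a model ODE along an \emph{a priori unknown} geodesic whose angular data enter only through the single scalar bound $\Omega\ge\Omega_i$, so the estimate on the depth of the dip and the smallness of the $\Cr{-2}$ remainder must hold uniformly over all admissible angular velocities. It is this uniformity, rather than the easy leading-order heuristic (which would only suggest the weaker condition $\dot r|_p>-\Omega_i$), that pins down the sharp constant $\beta_{\rm crit}$ and the precise form of \eqref{con1}; establishing it is the technical content of Refs.~\cite{Amo:2021gcn,Amo:2022tcg,Amo:2023qws}.
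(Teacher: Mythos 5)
First, note that the paper itself does not prove Prop.~\ref{PropIV}: it imports the statement from Refs.~\cite{Amo:2021gcn,Amo:2022tcg,Amo:2023qws}, so the relevant comparison is with the derivation there, which your sketch resembles only in broad outline (reduce the escape problem to a second-order equation for $r(u)$ near future null infinity and run a comparison argument). Your leading-order reduction $\ddot r=(\dot r+\Omega)(1+2\dot r)/r+\Cr{-2}(\dots)$ is indeed correct (one can check it is exact in Minkowski space with $\Omega=1$), and the phase-plane discussion of the ``dip'' is sensible as far as it goes.

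The genuine gap is that the entire quantitative content of the proposition --- the specific constant $\beta_{\rm crit}=(-3+\sqrt{9-6\Omega_i})/3$ and the precise form of the hypothesis \eqref{con1} --- is never derived: you assert that ``carrying the subleading bookkeeping through selects exactly $\beta_{\rm crit}$'' and then explicitly defer the uniformity argument to the very references being summarized, which is precisely the part a proof must supply. Moreover, the mechanism you invoke cannot produce this constant. Relative to the leading $\mathcal{O}(1/r)$ source term, the $\Cr{-2}$ remainders along a dip whose depth is bounded by a fixed factor of $r_0$ can shift the admissible initial slope only by $\mathcal{O}(1/r_0)$, whereas $\beta_{\rm crit}$ sits a finite, $r_0$-independent distance above the fixed points $-\Omega_i$ and $-1/2$ of your model ODE (for $\Omega_i\approx 1$, $\beta_{\rm crit}\approx-0.423$ versus $-1/2$); so ``remainder bookkeeping'' on top of your sharp leading equation would, if completed, yield a different (less restrictive) threshold, not $\beta_{\rm crit}$. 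In the cited works the constant arises instead from a cruder but uniform lower bound on $d^2r/du^2$ --- effectively a quadratic in $dr/du$ whose larger root is $\beta_{\rm crit}$ --- chosen so that the comparison runs uniformly in the unknown angular data, in $u'$, and through the turning point, with the margin allowed to shrink as in \eqref{con1}. Since you establish neither your stronger threshold nor the stated one (uniformity over angular velocities, control of the affine-parameter/$u$ conversion, the behavior near $\dot r=0$, and the shrinking-margin case are all left as assertions), the attempt does not constitute a proof of the proposition; it is a plausible strategy with the decisive quantitative steps missing and with the origin of $\beta_{\rm crit}$ misattributed.
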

Here, Eq.~\eqref{con1} roughly means $dr/du|_p\gtrsim\beta_{\rm crit}$.
In the case with $\partial m/\partial u$ and $\partial h^{(1)}_{IJ}/\partial u$ being small enough, which corresponds to the situation that matter radiations and gravitational waves are both weak enough near future null infinity, the value of $\beta_{\rm crit}$ is approximately given by $\beta_{\rm crit}\approx-(1-1/\sqrt{3})\approx-0.423$.
By this proposition, we see that photons emitted with $dr/du\ge0$ and ones with $dr/du=\Crz{-1}$ reach future null infinity under the assumptions in the proposition.
For higher dimensions, see Refs.~\cite{Amo:2021gcn,Amo:2022tcg}.

%
%
%
\section{Outer Dark Horizon and Inner Dark Horizon}

\label{Sec:DH}

In this section, we introduce two new concepts, ``outer dark horizon (ODH)'' and ``inner dark horizon (IDH),'' as  generalizations of the photon sphere in general asymptotically flat spacetimes. 
The definitions of the ODH and IDH refer to future null infinity as the event horizon does.

This section is organized as follows.
In subsection~\ref{Subsec:basicsDH}, we present the definitions of the ODH and IDH, and discuss their properties. 
In subsection~\ref{Subsec:twocones}, the relation of the ODH and IDH to the escape and capture cone is discussed.
In subsections~\ref{Subsec:PropertyODH} and \ref{Subsec:PropertyIDH}, we examine the properties of the ODH and IDH, respectively.

\subsection{Basics of Outer Dark Horizon and Inner Dark Horizon}
\label{Subsec:basicsDH}

In this subsection, we introduce the outer dark horizon (ODH) and the inner dark horizon (IDH) as generalizations of the photon sphere by examining whether
photons emitted from photon sources reach future null infinity or not. 
The ODH and IDH are defined as the boundary of ``outer dark domain (ODD)'' and ``inner dark domain (IDD)'', respectively, which are also defined below.
 
Let ${\mathcal M}$ be a four-dimensional asymptotically flat and connected spacetime, and $T^\mu$ be a unit timelike vector field with $T_\mu T^\mu = -1$~\footnote{The reason why we normalized $T^\mu$ is that only the direction of $T^\mu$, but not its norm, is important in Defs.~\ref{ODD} and \ref{IDD}.
Note that the norm of $n^\mu$ in the following deﬁnitions is also not important.}. 

\begin{definition}[{\bf outer dark domain} ({\bf ODD}) {\bf associated with $T^\mu$}]
  \label{ODD}
  Suppose a set $S_O$ consists of all $p\in \mathcal{M}$ satisfying the following condition: for all spacelike vectors $n^\mu$ orthogonal to $T^\mu$ at $p$, there exists a null geodesic emanating from $p$, whose tangent vector $k^\mu$ is orthogonal to $n^\mu$ at $p$, such that it will not reach future null infinity.
  Then, we call $S_O$
  the outer dark domain (ODD) associated with $T^\mu$.
\end{definition}

\begin{definition}[{\bf outer dark horizon} ({\bf ODH}) {\bf associated with $T^\mu$}]
  \label{ODH}
  Let a region $S_O$ be the ODD associated with $T^\mu$ in ${\mathcal M}$. Then, the boundary $\partial S_O$ is called the outer dark horizon (ODH) associated with $T^\mu$.
\end{definition}

\begin{definition}[{\bf inner dark domain} ({\bf IDD}) {\bf associated with $T^\mu$}]
  \label{IDD}
  Suppose a set $S_I$ consists of all $p\in \mathcal{M}$ satisfying the following condition: there exists a spacelike vector $n^\mu$ orthogonal to $T^\mu$ at $p$, such that all null geodesics emanating from $p$ whose
  tangent vector $k^\mu$ is orthogonal to $n^\mu$ at $p$ will not reach future null infinity $\mathscr{I}^+$.
    Then, we call $S_I$
    the inner dark domain (IDD) associated with $T^\mu$.
\end{definition}

\begin{definition}[{\bf inner dark horizon} ({\bf IDH}) {\bf associated with $T^\mu$}]
  \label{IDH}
  Let a region $S_I$ be the IDD associated with $T^\mu$ in ${\mathcal M}$. 
  Then, the boundary $\partial S_I$ is called the inner dark horizon (IDH) associated with $T^\mu$.
\end{definition}

The outer dark domain and the inner dark domain are sometimes simply called the dark domains in short.
Similarly, the outer dark horizon and the inner dark horizon are simply called the dark horizons.
The ODD and IDD are complementary concepts to each other in the sense that the roles of ``there exists'' and ``for all'' are switched in the definitions.
We required the vector $n^\mu$ to be orthogonal to $T^\mu$ in the definitions of \ref{ODD} and \ref{IDD} in order to consider the escape condition of photons in the frame associated with $T^\mu$.
As a result, the locations of dark domains and dark horizons depend on $T^\mu$.
We take $T^\mu$ as tangent vectors of the light source worldlines. 
The escape condition of the emitted photon depends on the motion of the sources due to the effect of the relativistic beaming. 
Here, it is also possible to consider situations where $T^\mu$ is defined only in a subregion of the spacetime and define the dark domain and dark horizon in this subregion, but we mainly focus on situations where $T^\mu$ is defined globally in the whole spacetime.

In addition, we can consider a particular class of the outer/inner dark domain/horizon.
Suppose that a time coordinate $t$ is present, and a spacelike
hypersurface $\Sigma_t$ is given by $t=\mathrm{constant}$.
From the time coordinate $t$,
we can naturally introduce a timelike unit vector field by
\begin{equation}
T^\mu \ =\ \frac{-\nabla^\mu t}{\sqrt{-\nabla^\nu t\nabla_\nu t}}.
\end{equation}
Adopting this vector field as $T^\mu$ in the definitions
of the ODD and IDD, 
we can introduce outer/inner dark domain/horizon associated with
the time slice $\Sigma_t$ as follows:\footnote{
  This is a particular case where $T^\mu$ satisfies
$T_{[\mu}\nabla_{\nu}T_{\rho]}=0$ by Frobenius's theorem.} 

\begin{definition}[{\bf outer/inner dark domain/horizon associated with $\Sigma_t$}]
  \label{DD_DH_associated_with_time_slice}
  Suppose a time coordinate $t$ is present, and the spacelike hypersurface
  $t=$constant is denoted as $\Sigma_t$. 
  The ODD/IDD and the ODH/IDH associated with
  the timelike vector field $T^\mu=-\nabla^\mu t/\sqrt{-\nabla^\nu t\nabla_\nu t}$
  are called the ODD/IDD and the ODH/IDH associated with time slice $\Sigma_t$.
\end{definition}

For the outer/inner dark domain/horizon associated with $\Sigma_t$,
the vector $n^\mu$ in the definitions of \ref{ODD} and \ref{IDD}
is tangent to $\Sigma_t$. 
In Sec.~\ref{Sec:stsp}, we will see that the ODH and IDH are generalizations of the photon sphere.
We will also see that they are both absent in the Minkowski spacetime and both exist in a spacetime with black hole(s) under certain conditions. These facts imply that the existence of the ODH/IDH is expected to be an indicator for strong gravity.
The words ``outer'' and ``inner'' refer to the fact
that the IDD is generally included by the ODD, which
will be confirmed in Cor.~\ref{IDD-is-included-in-ODD}.

Since the geodesic equations are second-order differential equations, the initial position and its first derivative provide us with the unique solution~\footnote{This is because we have assumed that the metric is $C^{2-}$ functions ({\it i.e.}, class $C^{1,1}$).}.
Therefore, at any point in the spacetimes, the emission angle of a photon
(in the frame of the photon source) 
determines whether it reaches future null infinity or not.

\subsection{Relation to escape and capture cones}
\label{Subsec:twocones}
The escape cone is a useful concept to describe the condition for photons
to reach future null infinity.
In order to explain this concept, 
we introduce the tetrad basis, $T^\mu$ and $(e_i)^\mu$ with $i=1,2$, and $3$,
and the projection tensor $\gamma_{\mu\nu}=g_{\mu\nu}+T_{\mu}T_{\nu}$
onto the three-dimensional spacelike tangent subspace
spanned by $(e_i)^\mu$.
Then, the emission direction
in the frame of the photon source is expressed by the tangent null vector $k^\mu$ of a photon
\begin{equation}
  E^\mu \, =\, \frac{k_\perp^\mu}{|k_\perp^\nu|}, \quad \textrm{with}
  \quad k_\perp^\mu\, =\, {\gamma^\mu}_{\nu}k^\nu.
  \label{emission-direction-vector}
\end{equation}
Since $E^\mu$ is a unit vector, there is the apparent
one-to-one correspondence between $E^\mu$
and a point of a unit two-sphere in the tangent subspace, which
we call the ``two-sphere of emission directions.'' 
For each point on the two-sphere,
we can judge whether the corresponding photon
escapes to future null infinity or not.
As a result, the two-sphere
is divided into the escape region and the capture region.
If we connect the central point of the two-sphere and the points
on the boundary of these two regions with straight lines,
the escape and capture cones appear.
Below, we use the words the ``escape cone'' and the ``capture cone''
with the same meanings as the
escape region and the capture region on the two-sphere
of emission directions, respectively.

Null geodesics that correspond to the boundary of the escape and
capture cones coincide with the future wandering null geodesics
defined by Siino \cite{Siino:2019vxh,Siino:2021kep}.
Since the wandering null geodesics never arrive at future
null infinity, the boundary is regarded to belong to the capture cone. Therefore, the capture cone is a
closed set, while the escape cone is an open set.

\begin{figure}[htbp]
  \begin{center}
  \includegraphics[width=3.2cm]{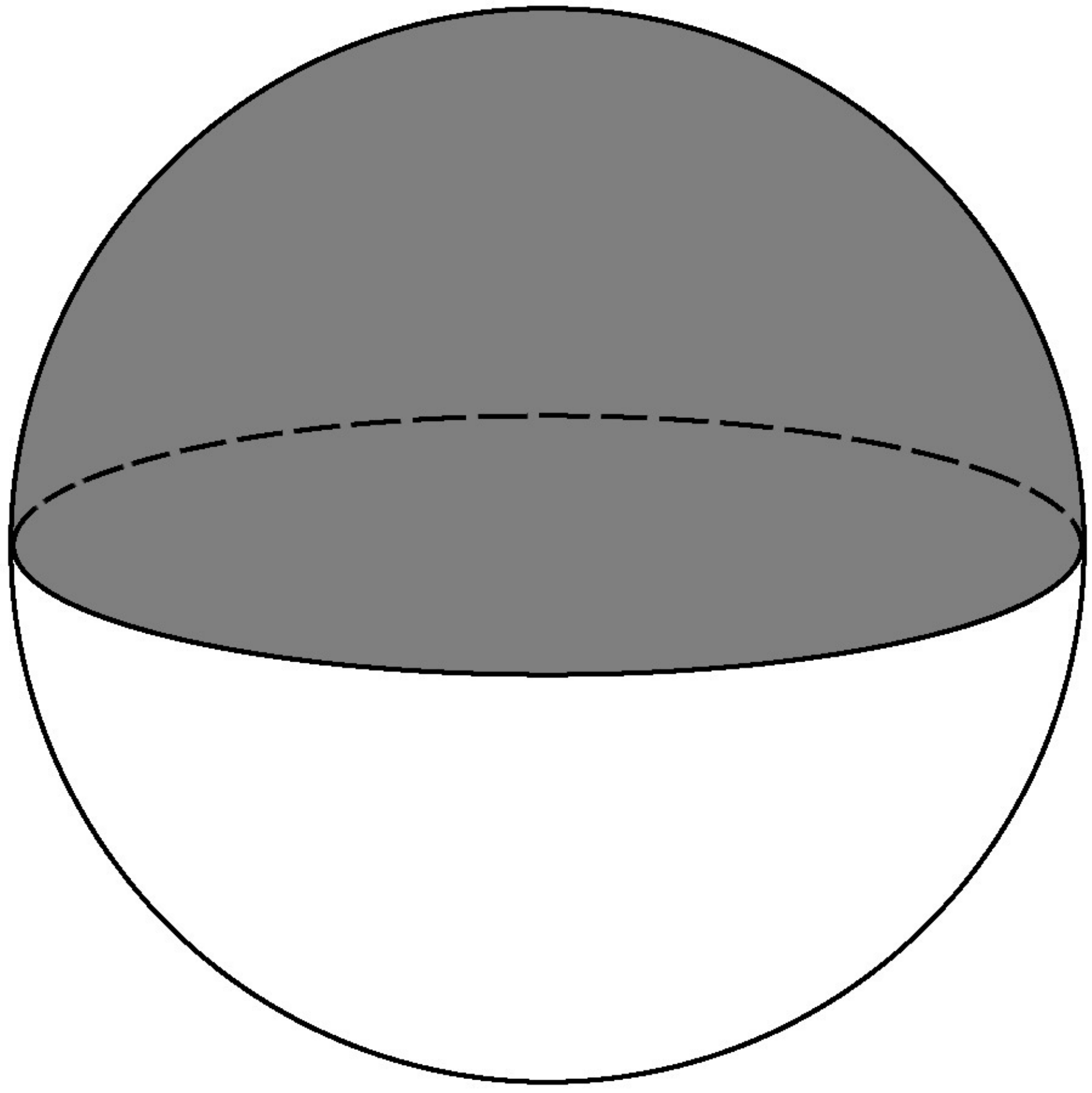}
  \caption{Representation of the two-sphere of emission directions at $r=3M$ in the Schwarzschild spacetime, at which the photon sphere is located. 
    The escape region 
    and the capture region are shown by
    the grey region and the white region, respectively. The boundary of the two regions is given by an orthodrome.
  }
  \label{Sch_escape}
  \end{center}
  \end{figure}

In a static and spherically symmetric spacetime, the boundary of an escape region associated with a constant-time slice is given by an exact circle. In particular, on the photon sphere in the Schwarzschild spacetime, the boundary of the escape region is given by an orthodrome of the two-sphere (see Fig.~\ref{Sch_escape}).
One might think that it would be reasonable to define a generalization of the photon sphere as a set of points at which the boundary of the escape region is an orthodrome associated with a constant-time slice, at least for stationary spacetimes.
However, in general spacetimes, the boundary of the escape region is not necessarily an exact circle.
This leads us to the physical interpretation
for our new concepts as described below.

  To characterize emission directions $E^\mu$ orthogonal to a spacelike vector $n^\mu$ in (negations of) Defs.~\ref{ODD} and \ref{IDD}, the notion of  orthodrome is useful.
  In the tangent subspace orthogonal to a unit timelike vector field $T^\mu$, an orthodrome is defined as an intersection between a plane containing the origin of the tangent subspace and the two-sphere of emission directions (see circles on the two-spheres in Fig.~\ref{fig:DH}).
  Note that an orthodrome orthogonal to $n^\mu$ represents a set of emission directions $E^\mu$  orthogonal to $n^\mu$.

  The conditions for the ODD and IDD associated with $T^\mu$
  are now rephrased using 
  the escape and capture cones as follows. 
  On the one hand, the negation of Def.~\ref{ODD} is equivalent
  to that 
  at the spacetime point $p$ outside the ODD,
  there exists an orthodrome that is entirely included in the
  escape cone.
  On the other hand, Def.~\ref{IDD} is equivalent
  to that at a spacetime point $p$ in an IDD,
  there exists an orthodrome that is entirely included in the
  capture cone. 
  Note that each of the escape and capture cones can have
  multiple components.
  For example, at a point between two distant black holes,
  there would be two capture cones and one ``escape belt''.
  In the case that each of the escape and capture cones 
  has a single component, a hemisphere is included in the
  capture cone at a point in the IDD,
  and a hemisphere is included in the escape cone at
  a point outside the ODD.

  Figure~\ref{fig:DH} illustrates the situation
  where each of the escape and capture cones has a single component.
  Spacetime points can be divided into three classes. 
  In the first class (the left sphere in Fig.~\ref{fig:DH}),
  there exists an orthodrome which is totally included by the capture cone,
  and the corresponding point in the spacetime is in the IDD.
  This point is also in the ODD because
  the escape cone is included in a hemisphere and it cannot
  include an orthodrome 
  (a strict proof
  will be given in Cor.~\ref{IDD-is-included-in-ODD}).
  In the second class (the center sphere in Fig.~\ref{fig:DH}),
  there is no orthodrome that is entirely
  included by the escape region or by the capture region. 
  The corresponding points in the spacetime are in the ODD, but not in the IDD. 
  In the last class (the right sphere in Fig.~\ref{fig:DH}), there exists a hemisphere which is totally included in the escape region. In this case,
  the corresponding point is not in the ODD as mentioned above.
  In addition, since the capture cone is included
  in a hemisphere, 
  this point is not in the IDD
  (see a strict proof for Cor.~\ref{IDD-is-included-in-ODD}).

  In short, 
    any point in the ODD has a small escape cone, and any point not in the ODD has a large escape cone. 
    As for the IDD, 
    any point in the IDD has a large capture cone, and any point not in the IDD has a small capture cone. 
   There is no point such that it is in the ODD and not in the IDD, which will be proved more strictly in Cor.~\ref{IDD-is-included-in-ODD}.  

   \vspace{1.5mm}
   \begin{figure}[htbp]
         \begin{center}
           \includegraphics[width=12cm]{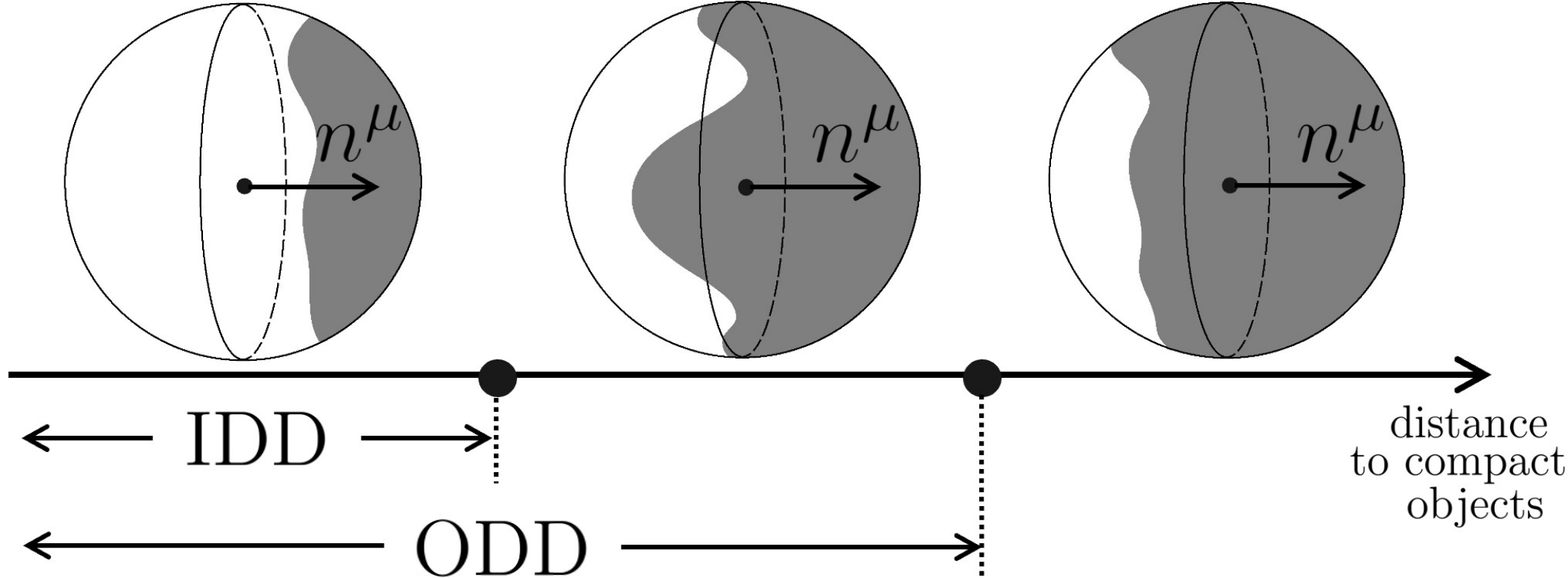}
             \end{center}
             \caption{Depiction of two-spheres of emission directions, which represent the escape regions (the grey regions) and the capture regions (the white regions) on three points in a spacetime without symmetry. A vector orthogonal to the plane including the orthodrome corresponds to the spacelike vector $n^\mu$ in (negations of) Defs.~\ref{ODD} and \ref{IDD}.  {\it Left sphere:} An illustration of the escape and capture regions at a certain point. In this case, this point is both in the ODD and in the IDD.  {\it Center sphere:} A picture of the escape and capture regions at another point. In other words, all orthodromes intersect the boundary of the two regions. In this case, the point is in the ODD but not in the IDD. {\it Right sphere:} An illustration of the escape and capture regions at another point. In this case, the point is not in the ODD nor IDD. 
             }
   \label{fig:DH}
   \end{figure}

   We now present several useful propositions
   in order to judge whether a given spacetime point
   is in an ODD/IDD or not.
   For this purpose, we prove the following Lemma:
  \begin{lemma}
    \label{unit-sphiere-antipodes}
    Consider a unit two-sphere and a pair of antipodal points
    $P$ and $Q$ on that sphere. Let $\gamma$ be any continuous curve
    that connects $P$ and $Q$, and $\gamma$ includes the
    two endpoints $P$ and $Q$. Then,
    any orthodrome on that sphere intersects $\gamma$.
  \end{lemma}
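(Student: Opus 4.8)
The plan is to prove this by a one-dimensional intermediate value argument. Regard the unit two-sphere as sitting inside the Euclidean tangent subspace orthogonal to $T^\mu$ (equipped with the inner product induced by $\gamma_{\mu\nu}$), and recall that by definition an orthodrome is the intersection of this two-sphere with a \emph{plane through the origin}. Fix such an orthodrome and let $n$ be a unit normal to its plane, so that the orthodrome is exactly the set of unit vectors $x$ with $\langle x,n\rangle=0$, while $\langle x,n\rangle>0$ and $\langle x,n\rangle<0$ describe the two open hemispheres it bounds.

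First I would dispose of the trivial case: if $P$ or $Q$ already lies on the orthodrome we are done, since $\gamma$ contains its endpoints. So we may assume $\langle P,n\rangle\neq 0$. The crucial observation is that $P$ and $Q$ are \emph{antipodal}, i.e. $Q=-P$; since the orthodrome's plane passes through the origin, this forces $\langle Q,n\rangle=-\langle P,n\rangle$, so the two numbers $\langle P,n\rangle$ and $\langle Q,n\rangle$ have opposite signs. (This is precisely the place where the hypothesis that the circle is an orthodrome, rather than an arbitrary small circle, is used — for a small circle the sign flip fails and the statement is false.)

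Next, parametrize $\gamma$ by a continuous map $c:[0,1]\to S^2$ with $c(0)=P$, $c(1)=Q$, and introduce the continuous real-valued function $f(s)=\langle c(s),n\rangle$. Then $f(0)$ and $f(1)$ are nonzero and of opposite sign, so by the intermediate value theorem there exists $s_0\in(0,1)$ with $f(s_0)=0$. The point $c(s_0)$ then lies simultaneously on $\gamma$ and on the orthodrome, which is the desired intersection. Since the orthodrome was arbitrary, every orthodrome meets $\gamma$.

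I do not expect any real obstacle in this argument; it is essentially elementary point-set topology. The only point meriting care in the write-up is making explicit the equivalence between "orthodrome" and "great circle" (plane through the origin), since that is exactly what makes the antipodal symmetry $\langle Q,n\rangle=-\langle P,n\rangle$ work. An equivalent phrasing, if preferred, is purely topological: the complement of an orthodrome in the sphere is a disjoint union of two open hemispheres that are interchanged by the antipodal map, so a connected set containing both $P$ and $Q=-P$ cannot be contained in that complement.
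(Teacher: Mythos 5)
Your proof is correct and follows essentially the same route as the paper: a trivial case when an endpoint lies on the orthodrome, plus a separation argument using the fact that antipodal points lie on opposite sides of the plane through the origin. The only difference is cosmetic — you make the paper's informal ``the curve must cross the boundary'' step rigorous via the intermediate value theorem applied to $f(s)=\langle c(s),n\rangle$, which is a welcome sharpening of the same idea.
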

  \begin{proof}
    In the case that 
    the orthodrome includes the points $P$ and $Q$,
    it trivially intersects $\gamma$.
    In the case that the orthodrome does not include $P$ and $Q$,
    the orthodrome divides the two-sphere into two regions, and the points
    $P$ and $Q$ are included in different regions because they are antipodal points. Since the curve $\gamma$
    continuously connects $P$ and $Q$, it must cross the boundary
    of the two regions. That is, the orthodrome intersects $\gamma$.
  \end{proof}  
  Suppose that a pair of antipodal points are included in a connected component of the capture cone at some spacetime point.
   Then, 
   there exists a curve that connects the two points
   in the capture cone. From the above Lem.~\ref{unit-sphiere-antipodes},
   any orthodrome crosses that curve,
   and hence, it cannot be included entirely in the escape cone.
   This means that such a spacetime point is in an ODD. For a spacetime point at which each of the escape and capture cones is path-connected, we easily see that the reverse is also true and obtain the following:
\begin{theorem}
  \label{antipodal-points-capture-cone-ODD}
  If a pair of antipodal points are included in a connected component of the capture cone, the point in the spacetime is in an ODD.
  In particular, suppose that each of the escape and capture cones is path-connected. Then, such a spacetime point is in an ODD if and only if the capture cone contains a pair of antipodal points.
  \end{theorem}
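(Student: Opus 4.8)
The plan is to translate both assertions into statements about the escape and capture cones on the two-sphere of emission directions at the point $p$, and to exploit Lem.~\ref{unit-sphiere-antipodes} together with elementary convexity in the ambient tangent space $\mathbb{R}^3$.

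For the first assertion, suppose a pair of antipodal points $P,Q$ lies in a single connected component $\mathcal{C}_0$ of the capture cone. Since $\mathcal{C}_0$ is path-connected, choose a continuous curve $\gamma\subseteq\mathcal{C}_0$ joining $P$ to $Q$ and containing both endpoints. By Lem.~\ref{unit-sphiere-antipodes}, every orthodrome of the two-sphere of emission directions meets $\gamma$, hence meets the capture cone, so no orthodrome can lie entirely in the escape cone; by the rephrasing of the negation of Def.~\ref{ODD} recorded above, this says exactly that $p$ lies in the ODD. Since any antipodal pair lies in some connected component, this already gives the ``if'' part of the second assertion.

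For the converse direction of the second assertion, assume each cone is path-connected and $p$ lies in the ODD; we must exhibit an antipodal pair in the capture cone $\mathcal{C}$. I would argue by contraposition: supposing $\mathcal{C}$ contains no antipodal pair, I will produce an orthodrome lying entirely in the escape cone, contradicting the assumption that $p$ lies in the ODD. Because $\mathcal{C}$ is connected, an orthodrome $g$ lies entirely in the escape cone if and only if the plane through the origin spanned by $g$ misses $\mathcal{C}$, i.e.\ $\mathcal{C}$ lies in one of the two open half-spaces bounded by that plane. Hence ``some orthodrome lies in the escape cone'' is equivalent to ``$\mathcal{C}$ is contained in an open hemisphere'', which, since $\mathcal{C}$ is compact, is equivalent to $0\notin\mathrm{conv}(\mathcal{C})$ by the separating-hyperplane theorem. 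So the converse reduces to the single implication: if $0\in\mathrm{conv}(\mathcal{C})$ and both cones are path-connected, then $\mathcal{C}$ contains a diameter of the sphere.

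I expect this last implication to be the main obstacle. By Carathéodory's theorem $0$ is a positive convex combination of at most four points of $\mathcal{C}$, and a two-point combination is already an antipodal pair, so the content is to collapse a three- or four-point witness down to a two-point one. My plan would be to slide the witnessing points continuously within $\mathcal{C}$ --- routing a point through the escape cone when it must cross it --- and to track the sign of $x\mapsto x\cdot\nu$ along these paths as $\nu$ runs over the great circle of admissible poles, an intermediate-value / Borsuk--Ulam-type argument, until the witness degenerates onto a pair of antipodes. The equivalences down to ``$0\in\mathrm{conv}(\mathcal{C})$'' are routine, but promoting ``the capture cone surrounds the origin'' to ``the capture cone contains a genuine diameter'' is delicate: path-connectedness of $\mathcal{C}$ alone does not force it (a connected cone can surround the origin while meeting some great circle only in isolated, non-antipodal points), so the argument must genuinely use that the escape cone is path-connected as well, i.e.\ that $\mathcal{C}$ does not separate the two-sphere. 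Making that topological input precise is where the real work lies.
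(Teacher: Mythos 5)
Your treatment of the first assertion, and hence of the ``if'' half of the second, is correct and is essentially the paper's own argument: connect the antipodal pair by a curve inside the connected component of the capture cone, invoke Lem.~\ref{unit-sphiere-antipodes} to conclude that every orthodrome meets that curve, and then use the rephrasing of (the negation of) Def.~\ref{ODD} to conclude that the point is in the ODD.

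The ``only if'' half, however, is not proved in your proposal, and this is a genuine gap rather than a routine omission. Your reduction is fine as far as it goes: for a path-connected capture cone $\mathcal{C}$, the existence of an orthodrome inside the escape cone is equivalent to $\mathcal{C}$ lying in an open hemisphere, i.e.\ to $0\notin\mathrm{conv}(\mathcal{C})$. But the remaining implication you defer --- ``$0\in\mathrm{conv}(\mathcal{C})$ together with path-connectedness of both cones forces an antipodal pair in $\mathcal{C}$'' --- is false at this level of generality, so the sliding/Borsuk--Ulam strategy you sketch cannot close it without further input. Concretely, let $\mathcal{C}$ be a slightly thickened ``tripod'': the union of three arcs from the north pole of the two-sphere of emission directions down to three equatorial points $v_1,v_2,v_3$ spaced $120^\circ$ apart, fattened by a small $\delta$. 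This set is closed and path-connected, its complement is open and path-connected, and for small $\delta$ it contains no antipodal pair; yet $v_1+v_2+v_3=0$ shows it cannot lie in any open hemisphere, so $0\in\mathrm{conv}(\mathcal{C})$ and every orthodrome meets $\mathcal{C}$, i.e.\ the corresponding point would be in the ODD. Thus the converse cannot follow from path-connectedness of the two cones alone; one needs additional structural properties of capture cones realized in actual spacetimes (or a stronger hypothesis on the shape of $\mathcal{C}$, such as a convexity-type condition on its boundary). For what it is worth, the paper itself gives no argument for this direction --- it proves only the first assertion (by the same lemma you use) and dismisses the reverse with ``we easily see'' --- so your instinct that the real work lies exactly there is sound, but as it stands your proposal, like the paper's text, leaves that direction unestablished.
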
  

Similarly, when a pair of antipodal points are included in a connected component of the escape cone, 
   there exists a curve that connects the two points
   in the escape cone. Since any orthodrome crosses that curve,
   it cannot be entirely included in the capture cone.
   This means that such a spacetime point is not in an IDD. For a point at which each of the escape and capture cones is path-connected, we see that the reverse also holds.
\begin{theorem}
  \label{antipodal-points-capture-cone-nonIDD}
  If a pair of antipodal points are included in a connected component of the
escape cone, such a spacetime point is not in an IDD.
  In particular, suppose that each of the escape and capture cones is path-connected. Then, a point is not in an IDD if and only if the escape cone contains a pair of antipodal points.
\end{theorem}

At a spacetime point $p$ in an IDD,
an orthodrome is included in the capture cone.
This immediately implies the existence of an infinite number of pairs
of antipodal points in a connected component of the capture cone.
Then, Thm.~\ref{antipodal-points-capture-cone-ODD}
implies that the point $p$ is also in an ODD, and hence,
we have the following:
\begin{corollary}
  \label{IDD-is-included-in-ODD}
  An IDD is always included in an ODD,
    as long as they are associated with the same $T^\mu$.
\end{corollary}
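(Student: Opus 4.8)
The plan is to prove the inclusion pointwise: I will show that every $p$ lying in the IDD associated with $T^\mu$ also lies in the ODD associated with the same $T^\mu$, by exhibiting at $p$ a pair of antipodal points of the two-sphere of emission directions that sit in a single connected component of the capture cone, and then invoking Thm.~\ref{antipodal-points-capture-cone-ODD}.

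First I would fix $p$ in the IDD. By Def.~\ref{IDD} there is a spacelike vector $n^\mu$ orthogonal to $T^\mu$ at $p$ such that every null geodesic emanating from $p$ whose tangent $k^\mu$ is orthogonal to $n^\mu$ fails to reach $\mathscr{I}^+$. Passing to the two-sphere of emission directions at $p$ built from the tetrad $(T^\mu,(e_i)^\mu)$ as in subsection~\ref{Subsec:twocones}, the emission directions $E^\mu$ orthogonal to $n^\mu$ form precisely the orthodrome cut out by the plane through the origin orthogonal to $n^\mu$; the IDD hypothesis therefore says that this orthodrome is contained in the capture cone.

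Next I would observe that an orthodrome is a great circle, hence a connected subset of the two-sphere, so its image lies inside one connected component of the capture cone; moreover it contains (infinitely many) pairs of antipodal points of the two-sphere, so pick any one such pair $P,Q$. Then $P$ and $Q$ are antipodal points belonging to a single connected component of the capture cone at $p$, and the first assertion of Thm.~\ref{antipodal-points-capture-cone-ODD} gives $p\in$ ODD. Since $p$ was arbitrary, the IDD is contained in the ODD. The hypothesis that both be associated with the same $T^\mu$ enters only to guarantee that the two-sphere of emission directions, and hence the escape and capture cones, are literally the same objects in the two definitions.

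I do not anticipate a real obstacle: the only steps requiring a line of care are the identification of ``$E^\mu$ orthogonal to $n^\mu$'' with ``points on the orthodrome orthogonal to $n^\mu$'', which is exactly the dictionary set up in subsection~\ref{Subsec:twocones}, and the remark that a great circle, being connected, cannot be split across two components of the capture cone. Everything else is an immediate appeal to Thm.~\ref{antipodal-points-capture-cone-ODD}.
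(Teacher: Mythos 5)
Your argument is correct and is essentially the paper's own proof: a point of the IDD admits an orthodrome inside the capture cone, the orthodrome (being connected) supplies a pair of antipodal points in a single connected component of the capture cone, and Thm.~\ref{antipodal-points-capture-cone-ODD} then places the point in the ODD. The only difference is that you make explicit the dictionary between Def.~\ref{IDD} and the orthodrome, and the connectedness step, which the paper leaves implicit.
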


We now present the useful criteria to find
an ODH and an IDH. 
Assuming the continuity of the mappings from each spacetime point to a configuration of escape and capture cones, 
we can find the following: 
\begin{proposition}
  \label{IDH-ODH-conditions}
  Suppose that each of the escape and capture cones has a single component
at each spacetime point, and there are continuous mappings from each spacetime point to a configuration of escape and capture cones.
  Then, at the ODH, there is an orthodrome 
  that is included in the escape cone except
  at two or more points, which are on the boundary of the escape and capture cones.
  Similarly, at the IDH, there is an orthodrome
  that is entirely included in the capture cone, and furthermore,  
  the orthodrome intersects the boundary of the escape and capture cones at two or more points.
\end{proposition}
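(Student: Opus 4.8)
The plan is to treat the orthodromes at a point $q$ as points of one fixed compact space: fixing a smooth local frame near $p$ identifies all the tangent subspaces orthogonal to $T^\mu$ with a single $\mathbb{R}^3$, hence all the ``two‑spheres of emission directions'' with one $S^2$, on which the orthodromes form a copy of the real projective plane $\mathbb{RP}^2$; one then uses compactness of $\mathbb{RP}^2$ together with the assumed continuity of the cone maps $q\mapsto E_q,\,C_q\subseteq S^2$ to pass to limits. The two facts used throughout are: at each point $E_q$ is open and $C_q=S^2\setminus E_q$ closed, so $\overline{E_q}\cap C_q=\partial E_q=\partial C_q$; and, by the characterizations of the ODD and IDD in terms of orthodromes following Defs.~\ref{ODD}--\ref{IDD}, a point $q$ is in the ODD iff every orthodrome meets $C_q$ and in the IDD iff some orthodrome is contained in $C_q$ (equivalently $q\notin\mathrm{ODD}$ iff some orthodrome lies in $E_q$, and $q\notin\mathrm{IDD}$ iff every orthodrome meets $E_q$).

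First I would produce the orthodrome and locate its contact with the boundary, for the ODH. Fix $p$ on the ODH, i.e.\ on $\partial(\mathrm{ODD})$, and approach it from both sides. From the exterior: take $p_m\to p$ with $p_m\notin\mathrm{ODD}$, so each $p_m$ carries an orthodrome $\gamma_m\subseteq E_{p_m}$; extracting a convergent subsequence $\gamma_m\to\gamma$ in $\mathbb{RP}^2$, continuity of the escape cone forces $\gamma\subseteq\overline{E_p}$. From the interior: take $q_m\to p$ with $q_m\in\mathrm{ODD}$; then the particular orthodrome $\gamma$ meets $C_{q_m}$, at a point $z_m$, and extracting $z_m\to z$ gives $z\in\gamma\cap C_p$ by upper semicontinuity and closedness of $C$. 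Hence $\varnothing\neq\gamma\cap C_p\subseteq\overline{E_p}\cap C_p=\partial C_p$: the orthodrome $\gamma$ lies in the escape cone away from the common boundary. (In passing this shows $p\in\mathrm{ODD}$, so the ODD is closed; the IDD is closed for the dual reason.) The IDH is dual: fixing $p$ on $\partial(\mathrm{IDD})$ and approaching from the interior yields an orthodrome $\gamma_0$ which is a limit of orthodromes contained in the capture cones, so $\gamma_0\subseteq C_p$; approaching from the exterior, every orthodrome at a nearby point meets the escape cone, so $\gamma_0$ meets $E_{p_m}$, and a limit of such points lies in $\gamma_0\cap\overline{E_p}\subseteq C_p\cap\overline{E_p}=\partial C_p$, so $\gamma_0$ is contained in the capture cone and meets $\partial C_p$.

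It remains to upgrade ``meets $\partial C_p$'' to ``meets $\partial C_p$ in at least two points'', and this is the part that genuinely uses the single‑component hypothesis; I expect it to be the main obstacle. The idea is that an isolated contact point can be perturbed away. In the ODH case, suppose $\gamma\cap\partial C_p=\{x\}$, so $\gamma\setminus\{x\}\subseteq E_p$ and in particular the antipode $-x\in\gamma$ is interior to the escape cone. Rotate $\gamma$ about the diameter through the two points of $\gamma$ at distance $\pi/2$ from $x$ (both of which, being $\neq x$, lie in $E_p$): by compactness the rotated great circle stays in the open set $E_p$ away from small balls about $x$ and $-x$, and near $-x$ it stays in $E_p$ because $-x$ is interior to $E_p$, so the only obstruction is near $x$. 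If $C_p$ accumulates at $x$ from only one of the two open hemispheres cut out by $\gamma$, rotating toward the other hemisphere places the perturbed orthodrome entirely inside $E_p$, forcing $p\notin\mathrm{ODD}$ — a contradiction. The remaining case is when $C_p$ touches $x$ from both sides of $\gamma$; excluding this is the heart of the matter, and here I would argue that such a ``two‑sided pinch'' at an isolated contact point is incompatible with $p$ lying on the \emph{boundary} of the ODD — using that the capture cones at the nearby points $p_m\notin\mathrm{ODD}$ are themselves single‑component and that the ``room'' around $\gamma_m$ inside $E_{p_m}$ can neither shrink to zero nor stay bounded below once $C_p$ has interior on both sides of $\gamma$ near $x$. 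The IDH statement follows by the same scheme with ``escape''$\leftrightarrow$``capture'' and ``open''$\leftrightarrow$``closed'' interchanged: if $\gamma_0\cap\partial C_p=\{x\}$, then either $E_p$ approaches $x$ from one side of $\gamma_0$ only, in which case $\gamma_0$ can be pushed into $\mathrm{int}(C_p)$, putting a neighbourhood of $p$ into the IDD and contradicting $p\in\partial(\mathrm{IDD})$, or $E_p$ pinches $x$ from both sides, which is ruled out by connectedness of $E_p$ together with the boundary property in the same way.
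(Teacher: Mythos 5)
Your limiting construction is sound and, as far as it goes, substantiates what the paper only asserts (the paper states Prop.~\ref{IDH-ODH-conditions} without a detailed proof, appealing to continuity): identifying the orthodromes with a compact family, approaching a point $p$ of the ODH from outside the ODD gives a limit orthodrome $\gamma\subseteq\overline{E_p}$, approaching from inside gives $\gamma\cap C_p\neq\varnothing$, hence $\gamma\cap C_p\subseteq\partial C_p$; the dual argument works at the IDH. However, the step you yourself flag as ``the heart of the matter'' --- excluding a single point of contact when the complementary cone touches $\gamma$ from both sides --- is left as a gesture, not a proof: the sentence about the ``room'' around $\gamma_m$ inside $E_{p_m}$ being unable to ``shrink to zero nor stay bounded below'' does not identify any contradiction, and connectedness of $C_p$ alone cannot rule out a two-sided pinch (a connected closed set can touch a great circle at one point with lobes on both sides). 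Since the count ``two or more'' is precisely the content that distinguishes the proposition from the trivial statement that the limiting orthodrome meets the boundary, this is a genuine gap.

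The gap is closable with the ingredients you already have, but by different arguments on the two sides. For the IDH the exclusion is immediate at $p$ itself: $\gamma_0\subseteq C_p$, so the open connected set $E_p$ cannot meet $\gamma_0$ and therefore lies in one open hemisphere determined by $\gamma_0$; the opposite open hemisphere is then contained in $\mathrm{int}\,C_p$, and if $\gamma_0\cap\partial C_p$ were a single point $x$ a small rotation of $\gamma_0$ about the axis through the points at distance $\pi/2$ from $x$ places an orthodrome entirely in $\mathrm{int}\,C_p$, so by continuity a neighbourhood of $p$ lies in the IDD, contradicting $p\in\partial(\mathrm{IDD})$. For the ODH you must instead use the nearby exterior points: if $C_p$ had points strictly on both sides of $\gamma$, lower semicontinuity of the capture cone gives points of $C_{p_m}$ on opposite sides of $\gamma_m$ for large $m$, and connectedness of $C_{p_m}$ forces $C_{p_m}\cap\gamma_m\neq\varnothing$, contradicting $\gamma_m\subseteq E_{p_m}$; hence $C_p$ lies on one closed side of $\gamma$ and your one-sided rotation argument applies. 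Separately, your parenthetical ``this shows $p\in\mathrm{ODD}$, so the ODD is closed'' does not follow from what precedes it: exhibiting one orthodrome meeting $C_p$ is not the ODD condition. Closedness of the ODD is true, but it requires running the interior-limit argument for every orthodrome at $p$, not just the limit orthodrome $\gamma$.
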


There are several remarks on this
Proposition. First, the continuity of the escape and capture cones is guaranteed if the spacetime metric has $C^{2-}$ continuity,\footnote{An example of the spacetime without the continuity of the escape and capture cones is given by a singular hypersurface spacetime in the sense of Israel's definition \cite{Israel:1966}. Note that the ODH and IDH can be defined if the escape cone is uniquely determined, even if the metric is not $C^{2-}$.} and the timelike vector field $T^\mu$ is continuous. 
Here, other than the ODH or the IDH, there might be a surface
having the same property for the ODH/IDH
mentioned in Prop.~\ref{IDH-ODH-conditions}. However, the outermost surface
among them 
can be identified as the ODH/IDH.

\subsection{Properties of Outer Dark Domain and Outer Dark Horizon}
\label{Subsec:PropertyODH}
In this subsection, we provide several properties on the ODD and ODH.
First, since all null geodesics in the Minkowski spacetime reach future null infinity, we have the following proposition. 
\begin{proposition}
  \label{OLM}
    In the Minkowski spacetime, the ODD and ODH are both absent for any associated timelike vector field $T^\mu$.
  \end{proposition}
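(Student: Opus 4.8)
The plan is to show that in the Minkowski spacetime the ODD is empty, so that $\partial(\mathrm{ODD})=\mathrm{ODH}$ is empty as well. Since the ODH is by Definition~\ref{ODH} merely the boundary of the ODD, it suffices to establish that no point $p$ lies in the ODD for any choice of unit timelike vector field $T^\mu$. Recalling the negated form of Definition~\ref{ODD}, a point $p$ fails to be in the ODD precisely when there exists a spacelike vector $n^\mu$ orthogonal to $T^\mu$ at $p$ such that every null geodesic emanating from $p$ with tangent $k^\mu$ orthogonal to $n^\mu$ reaches future null infinity. So the whole task reduces to exhibiting, at an arbitrary point $p$ and for an arbitrary $T^\mu$, one such good direction $n^\mu$.

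The key step is the observation that in Minkowski spacetime \emph{every} null geodesic reaches future null infinity (they are straight null lines and none of them is trapped), a fact already noted in the sentence preceding the proposition. Given that, the choice of $n^\mu$ is actually immaterial: pick any spacelike unit vector $n^\mu$ at $p$ with $n_\mu T^\mu = 0$ (such vectors exist since the orthogonal complement of a timelike direction is a three-dimensional spacelike subspace). Then every null geodesic from $p$ whose tangent is orthogonal to $n^\mu$ at $p$ — indeed every null geodesic from $p$ whatsoever — reaches $\mathscr{I}^+$. Hence the condition in the negation of Definition~\ref{ODD} is satisfied, so $p\notin\mathrm{ODD}$. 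Since $p$ and $T^\mu$ were arbitrary, the ODD is empty, and therefore so is its boundary, the ODH.

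In terms of the escape/capture-cone language of Subsection~\ref{Subsec:twocones}, this is just the statement that in Minkowski the capture cone is empty and the escape cone is the entire two-sphere of emission directions at every point, so trivially there is an orthodrome (in fact any orthodrome) entirely contained in the escape cone; by the rephrasing of Definition~\ref{ODD} given there, the point is outside the ODD. One could alternatively invoke Theorem~\ref{antipodal-points-capture-cone-ODD}: the capture cone, being empty, contains no pair of antipodal points, so the point is not in an ODD.

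There is essentially no obstacle here; the only thing to be careful about is the logical direction — one must verify the \emph{negation} of Definition~\ref{ODD}, which requires producing an $n^\mu$ rather than checking all of them, and this is exactly where the emptiness of the capture cone makes the argument trivial. The proposition is really a one-line consequence of ``all null geodesics in Minkowski reach $\mathscr{I}^+$'' together with the fact that a vanishing set has vanishing boundary.
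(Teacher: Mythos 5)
Your proof is correct and follows essentially the same route as the paper, which justifies the proposition by the single observation that all null geodesics in Minkowski spacetime reach future null infinity, making the negation of Definition~\ref{ODD} hold at every point for any $T^\mu$ and hence rendering the ODD (and therefore its boundary, the ODH) empty. Your added care about the logical direction of the negation and the remark via Theorem~\ref{antipodal-points-capture-cone-ODD} are consistent elaborations of the same argument.
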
  
  Proposition~\ref{OLM} means that the curved spacetime is essential for the existence of the ODH.

  Next, we focus on the spacetimes with black hole(s).
  For any point $p$ in a black hole, all null geodesics emanating from $p$ do not reach future null infinity by the definition of the black hole. Therefore, $p$ is an element of the ODD.
  This is summarized as the following theorem:
  \begin{theorem}
    \label{OLB}
    In an asymptotically flat spacetime with black hole(s), points in the black hole(s) are also in the ODD for every associated timelike vector field.
  \end{theorem}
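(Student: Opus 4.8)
The plan is to unpack Definition~\ref{ODD} directly and combine it with the defining property of the black hole region. Recall that in an asymptotically flat spacetime the black hole region consists of the points from which no future-directed causal curve reaches future null infinity $\mathscr{I}^+$; in particular, no future-directed null geodesic starting at such a point reaches $\mathscr{I}^+$. Hence, if $p$ lies in the black hole, \emph{every} null geodesic emanating from $p$ fails to reach $\mathscr{I}^+$, and the only nontrivial thing to check for membership of $p$ in the ODD is the existence clause in Definition~\ref{ODD}: given an arbitrary spacelike vector $n^\mu$ orthogonal to $T^\mu$ at $p$, there is at least one null geodesic emanating from $p$ whose tangent is orthogonal to $n^\mu$ at $p$.

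The key step is therefore a pointwise linear-algebra observation in $T_p\mathcal{M}$. Let $n^\perp:=\{v^\mu\in T_p\mathcal{M}:n_\mu v^\mu=0\}$. Since $n^\mu$ is spacelike, $n_\mu n^\mu>0$, so $n^\mu\notin n^\perp$ and the tangent space splits orthogonally as $T_p\mathcal{M}=\mathrm{span}(n^\mu)\oplus n^\perp$; consequently $n^\perp$ is a three-dimensional subspace of Lorentzian signature $(-,+,+)$. Because $T^\mu$ is orthogonal to $n^\mu$ it lies in $n^\perp$, so $n^\perp$ contains a timelike direction, and a three-dimensional Lorentzian vector space always contains null vectors — for instance $k^\mu=T^\mu+e^\mu$, where $e^\mu$ is any unit spacelike vector in $n^\perp$ orthogonal to $T^\mu$. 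Fixing such a null $k^\mu$ and using existence and uniqueness of geodesics for a $C^{1,1}$ metric, there is a null geodesic $\gamma$ emanating from $p$ with initial tangent $k^\mu$; by construction $k_\mu n^\mu=0$, and since $p$ is in the black hole, $\gamma$ does not reach $\mathscr{I}^+$. As $n^\mu$ was arbitrary, $p$ satisfies the condition of Definition~\ref{ODD}, i.e.\ $p$ belongs to the ODD.

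Finally I would note that nothing in the argument used any property of $T^\mu$ beyond its being a unit timelike vector at $p$, so the conclusion holds for every associated timelike vector field, which is the assertion of the theorem. I do not expect a genuine obstacle here: the result is essentially a one-line consequence of the definition of a black hole together with the remark that the orthogonal complement of a spacelike vector always contains null directions. The only point that deserves an explicit sentence is precisely that linear-algebra fact, which guarantees that the ``there exists a null geodesic orthogonal to $n^\mu$'' clause of Definition~\ref{ODD} can always be satisfied; the rest (no null geodesic escapes from inside a black hole) is immediate.
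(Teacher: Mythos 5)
Your proof is correct and follows essentially the same route as the paper, which simply observes that every null geodesic emanating from a point inside the black hole fails to reach $\mathscr{I}^+$, so the existential clause of Def.~\ref{ODD} is automatically satisfied. The only difference is that you explicitly verify the linear-algebra fact that the orthogonal complement of any spacelike $n^\mu$ contains a null direction (e.g.\ $T^\mu+e^\mu$), a step the paper leaves implicit; this is a harmless and welcome addition, not a different argument.
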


  Let us discuss the existence of the ODH in spacetimes with black hole(s) for a particular class of $T^\mu$. To see this, we first show the following lemma.

  \begin{lemma}
    \label{OLP}
    Consider a four-dimensional asymptotically flat spacetime in which the metric near future null infinity behaves as Eqs.~\eqref{metricuu}--\eqref{metricuI} with $C^{2-}$ functions using Bondi coordinates.
    Let a timelike vector $T^\mu$ satisfy $T^\mu \propto (du+dr)^\mu$. We assume that $\Omega_i$, which is defined in Eq.~\eqref{defOi}, is positive. 
    Then, all points $p$ with a sufficiently large $r$ are not in the ODD associated with $T^\mu$. 
  \end{lemma}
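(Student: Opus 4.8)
The plan is to show that at any point $p$ with sufficiently large areal radius $r=r_0$, the escape cone in the frame associated with $T^\mu \propto d(u+r)^\mu$ contains a pair of antipodal points, so that by Thm.~\ref{antipodal-points-capture-cone-ODD} the point $p$ is not in the ODD. More precisely, I would exhibit an entire orthodrome of the two-sphere of emission directions at $p$ that lies in the escape cone; by the rephrasing of Def.~\ref{ODD} in terms of escape cones given in Sec.~\ref{Subsec:twocones}, this is exactly the statement that $p$ is not in the ODD.

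First I would compute, for a null geodesic emanating from $p$ with tangent $k^\mu$, the quantity $dr/du|_p$ in terms of the emission direction $E^\mu$ in the rest frame of $T^\mu$. Since $T^\mu \propto d(u+r)^\mu$, the condition $\gamma^\mu{}_\nu k^\nu \perp T_\mu$ is automatic, and $dr/du = k^r/k^u$. Using the Bondi-gauge metric expansions~\eqref{metricuu}--\eqref{metricuI} and their inverses~\eqref{g^ur}--\eqref{g^IJ}, I would show that, to leading order in $1/r_0$, the emission directions split into a ``forward'' hemisphere (those with a positive component along increasing $r$) for which $dr/du|_p \ge 0 + \Crz{-1}$, and a ``backward'' hemisphere for which $dr/du|_p$ can be as negative as roughly $-1$. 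The key observation is that the \emph{equatorial} orthodrome — the set of emission directions purely tangential in the $(u+r)=\mathrm{const}$ sense, i.e.\ those whose spatial part is orthogonal to $\partial_r$ in the chosen frame — has $dr/du|_p = \Crz{-1}$, hence satisfies $(dr/du|_p - \beta_{\rm crit})^{-1} = \Crz{}$ for every point on that orthodrome, since $\beta_{\rm crit}<0$ is bounded away from zero when $\Omega_i>0$.

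Then I would invoke Prop.~\ref{PropIV}: for $r_0$ large enough, every null geodesic from $p$ whose emission direction lies on this equatorial orthodrome reaches future null infinity, because Eq.~\eqref{con1} holds uniformly along the orthodrome (the orthodrome is compact, so the $\Crz{-1}$ bound on $dr/du|_p$ is uniform in the direction). Hence this entire orthodrome is contained in the escape cone. By the discussion following Lem.~\ref{unit-sphiere-antipodes} — an orthodrome contains pairs of antipodal points, and in fact the equivalence in Sec.~\ref{Subsec:twocones} directly says that if an orthodrome is entirely in the escape cone then $p\notin\text{ODD}$ — we conclude $p$ is not in the ODD associated with $T^\mu$.

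The main obstacle I expect is the careful asymptotic bookkeeping in the second step: I must verify that the frame decomposition induced by $T^\mu\propto d(u+r)^\mu$ identifies a genuine orthodrome (a great circle, i.e.\ the intersection of the two-sphere with a plane through the origin of the tangent subspace) with the locus $dr/du|_p = \Crz{-1}$, and that the subleading metric corrections (the $h^{(1)}_{IJ}r$ term in $g_{IJ}$, the $mr^{-1}$ terms, etc.) do not spoil the uniform bound needed to apply Prop.~\ref{PropIV} — this is where the positivity of $\Omega_i$ and the resulting gap $\beta_{\rm crit}<0$ are essential, and where the detailed computations of Refs.~\cite{Amo:2021gcn,Amo:2021rxr,Amo:2022tcg,Amo:2023qws} would be cited. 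A secondary point to check is that ``sufficiently large $r$'' can be chosen uniformly, i.e.\ there is a single $r_*$ beyond which the argument works for all $p$ with $r(p)>r_*$; this follows because the estimates in Prop.~\ref{PropIV} depend only on the asymptotic data, not on the angular position of $p$.
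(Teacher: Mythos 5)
Your proposal is correct and follows essentially the same route as the paper: the paper's proof simply takes $n^\mu=(\partial_u)^\mu-(\partial_r)^\mu$ (your ``equatorial orthodrome'' of tangential emission directions), uses the orthogonality and null conditions to get $k^I=\Cr{-1}$ and hence $k^r=\Cr{-1}\,k^u$, and then invokes Prop.~\ref{PropIV} with $\Omega_i>0$ --- precisely the asymptotic bookkeeping you flag as the remaining obstacle. The only cosmetic differences are that the paper argues directly from the negation of Def.~\ref{ODD} rather than through the escape-cone/orthodrome rephrasing, and that your opening appeal to Thm.~\ref{antipodal-points-capture-cone-ODD} would require the path-connectedness hypothesis, whereas the orthodrome-in-escape-cone criterion you actually use needs no such assumption.
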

  \begin{proof}
     We will show that, for all point $p$ with sufficiently large $r$, there exists a tangent vector $n^\mu$ orthogonal to $T^\mu$, such that all null geodesics which are orthogonal to $n^\mu$ at $p$ reach future null infinity.
     Taking $n^\mu=n_1^\mu:=\left(\partial_u\right)^\mu-\left(\partial_r\right)^\mu$ to be an orthogonal vector to $T^\mu$, we will show that any null vector $k^\mu=\left(\partial_u\right)^\mu+k^r\left(\partial_r\right)^\mu+k^I\left(\partial_I\right)^\mu$~\footnote{$k^u$ is required to be non-zero, otherwise $k^\mu$ becomes spacelike.} orthogonal to $n_1^\mu$ satisfies $k^r=\Cr{-1}$.
  By using Eqs.~\eqref{metricuu}, \eqref{metricur}, and \eqref{metricuI}, the orthogonal condition between $n_1^\mu$ and $k^\mu$, that is, $n_1^\mu k_\mu =0$, gives
  \ba
  -k^r\left(1+\Cr{-2}\right)+k^I\Cr{0}+\Cr{-1}&=&0
  \ea 
  and we have
  \ba
  k^r&=&k^I\Cr{0}+\Cr{-1}.\label{orthogonal_n_k}
  \ea
  Then, a simple calculation gives us 
  \ba
  0&=&k^\mu k_\mu\nonumber\\
  &=&k^Ik^J\left(r^2\omega_{IJ}+\Cr{1}\right)+k^I\Cr{0}-1+\Cr{-1}
\ea
and we see 
\ba
k^I&=&\Cr{-1}.
\ea
With Eq.~\eqref{orthogonal_n_k}, this shows us 
\ba
k^r&=&\Cr{-1}.
\ea
As a consequence, we have  
$k^r=\Cr{-1}=\Cr{-1}k^u$.
By Prop.~\ref{PropIV}, null geodesics with $k^r=\Cr{-1}k^u$ at sufficiently large $r$ region reach future null infinity if $\Omega_i>0$ holds.
Thus, all null geodesics which are orthogonal to $n_1^\mu$ at $p$ reach future null infinity.
Therefore, $p$ is not in the ODD associated with $T^\mu$. 
\end{proof}

Here, Lem.~\ref{OLP} does not depend on whether there exist black hole(s) or not.
  We can also show a higher-dimensional version of Lem.~\ref{OLP} in a similar way. Note that in higher dimensions,  $\Omega_i>0$ is not necessary to show the null geodesics to reach future null infinity~\cite{Amo:2021gcn,Amo:2022tcg}.
  Since the same fact also holds for all other theorems, we will not mention higher-dimensional cases, hereinafter.  

  Now, let us show the existence of the ODH in spacetimes with black hole(s) under the same conditions with Lem.~\ref{OLP}.

\begin{proposition}
    \label{ODHE}
    Consider an asymptotically flat spacetime with black hole(s) under the assumptions in Lem.~\ref{OLP}. The ODH associated with $T^\mu$ is not empty, and not connected to future null infinity.
  \end{proposition}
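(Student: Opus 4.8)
The plan is to deduce the statement from Theorem~\ref{OLB} and Lemma~\ref{OLP} by a short point-set topology argument built on the connectedness of $\mathcal{M}$, with essentially no new analysis required.

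First I would establish that the ODD, call it $S$, is a nonempty proper subset of $\mathcal{M}$. Nonemptiness is immediate from Theorem~\ref{OLB}: the spacetime is assumed to contain a black hole, whose interior is a nonempty region, and every point of it lies in $S$. Properness comes from Lemma~\ref{OLP}: its hypotheses are exactly the standing assumptions, so there is a radius $R_0$ such that no point with Bondi areal coordinate $r>R_0$ belongs to $S$. Write $\mathcal{U}:=\{r>R_0\}$; this is an open subset of $\mathcal{M}$ (the Bondi chart being a diffeomorphism onto its image) and it forms a neighborhood of $\mathscr{I}^+$. Thus $\mathcal{U}\cap S=\emptyset$ and in particular $S\neq\mathcal{M}$.

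Next I would show $\partial S\neq\emptyset$. In the connected space $\mathcal{M}$, a nonempty proper subset cannot be clopen; concretely, $\overline{S}$ and $\overline{\mathcal{M}\setminus S}$ are nonempty closed sets whose union is $\mathcal{M}$, so they must intersect, and their intersection is exactly $\partial S$. Hence the ODH is nonempty. Finally, for the claim that the ODH is not connected to $\mathscr{I}^+$, I would use that $\mathcal{U}$ is open and disjoint from $S$: therefore $\mathcal{U}\subseteq\operatorname{int}(\mathcal{M}\setminus S)=\mathcal{M}\setminus\overline{S}$, so $\mathcal{U}$ is disjoint from $\overline{S}$ and a fortiori from $\partial S\subseteq\overline{S}$. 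Consequently the ODH is contained in $\{r\le R_0\}$; that is, there is a whole neighborhood of future null infinity free of ODH points, which is precisely what ``not connected to future null infinity'' means.

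The content of this proposition is really packaged inside Lemma~\ref{OLP} (and, through it, Proposition~\ref{PropIV}), so I do not expect genuine difficulty here. The only points I would be careful to spell out are: (i) that ``not in the ODD for all sufficiently large $r$'' must be upgraded, via openness of $\mathcal{U}$, to disjointness from the \emph{closure} $\overline{S}$, not merely from $S$; and (ii) a precise reading of ``not connected to future null infinity'' — I would state it as the existence of a neighborhood of $\mathscr{I}^+$ (such as $\mathcal{U}$) meeting the ODH nowhere, equivalently that the closure of the ODH in the conformal completion does not touch $\mathscr{I}^+$.
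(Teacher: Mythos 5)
Your argument is correct and follows essentially the same route as the paper's proof: nonemptiness of the ODD from Thm.~\ref{OLB}, exclusion of a neighborhood of $\mathscr{I}^+$ from the ODD via Lem.~\ref{OLP}, and then the boundary must be nonempty and disjoint from that neighborhood. The only difference is that you spell out the point-set details (connectedness of $\mathcal{M}$, openness of the far region, passage to the closure of the ODD) that the paper leaves implicit.
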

  
  \begin{proof}
    Let the region of the ODD be denoted by $S_O$.
    Then, $S_O$ is not empty by Thm.~\ref{OLB}. 
    Due to Lem.~\ref{OLP}, a region sufficiently close to future null infinity is a subset of $\mathcal{M}\backslash S_O$. 
  Therefore, the ODH is not empty by Def.~\ref{ODH}, and not connected to future null infinity.
  \end{proof}

  \noindent This proposition is supporting evidence for the ODH to serve as an indicator for a strong gravity region.

\subsection{Properties of Inner Dark Domain and Inner Dark Horizon}
\label{Subsec:PropertyIDH}

In this subsection, we provide several properties on the IDD and IDH.
In the same way to the proofs for Prop.~\ref{OLM} and Thm.~\ref{OLB}, we can easily show the following statements. 

\begin{proposition}
    \label{ILM}
    In the Minkowski spacetime, the IDD and IDH are both absent for any associated timelike vector field $T^\mu$.
  \end{proposition}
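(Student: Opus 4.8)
The plan is to exploit the defining property of the Minkowski spacetime that every null geodesic, from any point, reaches future null infinity; this single fact will contradict the clause that characterizes membership in the IDD and will finish the proof in a few lines, exactly parallel to the argument for Prop.~\ref{OLM} but with the quantifiers interchanged.

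Concretely, I would fix an arbitrary unit timelike vector field $T^\mu$ and an arbitrary point $p\in\mathcal{M}$, and aim to show that $p$ is not in the IDD $S$. By Def.~\ref{IDD}, it suffices to check that for \emph{every} spacelike vector $n^\mu$ orthogonal to $T^\mu$ at $p$, there is \emph{some} null geodesic from $p$ with tangent $k^\mu$ orthogonal to $n^\mu$ at $p$ that reaches $\mathscr{I}^+$. Given such an $n^\mu$, the null directions at $p$ orthogonal to $n^\mu$ form a nonempty set: in the frame of $T^\mu$ they correspond to the orthodrome of the two-sphere of emission directions perpendicular to $n^\mu$, so there is in fact a one-parameter family of them. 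Choosing any one such direction and letting $\gamma$ be the corresponding null geodesic, the Minkowski property guarantees that $\gamma$ reaches $\mathscr{I}^+$. Hence the universally quantified clause ``all null geodesics orthogonal to $n^\mu$ fail to reach $\mathscr{I}^+$'' is false for this---indeed for every---$n^\mu$, so no admissible $n^\mu$ exists and $p\notin S$. Since $p$ was arbitrary, $S=\emptyset$, and therefore the IDH, being $\partial S$ by Def.~\ref{IDH}, is empty as well.

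The only point that needs any care---and hence the closest thing to an obstacle---is the nonemptiness of the family of null geodesics through $p$ orthogonal to a prescribed spacelike $n^\mu$: were this family empty, the universally quantified clause in Def.~\ref{IDD} would hold vacuously and the argument would collapse. This is, however, just the elementary observation that the light cone in the tangent space at $p$ meets any hyperplane orthogonal to a spacelike vector in more than the origin, so it is immediate. Everything else is a direct transcription of the reasoning behind Prop.~\ref{OLM}, with ``there exists'' and ``for all'' swapped to reflect the complementarity of the ODD and IDD.
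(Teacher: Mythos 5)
Your argument is correct and matches the paper's own reasoning: the paper simply notes that every null geodesic in Minkowski reaches future null infinity and that Prop.~\ref{ILM} follows "in the same way" as Prop.~\ref{OLM}, which is exactly your quantifier-swapped transcription, including the (easy) nonemptiness of the orthodrome of null directions orthogonal to a given $n^\mu$. No gaps.
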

  \begin{theorem}
    \label{ILB}
  In an asymptotically flat spacetime with black hole(s), points in the black hole are elements of the IDD for any associated timelike vector field $T^\mu$.
  \end{theorem}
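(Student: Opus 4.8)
The plan is to follow exactly the same elementary route used for the ODD version, Theorem~\ref{OLB}, except that here the existential quantifier ``there exists a spacelike vector $n^\mu$'' in Definition~\ref{IDD} works \emph{in our favor} rather than against us. First I would recall that, by the very definition of a black hole as the complement of the causal past of future null infinity, no causal curve starting at a point $p$ inside a black hole can reach $\mathscr{I}^+$; in particular, \emph{every} null geodesic emanating from such a $p$ fails to reach future null infinity.

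Next I would note that since $T^\mu$ is a unit timelike vector and the metric has signature $(-,+,+,+)$, the orthogonal complement of $T^\mu$ in the tangent space at $p$ is a three-dimensional spacelike subspace; hence there certainly exists at least one spacelike vector $n^\mu$ orthogonal to $T^\mu$ at $p$. Fixing any such $n^\mu$, the collection of null geodesics emanating from $p$ whose tangent is orthogonal to $n^\mu$ at $p$ is a subcollection of the collection of \emph{all} null geodesics emanating from $p$, and by the previous step none of these reaches $\mathscr{I}^+$. Thus $p$ meets the defining condition of the IDD associated with $T^\mu$, so $p$ lies in the IDD; since $p$ and $T^\mu$ were arbitrary, this is the claim.

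I do not expect any real obstacle: the statement is an immediate consequence of the definition of a black hole combined with the fact that Definition~\ref{IDD} only demands the capture property for null geodesics orthogonal to \emph{some single} choice of $n^\mu$. The only point I would be careful to state explicitly is the existence of a spacelike vector orthogonal to $T^\mu$, which is guaranteed by the Lorentzian signature, so that the quantifier ``there exists $n^\mu$'' is non-vacuous.
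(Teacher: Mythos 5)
Your argument is correct and is essentially the paper's own: the definition of a black hole gives that every null geodesic from an interior point fails to reach $\mathscr{I}^+$, so the existential condition in Def.~\ref{IDD} is satisfied for any choice of $n^\mu$ orthogonal to $T^\mu$, exactly as the paper indicates by deferring to the proof of Thm.~\ref{OLB}. Your explicit remark that the Lorentzian signature guarantees a nonempty spacelike orthogonal complement of $T^\mu$ is a fine (if minor) addition.
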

  Proposition~\ref{ILM} means that curved spacetimes are essential for the existence of the IDD and IDH.
  Meanwhile, Thm.~\ref{ILB} guarantees the existence of the IDD in spacetimes with black hole(s).

  To show the existence of the IDH in spacetimes with black hole(s), the following lemma is essential, which can easily be verified by using Cor.~\ref{IDD-is-included-in-ODD} and Lem.~\ref{OLP}.
\begin{lemma}
    \label{ILP}
    Under the assumptions in Lem.~\ref{OLP}, all points $p$ with sufficiently large $r$ are not in the IDD associated with $T^\mu$. 
  \end{lemma}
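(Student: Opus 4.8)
The plan is to chain together two already-established facts: Lemma~\ref{OLP}, which says that at any point $p$ with sufficiently large $r$ there exists a spacelike $n^\mu$ (explicitly $n_1^\mu=(\partial_u)^\mu-(\partial_r)^\mu$) orthogonal to $T^\mu$ such that \emph{every} null geodesic from $p$ orthogonal to $n^\mu$ reaches $\mathscr{I}^+$; and Corollary~\ref{IDD-is-included-in-ODD}, which says $\mathrm{IDD}\subseteq\mathrm{ODD}$ whenever both are associated with the same $T^\mu$. First I would note that Lemma~\ref{OLP} is precisely the negation of membership in the ODD as given in Definition~\ref{ODD}: its conclusion ``$p$ is not in the ODD associated with $T^\mu$'' is exactly what we need. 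So, under the assumptions inherited from Lemma~\ref{OLP} (four-dimensional asymptotically flat, Bondi form \eqref{metricuu}--\eqref{metricuI} with $C^{2-}$ metric, $T^\mu\propto d(u+r)^\mu$, and $\Omega_i>0$), there is an $r$-threshold beyond which no point lies in the ODD.

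Next I would invoke Corollary~\ref{IDD-is-included-in-ODD}: since the IDD is contained in the ODD (same $T^\mu$), any point outside the ODD is automatically outside the IDD. Combining, for all $p$ with $r$ large enough, $p\notin\mathrm{ODD}\supseteq\mathrm{IDD}$, hence $p\notin\mathrm{IDD}$. That is the entire argument; the proof is a two-line deduction. One could equally give a direct argument without Corollary~\ref{IDD-is-included-in-ODD}: Lemma~\ref{OLP} exhibits an $n_1^\mu$ for which all orthogonal null geodesics escape, which means the orthodrome orthogonal to $n_1^\mu$ lies entirely in the escape cone; an escape cone containing a full orthodrome cannot leave room for any orthodrome lying entirely in the capture cone, so the condition in Definition~\ref{IDD} fails. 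But routing through the corollary is cleaner and is what the excerpt's phrasing (``which can easily be verified by using Cor.~\ref{IDD-is-included-in-ODD} and Lem.~\ref{OLP}'') signals.

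There is no real obstacle here: the statement is a corollary in all but name, and the only thing to be careful about is bookkeeping of hypotheses — making sure the reader understands that ``under the assumptions in Lem.~\ref{OLP}'' carries over the asymptotic Bondi expansion, the $C^{2-}$ regularity, the specific form $T^\mu\propto d(u+r)^\mu$, and the positivity $\Omega_i>0$, and that the same $T^\mu$ is used for both the ODD and IDD so that Corollary~\ref{IDD-is-included-in-ODD} applies verbatim. If I wanted to be maximally explicit I would also remark that ``sufficiently large $r$'' here can be taken to be the very same threshold produced in the proof of Lemma~\ref{OLP} via Proposition~\ref{PropIV}, so no new estimate is needed. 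The proof I would write is essentially: ``By Lem.~\ref{OLP}, all $p$ with sufficiently large $r$ are not in the ODD associated with $T^\mu$. By Cor.~\ref{IDD-is-included-in-ODD}, the IDD associated with $T^\mu$ is contained in the ODD associated with the same $T^\mu$, so such $p$ are not in the IDD either.''
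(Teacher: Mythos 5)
Your proposal is correct and matches the paper's own argument: the paper explicitly verifies Lem.~\ref{ILP} by combining Cor.~\ref{IDD-is-included-in-ODD} with Lem.~\ref{OLP}, exactly the two-line deduction you give. Nothing further is needed.
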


  One may expect that the IDH would serve as an indicator for the existence of a strong gravity region.
  This is indeed correct as the following proposition. 
  
\begin{proposition}
  \label{IDHE}
  Consider an asymptotically flat spacetime with black hole(s) under the assumptions in Lem.~\ref{OLP}. The IDH associated with $T^\mu$ is not empty, and not connected to future null infinity.
\end{proposition}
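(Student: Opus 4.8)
The plan is to mimic the proof of Prop.~\ref{ODHE} almost verbatim, replacing ``ODD/ODH'' by ``IDD/IDH'' and invoking the inner-dark-domain analogues of the two ingredients used there. Let $S$ denote the region constituting the IDD associated with $T^\mu$. First I would apply Thm.~\ref{ILB}: since the spacetime contains at least one black hole, every point inside a black hole belongs to the IDD, and hence $S\neq\emptyset$. Next, I would invoke Lem.~\ref{ILP} (which itself follows from Cor.~\ref{IDD-is-included-in-ODD} together with Lem.~\ref{OLP}): all points with sufficiently large $r$ fail to be in the IDD, so a region arbitrarily close to $\mathscr{I}^+$ lies in $\mathcal{M}\setminus S$. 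In particular $\mathcal{M}\setminus S$ is non-empty, so $S$ is a non-empty proper subset of $\mathcal{M}$.

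With these two facts in hand the conclusion is elementary topology on the connected manifold $\mathcal{M}$. A non-empty proper subset of a connected space cannot be both open and closed, so its boundary $\partial S$ is non-empty; by Def.~\ref{IDH} this boundary is exactly the IDH, which is therefore non-empty. Moreover, the neighborhood of $\mathscr{I}^+$ furnished by Lem.~\ref{ILP} lies entirely in the interior of $\mathcal{M}\setminus S$, so no point of $\partial S$ can lie in that neighborhood; this is the precise meaning of the IDH being ``not connected to future null infinity.''

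I do not expect a serious obstacle here, since the substance has been front-loaded into Thm.~\ref{ILB} and Lem.~\ref{ILP}. The only point requiring a little care is the phrasing of ``not connected to future null infinity'': it should be stated as the disjointness of $\partial S$ from the open neighborhood of $\mathscr{I}^+$ provided by Lem.~\ref{ILP}, rather than as a claim about a single connected component, since a priori $S$ and hence $\partial S$ may have several components (for instance in a multi-black-hole spacetime). Making that quantifier explicit, and observing that the ``sufficiently large $r$'' in Lem.~\ref{ILP} indeed carves out a genuine asymptotic neighborhood of $\mathscr{I}^+$, completes the argument.
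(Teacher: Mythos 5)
Your proposal is correct and follows essentially the same route as the paper: non-emptiness of the IDD from Thm.~\ref{ILB}, exclusion of the asymptotic region via Lem.~\ref{ILP}, and then the boundary argument (using connectedness of $\mathcal{M}$) to conclude the IDH is non-empty and disjoint from a neighborhood of future null infinity. The paper's proof merely states this more tersely, leaving the elementary topology implicit.
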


\begin{proof}
  The IDD $S_I$ is not empty by Thm.~\ref{ILB}. In addition, a region sufficiently close to future null infinity is a subset of $\mathcal{M}\backslash S_I$ due to Lem.~\ref{ILP}. Therefore, the IDH is not empty, and not connected to future null infinity by Def.~\ref{IDH}.
  \end{proof}

Finally, let us prove that the ODH coincides with the IDH in a spherically symmetric spacetime for the spherically symmetric unit timelike vector
field $T^\mu$:
\begin{theorem}
  \label{ODHsp}
  Consider an asymptotically flat, spherically symmetric spacetime.
  Let $x^I$ be the angular coordinates.
  We adopt the timelike vector field $T^\mu$ to be the spherically
  symmetric one that satisfies $T^\mu (dx^I)_\mu=0$.
Then, the ODH associated with $T^\mu$ coincides with the IDH associated with $T^\mu$.
\end{theorem}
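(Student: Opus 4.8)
The plan is to use spherical symmetry to collapse both the ODD and the IDD conditions down to a single scalar criterion at each point. Fix a point $p$ that is not the (possibly present, regular) center, let $r$ denote its area radius, and let $\hat r^\mu$ be the unit vector in the slice orthogonal to $T^\mu$ pointing in the direction of increasing $r$. The key structural fact is that the escape region $\mathcal E_p$ and the capture region $\mathcal C_p$ on the two-sphere of emission directions at $p$ are axisymmetric about $\hat r^\mu$. Indeed, the $SO(3)$ isometry group of the spacetime has stabilizer $SO(2)$ at $p$, namely the rotations about the axis through $p$ and the center; these fix $T^\mu$ (since $T^\mu$ is the spherically symmetric timelike field) and fix $\hat r^\mu$, while rotating the $2$-plane tangent to the orbit sphere. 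Because the property ``the null geodesic from $p$ with tangent $k^\mu$ reaches $\mathscr I^+$'' is preserved by isometries fixing $p$ (a spherical isometry maps $\mathscr I^+$ to itself), and because the emission-direction map $k^\mu\mapsto E^\mu$ of Eq.~\eqref{emission-direction-vector} commutes with these rotations (they preserve $T^\mu$ and $g_{\mu\nu}$, hence $\gamma_{\mu\nu}$), the sets $\mathcal E_p$ and $\mathcal C_p$ are unions of latitude circles. Equivalently, there is a set $A_p\subseteq[0,\pi]$ of polar angles measured from $\hat r^\mu$ such that a photon emitted in a direction at polar angle $\theta$ escapes if and only if $\theta\in A_p$. (At a regular center the full $SO(3)$ acts, so $\mathcal E_p$ is all or nothing; that case is trivial and does not affect boundaries.)

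Next I would record two elementary facts about orthodromes of this two-sphere. First, the equatorial circle $\{\theta=\pi/2\}$, being the intersection of the two-sphere with the plane through the origin orthogonal to $\hat r^\mu$, is itself an orthodrome, all of whose points have polar angle $\pi/2$. Second, any orthodrome meets this equatorial circle, since any two great circles on $S^2$ intersect; hence every orthodrome contains a point of polar angle $\pi/2$. Combining these with the axisymmetry of $\mathcal E_p$: an orthodrome is entirely contained in $\mathcal E_p$ if and only if its $\pi/2$-point lies in $\mathcal E_p$, i.e.\ if and only if $\pi/2\in A_p$ (the ``if'' direction being witnessed by the equatorial orthodrome itself). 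The same argument with $\mathcal C_p$ in place of $\mathcal E_p$ shows that some orthodrome is entirely contained in $\mathcal C_p$ if and only if $\pi/2\notin A_p$.

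Now I invoke the reformulations of Defs.~\ref{ODD} and \ref{IDD} established earlier in this section: $p\notin\mathrm{ODD}$ iff some orthodrome is entirely contained in the escape cone, and $p\in\mathrm{IDD}$ iff some orthodrome is entirely contained in the capture cone. By the previous paragraph these become $\pi/2\in A_p$ and $\pi/2\notin A_p$ respectively. Therefore $p\in\mathrm{ODD}\iff\pi/2\notin A_p\iff p\in\mathrm{IDD}$, so the ODD and IDD coincide as point sets (both equal the set of points at which the photon emitted orthogonally to the radial direction in the $T^\mu$ frame fails to escape; this refines Cor.~\ref{IDD-is-included-in-ODD} to an equality in the symmetric case). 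Taking boundaries, the ODH $=\partial(\mathrm{ODD})$ equals the IDH $=\partial(\mathrm{IDD})$, which is the assertion; note that if both domains are empty the statement holds vacuously, so no appeal to Lem.~\ref{OLP} is needed.

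The step requiring genuine care — the main obstacle — is the axisymmetry claim of the first paragraph: one must verify that the isometries fixing $p$ act on the two-sphere of emission directions exactly as rotations about $\hat r^\mu$, and that ``reaches $\mathscr I^+$'' is invariant under them, which is where one should be slightly careful about the action of spherical isometries near future null infinity and about excluding the center $r=0$. Everything after that is the elementary geometry of great circles together with the already-proved reformulations of the ODD/IDD conditions, and the coincidence ODH $=$ IDH is really just the observation that, by symmetry, the single equatorial emission direction decides both memberships at once.
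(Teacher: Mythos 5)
Your proposal is correct and rests on essentially the same idea as the paper's proof: spherical symmetry (the $SO(2)$ stabilizer of $p$ fixing $T^\mu$ and the radial direction) forces all emission directions orthogonal to the radial vector to share the same fate, so the escape status of the equatorial orthodrome decides membership in the ODD and the IDD simultaneously — exactly the paper's step of choosing $n^\mu$ with $n^I=0$ and invoking symmetry. The only cosmetic difference is that you obtain both inclusions at once from the latitude-decomposition/orthodrome picture (re-deriving the needed intersection fact from "two great circles meet"), whereas the paper proves ODD $\subseteq$ IDD this way and cites Cor.~\ref{IDD-is-included-in-ODD} for the reverse inclusion.
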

\begin{proof}
  If the ODD and IDD coincide with each other, the ODH and IDH also coincide with each other.
  Here, the IDD is a subset of the ODD by Cor.~\ref{IDD-is-included-in-ODD}. Therefore, the task here is to show that the ODD is a subset of the IDD.

  Let a point $p$ be in the ODD.
  By the definition of the ODD, for all spacelike vectors $n^\mu$ orthogonal to $T^\mu$, there exists a null geodesic which emanates from $p$,
  whose tangent vector $k^\mu$ is orthogonal to $n^\mu$ at $p$,
  such that it will not reach future null infinity.
  We now adopt $n^\mu$ as the one satisfying $n^I=0$.
  Due to the spherical symmetry of the spacetime,
  all null geodesics with the tangent vector $k^\mu$
  satisfying $k^\mu n_\mu=0$ emanating from $p$
  will not reach future null infinity. Then, $p$ satisfies
  the condition of being in the IDD.
\end{proof}

%
%
%
\section{Explicit example I: static and spherically symmetric spacetimes}
\label{Sec:stsp}

To understand properties of the dark horizon in a simple example as a first step, let us focus on a static and spherically symmetric spacetime with the metric;
\ba
ds^2&=&-f(r)dt^2+g(r)dr^2+r^2\left(d\theta^2+\sin^2\theta d\phi^2\right)\label{stsp}
\ea
with
$f(r)=1+\Cr{-1}$ and $g(r)=1+\Cr{-1}$.
To clarify the dependence of associating timelike unit vector $T^\mu$ on the position of the dark horizon, we divide our analyses into two parts.
In subsection~\ref{static_flow}, we assume $T^\mu$ to be proportional to timelike Killing vector, and explain the similarity between the dark horizon and the photon sphere.
In subsection~\ref{dynamical_flow}, we generalize $T^\mu$ to include the radial component, and see its dependence on the dark horizon. Subsections~\ref{static_flow} and \ref{dynamical_flow} assume static and dynamical light source flow setups, respectively.

\subsection{Static Flow}
\label{static_flow}

In this subsection, we take an associating timelike unit vector field $T^\mu$ 
to be $T^\mu=(\partial_t)^\mu/\sqrt{f(r)}$ in Defs.~\ref{ODD}--\ref{IDH}, and consider the ODH and IDH
associated with $t-$constant hypersurfaces. 
This setup corresponds to a simple model assuming that light sources are static, {\it i.e.}, fixed at their own constant radii.
For simplicity, we omit writing ``associated with $T^\mu$'' and ``associated
with $t-$constant hypersurfaces'' here.
Since the ODD and IDD coincide due to Thm.~\ref{ODHsp}, we will simply refer to the ODD and IDD as the dark domain, and the ODH and IDH as the dark horizon, respectively.
Along with the above $T^\mu$, a certain point is in the dark domain if and only if null geodesics passing that point with zero radial velocity do not reach future null infinity.

Let us review a condition for a null geodesic to reach future null infinity in a static and spherically symmetric spacetime~\cite{Ogasawara:2020frt,Carter:1968rr,Walker:1970un}. Due to these symmetries, the impact parameter $b:=-k_\phi/k_t$ is conserved along the geodesic. 
Without loss of generality, we can restrict our considerations on the equatorial
plane with $b\geq0$.
Then, the geodesic equation for $r$ is 
\ba
r^2r'^2=\frac{\left(r^2-fb^2\right)}{fg},\label{rgeo}
\ea
where the prime is the derivative with respect to the affine parameter.
Then, due to the non-negativity of the right-hand side of Eq.~\eqref{rgeo}, $b$ is restricted to $|b|\leq B(r):=r/\sqrt{f(r)}$.
Taking the derivative of Eq.~\eqref{rgeo} gives us
  \ba
    \left.r''\right|_{r'=0}=\frac{1}{r f^{1/2}(r)g(r)}\frac{dB}{dr},\label{r''stsp}
  \ea
  which tells us that the signs of $\left.r''\right|_{r'=0}$ and $dB/dr$ coincide. 
  In particular, $dB/dr=0$ is a necessary and sufficient condition for $\left.r''\right|_{r'=0}=0$, which determines the locus of the photon sphere. 
Since $b$ is conserved along a null geodesic, the behavior of a null geodesic is determined by a $b-$constant line, constrained with the potential barrier of $B(r)$ on the $(r,b)$-plane (see Fig.~\ref{Stasp}), similarly to the fundamental discussion on the effective potential in the classical mechanics. 
For a function $B(r)$ drawn in Fig.~\ref{Stasp}, for instance, the fate of a null geodesic emitted with $r'=0$ ({\it i.e.}, starting from a point on $b=B(r)$ in the figure) is determined as follows.
If starting from $r_{II}>r$ or $r_{\rm min}<r<r_I$ with $r'=0$, the radial coordinate of the null geodesic increases to infinity because there is no potential barrier outside of the initial point. If starting from the other region, the radial coordinate on the null geodesic has an upper bound, and hence that null geodesic does not reach future null infinity.
Then, the dark horizon is located at the boundary of these regions, that is $r=r_{\rm min}, r_{I}$ and $r_{II}$. 

\begin{figure}[htbp]
  \begin{center}
    \vspace{1mm}
  \includegraphics[width=10cm]{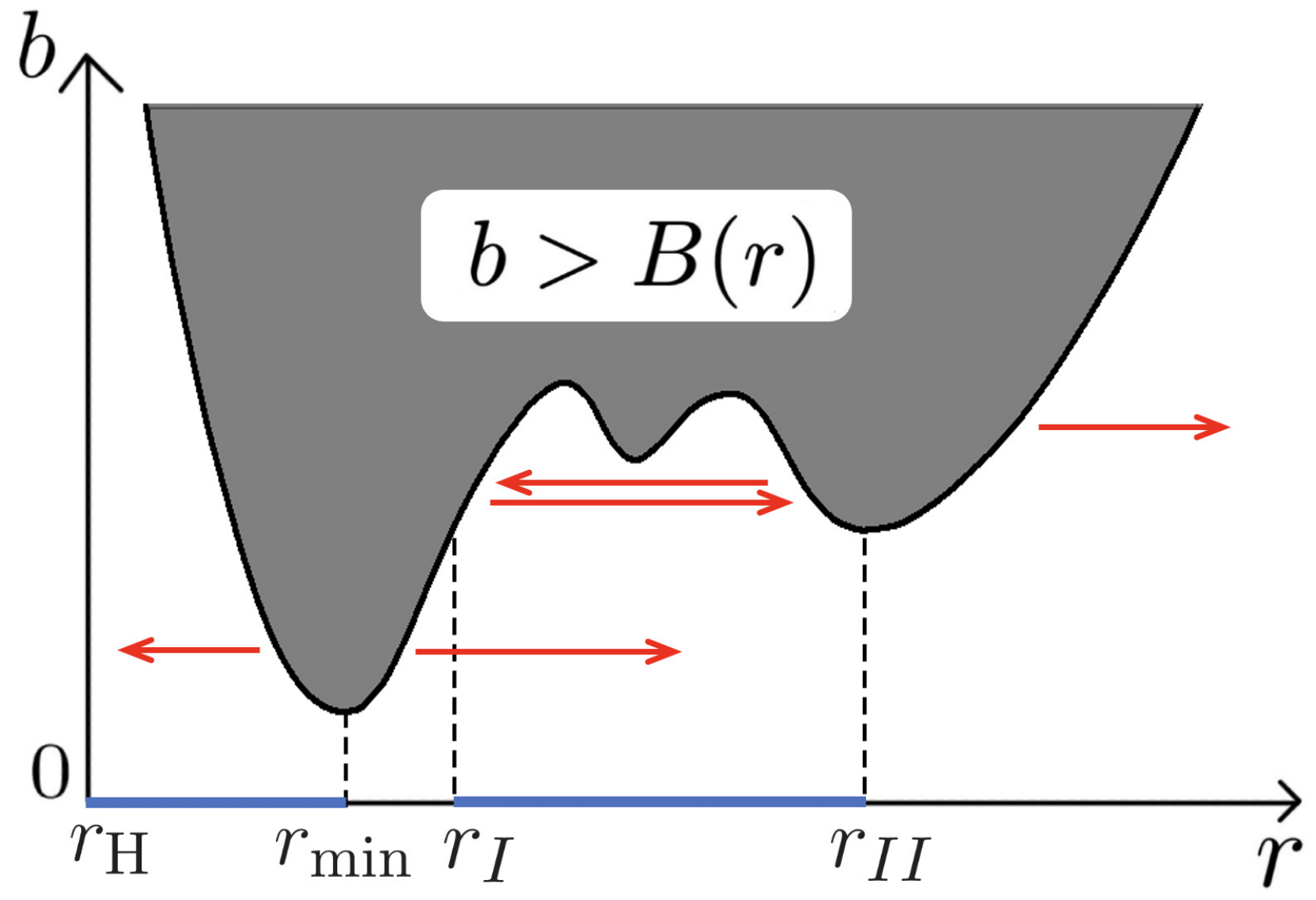}
  \caption{An example of a spacetime having five photon spheres. The shaded region $b> B(r)$ is the forbidden region for a null geodesic. An orbit of null geodesic, described as a $b-$constant line, is reflected if reaching $b=B(r)$ (see red arrows).
  The intervals $[r_{\rm H}, r_{\rm min}]$ and $[r_I, r_{II}]$ correspond to the dark domain, which are indicated by blue lines.
  Here, $r_{\rm H}$, $r_I$ and $r_{II}$ denote the horizon radius, the second outermost dark horizon and the outermost dark horizon, respectively. This spacetime has three dark horizons.}
  \label{Stasp}
  \end{center}
  \end{figure}

  It is easy to see from Fig.~\ref{Stasp} that a ``visible'' photon sphere is a dark horizon.
  Here, we call a photon sphere ``visible'' if null geodesics passing slightly outside of the photon sphere with $r'=0$ reach future null infinity.  
  In particular, the outermost photon sphere, which is visible due to the asymptotic fall-off condition $f(r)=1+\Cr{-1}$ and $g(r)=1+\Cr{-1}$, coincides with the outermost dark horizon.
  Indeed, the visible photon sphere captures the essential properties of the photon sphere---the properties such that it represents the boundary of the observable region, satisfies the areal inequality Eq.~\eqref{AreaInequality} for the outermost visible photon sphere~\cite{Yang:2019zcn}, and serves as an indicator for a strong gravity region.

Note that there are photon spheres that are not dark horizons.
For example, stable photon spheres, at which $B(r)$ is maximal, are not dark horizons.
It should be also noted that there exist dark horizons that are not photon spheres. 
For example, in a spacetime with $B(r)$ of Fig.~\ref{Stasp}, there are two dark domains. The inner boundary of the outer component of the dark domain is a dark horizon, but not a photon sphere.
Let us discuss the physical meaning of this dark horizon.
As shown in Fig.~\ref{Stasp}, we express the radial coordinate of this dark horizon as $r_I$, and that of the outermost dark horizon as $r_{II}$.
Then, we consider the observation of photons emitted from this region. 
For a distant observer, the image
of the observed region can be regarded as a series of concentric circles,
and each circle is specified by the angle $\vartheta$ from the center
of the image. Here, 
the angle $\vartheta$ is a monotonically increasing
function of the impact parameter $b$, and $\vartheta=0$ corresponds to
$b=0$.
When one regards the minimum radial coordinate for each null geodesic as a function of $b$, 
it has a discontinuity at $b_c:=B(r_I)=B(r_{II})$, and hence,
the observable region changes discontinuously at a certain angle
$\vartheta_c$ corresponding
to the impact parameter $b_c$.
Namely, we observe the region
$r>r_{II}$ for $\vartheta=\vartheta_c+\epsilon$ for small $\epsilon$,
while we catch signals from the region
$r\ge r_{I}$ for $\vartheta=\vartheta_c-\epsilon$. 
Due to the above discontinuity of the observable region, the brightness would change discontinuously at $\vartheta=\vartheta_c$.

\subsection{Dynamical Flow}
\label{dynamical_flow}

In this subsection, let us generalize $T^\mu$ to have the radial component, and study the dependence of $T^\mu$ on the position of the dark horizon.
This situation assumes a distribution of light sources moving in the radial direction. 

We parameterize the direction of $T^\mu$ with the light source's outward velocity $\beta$ ($|\beta|<1$) as
\ba
T^\mu=\frac{1}{\sqrt{1-\beta^2}}\left[\frac{1}{\sqrt{f}}\left(\frac{\partial}{\partial t}\right)^\mu+\frac{\beta}{\sqrt{g}}\left(\frac{\partial}{\partial r}\right)^\mu\right].\label{T}
\ea
If $\beta$ is positive, the light source is moving in the outward direction. We do not require $\beta$ to be constant but impose continuity in the spacetime.
Due to the spherical symmetry of the spacetime, we set $n^\mu$ in Def.~\ref{IDD} to be 
\ba
n^\mu=\frac{1}{\sqrt{f}}\left(\frac{\partial}{\partial t}\right)^\mu+\frac{1}{\beta\sqrt{g}}\left(\frac{\partial}{\partial r}\right)^\mu.
\ea
Consider a null vector $k^\mu$ with $k_t=-1$ which is orthogonal to $n^\mu$.
Without loss of generality, we assume $k^\theta=0$ and consider a point on the equatorial plane in the spacetime.
The $\phi$ component of $k_\mu$ of corresponding to this null vector is 
\ba
k_\phi=\pm r\sqrt{\frac{1}{f}\left(1-\beta^2\right)}.\label{b}
\ea
Then, the condition for a spacetime point to be on the dark horizon is that this null vector is on the boundary of the escape cone at that point.

As a concrete example, let us consider Schwarzschild spacetime, $f=1/g=1-2M/r$. Then Eq.~\eqref{b} reduces to 
\begin{align}
  k_\phi&=\pm r\sqrt{\frac{1-\beta^2}{1-2M/r}}.
 \end{align}
 Here, we recall the fact that the impact parameter $b$ is given by the angular momentum $k_\phi$ divided by the energy $-k_t$, that is, $b=-k_\phi/k_t=k_\phi$. 
 Since the absolute value of the impact parameter corresponding to the boundary of the escape cone is $3\sqrt{3}M$, we consider the condition
 \ba
 3\sqrt{3}M= r\sqrt{\frac{1-\beta^2}{1-2M/r}}
\ea
which gives us the radius of the dark horizon associated with $T^\mu$.
Figure~\ref{Fig:dynamical_T} represents the depedence of $\beta$ on the radial coordinate of the dark horizon $r_{\rm DH}$.
In the case of negative $\beta$, the light source accretes onto the black hole and the dark horizon tends to be larger. Conversely, in the case of positive $\beta$, the light source explodes to distant region, and the dark horizon tends to be smaller.
This corresponds to a situation where photons emitted isotropically in the comoving frame are beamed toward the direction of the source motion.
In other words, the definition of the dark horizon generalizes the photon sphere to include the relativistic beaming effect.

\begin{figure}[htbp]
  \begin{center}
    \vspace{1mm}
  \includegraphics[width=10cm]{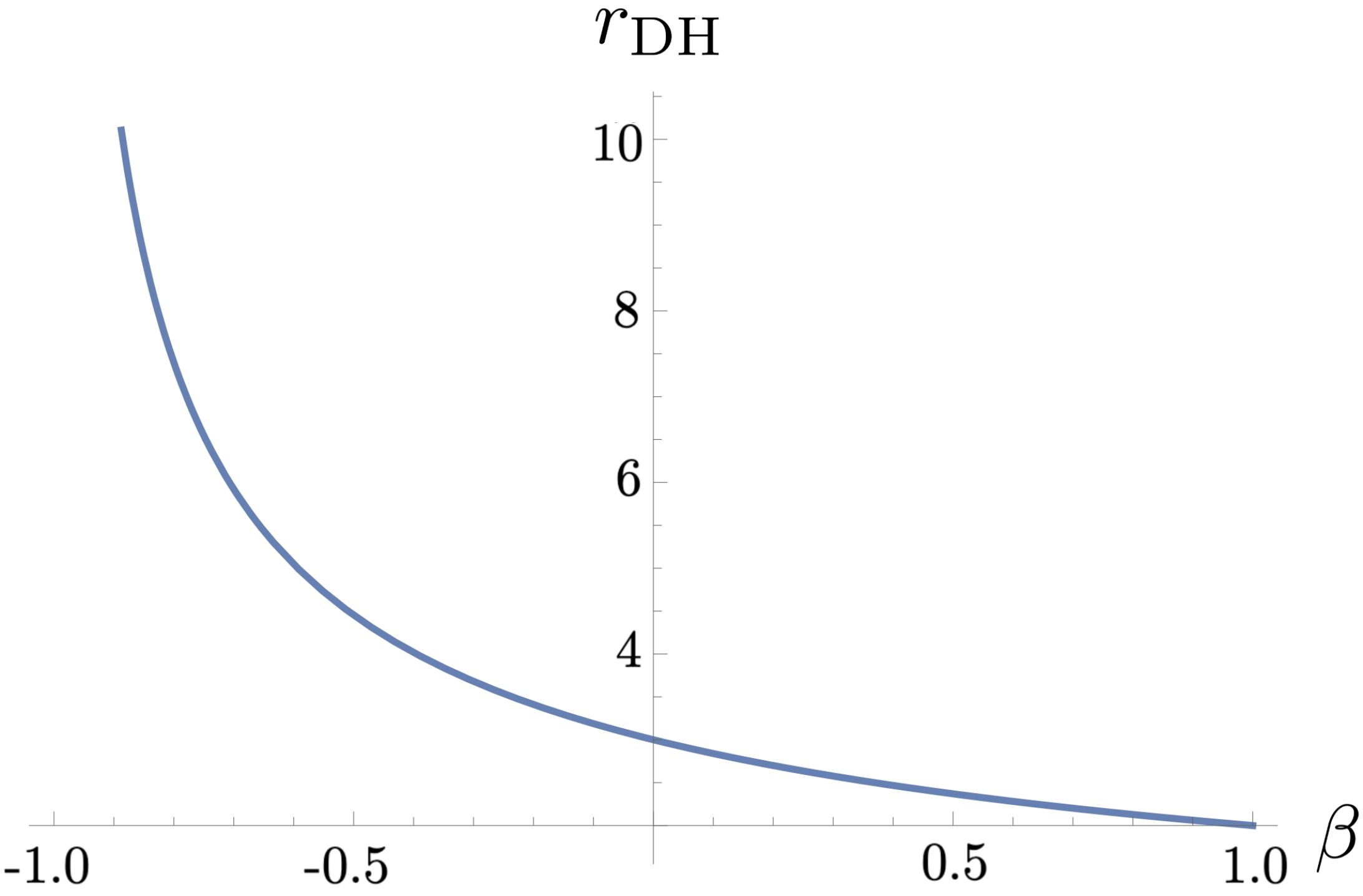}
  \caption{A plot showing the relation between the radial coordinate of the dark horizon ($r_{\rm DH}$) and the value of $\beta$ at that point. The unit of the vertical axis is $M$. The dark horizon is located at $r=3M$ if $\beta=0$ on $r=3M$.
  A point sufficiently close to the horizon can be on the dark horizon if a light source at that point is moving outward nearly at the speed of light ($\beta\simeq1$). In addition, a point with a sufficiently large radial coordinate is on the dark horizon if a light source is moving inward nearly at the speed of light ($\beta\simeq-1$).}
  \label{Fig:dynamical_T}
  \end{center}
  \end{figure}

%
%
%
\section{Explicit example II: Vaidya spacetime}
\label{Sec:Vaidya}

In this section, as an example of dynamical spacetimes, we investigate the dark horizon in the Vaidya spacetime.
The Vaidya spacetime has two classes depending on the directions of the flow---ingoing and outgoing Vaidya spacetimes.
Their metrics are written as
\ba
  ds^2 = -\left(1-\frac{2M(v)}{r}\right)dv^2+2dvdr+r^2\left(d\theta^2+\sin^2\theta d\phi^2\right),\label{inVaidya}
\ea
and
\ba
  ds^2 = -\left(1-\frac{2M(u)}{r}\right)du^2-2dudr+r^2\left(d\theta^2+\sin^2\theta d\phi^2\right),\label{outVaidya}
  \ea
  respectively, 
  where $v$ and $u$ are the advanced and retarded times, respectively.
  We focus our attention to the domain where $r>2M(v)$
  and $r>2M(u)$ hold, and assume that the timelike vector field 
$T^\mu$ is proportional to $(\partial_v)^\mu$ and $(\partial_u)^\mu$ in the ingoing and outgoing Vaidya spacetimes, respectively.
Here, we have assumed that the light sources are different from the null fluid matter, and the worldlines of the light sources correspond to $(r,\theta,\phi)$-constant line.
  Due to Thm.~\ref{ODHsp}, the ODH and IDH associated with the
  same unit time vector field $T^\mu$ coincide with each other.
  In what follows, we just call the ODH and IDH the dark horizon, and
do not write ``associated with $T^\mu$'' in this section.

In Refs.~\cite{Koga:2022dsu,Mishra:2019trb}, the shadow edge worldlines in the Vaidya spacetime were discussed for an idealized situation in which the observer and the light source are supposed to be located at future null infinity and past null infinity, respectively.
In this sense, the authors of Refs.~\cite{Koga:2022dsu,Mishra:2019trb} defined the dynamical photon sphere as the boundary of the set of null geodesics connecting future and past null infinity, and 
explicitly calculated the time-dependence of the dynamical photon sphere.
To compare the dark horizon with the dynamical photon sphere, let us examine the dark horizon in the same setup.

In the case with the ingoing Vaidya spacetime, 
we use the same mass function as that in Ref.~\cite{Koga:2022dsu}:
\begin{equation}
  \label{eq:massfunction-cos1}
  M(v)=\left\{
  \begin{array}{cc}
  1 & (v\le 0);\\
  2-\cos\left(\dfrac{v}{v_{\rm f}}\pi\right) & (0<v\le v_{\rm f});\\
  3 & (v>v_{\rm f}),
  \end{array}
  \right. 
\end{equation}
with $v_{\rm f}=100$.
In order to specify the position of
the dark horizon at each $v$, 
we numerically solved the null geodesic equation
with the initial condition
that corresponds to $r=r_0$ and $k^r=0$, where $k^\mu$ is a tangent vector
of the null geodesic. 
By changing the initial values of $r_0$,
we specified the radius of the dark horizon $r_{\rm DH}$ such that the null geodesic
enters the black hole region for $r_0<r_{\rm DH}$ and
escapes to infinity for $r_{\rm DH}<r_0$. 
The dynamical photon sphere is also obtained numerically
in the same way as Ref.~\cite{Koga:2022dsu}.

Figure~\ref{ingoing} shows the radii of the dark horizon, the dynamical photon sphere and the graph $r=3M(v)$.
To see the difference between the dark horizon and the dynamical photon sphere, consider two sets of null geodesics---ones corresponding to the dark horizon and ones corresponding to the dynamical photon sphere. 
The null geodesics emitted from the dark horizon initially have $k^r=0$. 
By contrast, from the behavior of the dynamical photon sphere depicted in Fig.~\ref{ingoing}, the null geodesics that determine the locus of the dynamical photon sphere has $k^r>0$ because the null geodesics are included in the dynamical photon sphere, whose area increases with time. 
This is the reason why the dark horizon is located outside the dynamical photon sphere, shown in Fig.~\ref{ingoing}.
This is a consequence of the assumption that the light sources are located at past null infinity in this case.
Here, both sets of null geodesics describe the boundary of null geodesics which reach future null infinity or not, but
the locations of the light sources are different.

  \vspace{3mm}
\begin{figure}[htbp]
  \begin{center}
  \includegraphics[width=11.5cm]{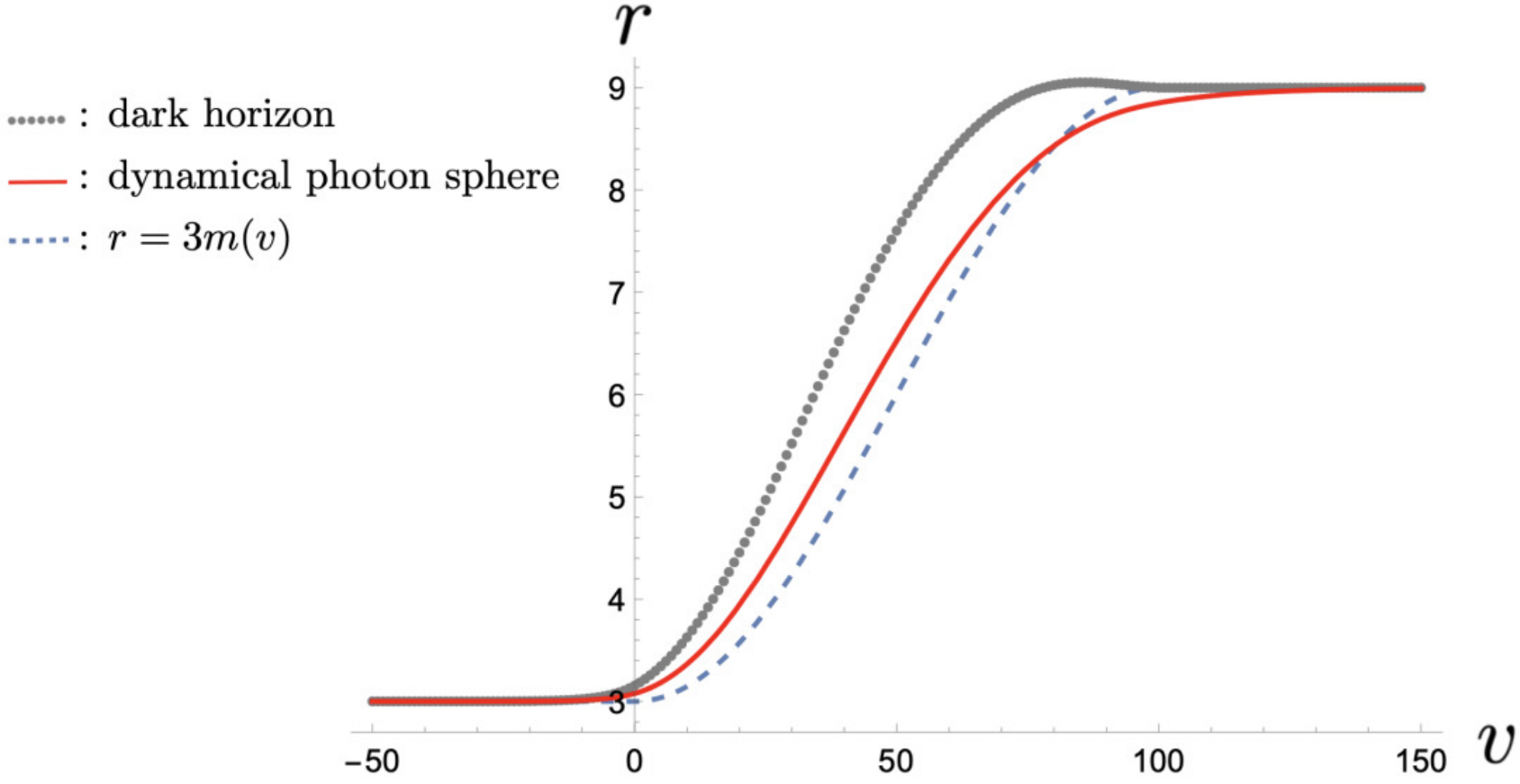}
  \caption{Comparison between the dark horizon (gray dots) and the dynamical photon sphere (red curve) in the ingoing Vaidya spacetime with the metric Eqs.~\eqref{inVaidya} and~\eqref{eq:massfunction-cos1}. The blue dashed curve denotes $r=3M(v)$.
  }
  \label{ingoing}
  \end{center}
  \end{figure}

Similarly, in the case of the outgoing Vaidya spacetime, 
we use the mass function as follows:
\begin{equation}
  \label{eq:massfunction-cos2}
  M(u)=\left\{
  \begin{array}{cc}
  3 & (u\le 0);\\
  2+\cos\left(\dfrac{u}{u_{\rm f}}\pi\right) & (0<u\le u_{\rm f});\\
  1 & (u>u_{\rm f}),
  \end{array}
  \right. 
\end{equation}
with $u_{\rm f}=100$. 
Figure \ref{outgoing} shows the dark horizon, the dynamical photon sphere and the graph $r=3M(u)$.
One can see that the dark horizon is located inside of the dynamical
photon sphere. The reason is interpreted as follows.
  The null geodesics that correspond to the edge of the
  escape cone at the dark horizon has the vanishing radial
  component of the tangent vector, $k^r=0$. 
  By contrast, the plot of the dynamical photon sphere in Fig.~\ref{outgoing} tells us that the null geodesics corresponding to the dynamical photon sphere direct inward because the null geodesics are included in the dynamical photon sphere, whose area decreases with time. 
  This gives us the reason why the dark horizon is located inside of the dynamical photon sphere, shown in Fig.~\ref{outgoing}.
  This is a consequence of the assumption that the light sources are located at past null infinity in this case.
  Here, both sets of null geodesics describe the boundary of null geodesics which reach future null infinity or not, but
  the locations of the light sources are different.
  
  \begin{figure}[htbp]
    \begin{center}
    \includegraphics[width=10.5cm]{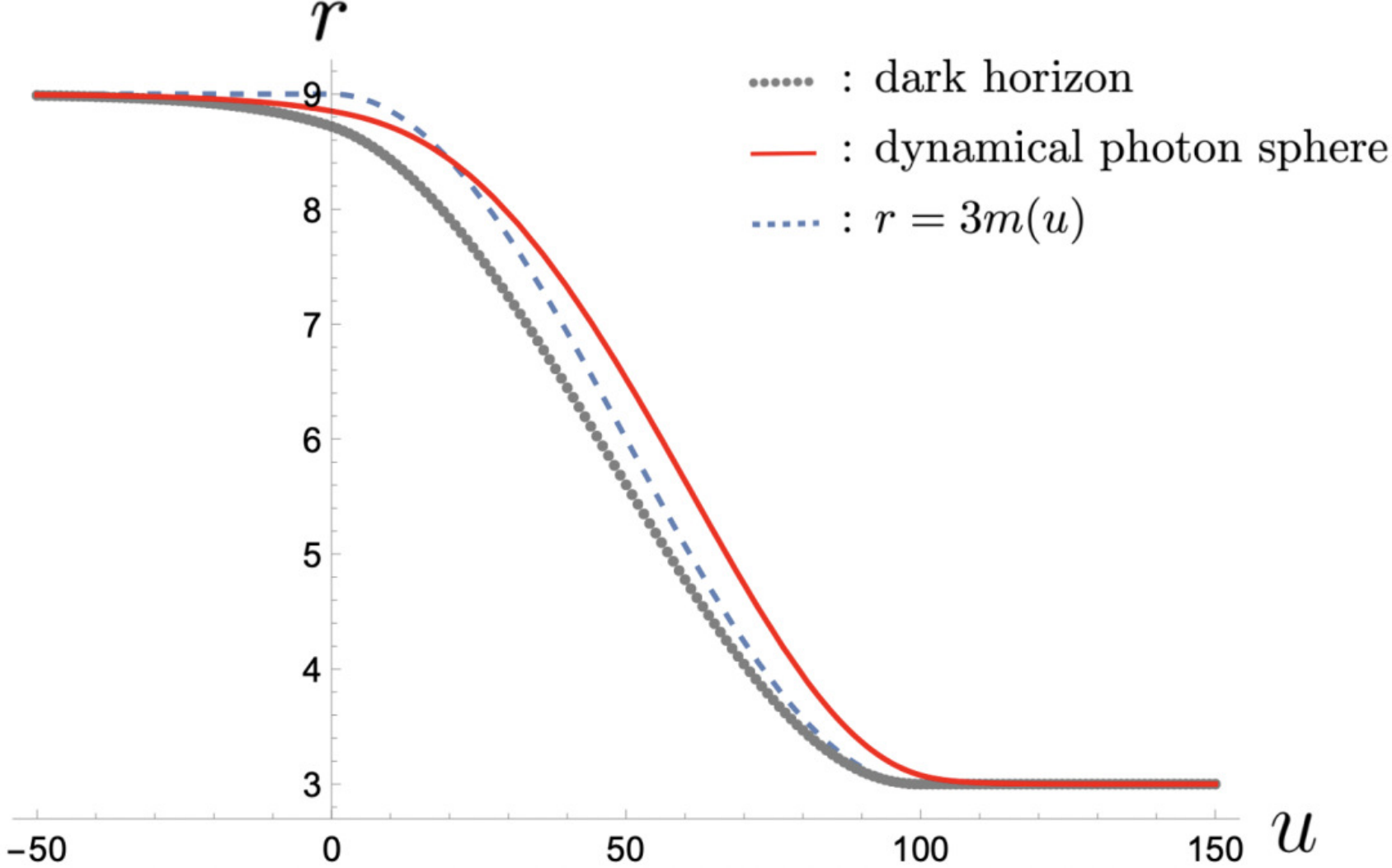}
    \caption{Comparison between the dark horizon (gray dots) and the dynamical photon sphere (red curve) in the outgoing Vaidya spacetime with the metric Eqs.~\eqref{outVaidya} and ~\eqref{eq:massfunction-cos2}. The blue dashed curve corresponds to $r=3M(u)$.
    }
    \label{outgoing}
    \end{center}
    \end{figure}

%
%
%
\section{Explicit example III: Kerr spacetime}
\label{Sec:Kerr}
In this section, we study the dark horizons in the Kerr spacetime as a simple model of typical black holes in our universe, assuming for simplicity that the motion of light sources is the same as that of the zero-angular-momentum observers (ZAMOs).
To be specific, we consider the ODH and IDH associated with a timeslice $\Sigma_t$ of the Boyer-Lindquist coordinates $(t, r, \theta, \phi)$.
This section is organized as follows.
In subsection~\ref{subsec:prepKerr}, we show basic properties of null geodesics in the Kerr spacetime. In subsection~\ref{Subsec:Kerr-axis},
the ODH and IDH on the rotation axis is considered.
In subsections~\ref{Subsec:KerrODD} and \ref{Subsec:KerrIDD}, we investigate the ODH and IDH in the Kerr spacetime, respectively.

\subsection{Preparation}
\label{subsec:prepKerr}
In this subsection, we conduct several calculations as a preparation for the depiction of the shapes of the ODH and IDH in the Kerr spacetime.
The metric of the Kerr spacetime is given by
\ba
\label{Kerrmet}
ds^2
&=&-\frac{\Sigma\Delta}{A}dt^2+\frac{\Sigma}{\Delta}dr^2+\Sigma d\theta^2 +\frac{A}{\Sigma}\sin^2\theta\left(d\varphi-\frac{2Mar}{A} dt\right)^2,
\ea
where $\Sigma$, $\Delta$, and $A$ are defined as
\ba
  \Sigma:=r^2+a^2\cos^2\theta,~~
  \Delta:=r^2-2Mr+a^2,~~A:=\left(r^2+a^2\right)^2-a^2\Delta\sin^2\theta.
  \label{Kerr-metric-definitions}
  \ea
  Here, $M$ is the ADM mass and $a$ is the rotation parameter
  that is related to the ADM angular momentum as $J=Ma$.
  We also sometimes use the dimensionless rotation parameter
  $a_*=a/M$ to specify the solution.
  Below, the black hole spacetime with a nondegenerate event horizon,
$0<|a_*|<1$, 
  is considered (see Sec.~\ref{Sec:stsp} for the case of $a_*=0$). The
  event and Cauchy horizons are located
  at $r=r_{\rm H}^{+}$ and $r=r_{\rm H}^{-}$, respectively,
  where
  \begin{equation}
    r_{\rm H}^{\pm} \, = \, M\pm \sqrt{M^2-a^2},
    \label{Location-horizons}
  \end{equation}
  and we shall limit our attention to 
  the region outside the event horizon, $r>r_{\rm H}^+$.
Let us introduce the tetrad in the Kerr spacetime as follows:
\ba
\left(e_0\right)^\mu&=&\left(\sqrt{\frac{A}{\Delta\Sigma}},0,0,\frac{2Mar}{\sqrt{A\Delta\Sigma}}\right),\label{e0}\\
\left(e_1\right)^\mu&=&\left(0,\sqrt{\frac{\Delta}{\Sigma}},0,0\right),
\label{e1}\\
\left(e_2\right)^\mu&=&\left(0,0,\frac{1}{\sqrt{\Sigma}},0\right),
\label{e2}\\
\left(e_3\right)^\mu&=&\left(0,0,0,\sqrt{\frac{\Sigma}{A}}\frac{1}{\sin\theta}\right).
\label{e3}
\ea
Here, $\left(e_0\right)^\mu$ is orthogonal to the
$t$-constant slice $\Sigma_t$ of the
Boyer-Lindquist coordinates. The ODD, ODH, IDD, and IDH
associated with $\Sigma_t$ in the Def.~\ref{DD_DH_associated_with_time_slice}
are equivalent to those associated with $T^\mu=\left(e_0\right)^\mu$
in the Defs.~\ref{ODD}, \ref{ODH}, \ref{IDD}, and \ref{IDH}.
The timelike vector field $\left(e_0\right)^\mu$
is equivalent to the four-vector fields
of the zero-angular-momentum observers (ZAMOs) (e.g. \cite{Frolov:1998}).
The tetrad basis of Eqs.~\eqref{e0}--\eqref{e3}
gives a natural frame associated with ZAMOs,
and we consider the escape/capture cones in this ZAMO frame. 
We show an illustrative figure
  for the spatial vectors $\left(e_1\right)^\mu$, $\left(e_2\right)^\mu$,
  and $\left(e_3\right)^\mu$ in 
  the left panel of Fig.~\ref{DH_kerr_tetrad_figure}.

\vspace{2.7mm}
\begin{figure}[htbp]
    \begin{center}
    \includegraphics[width=12.5cm]{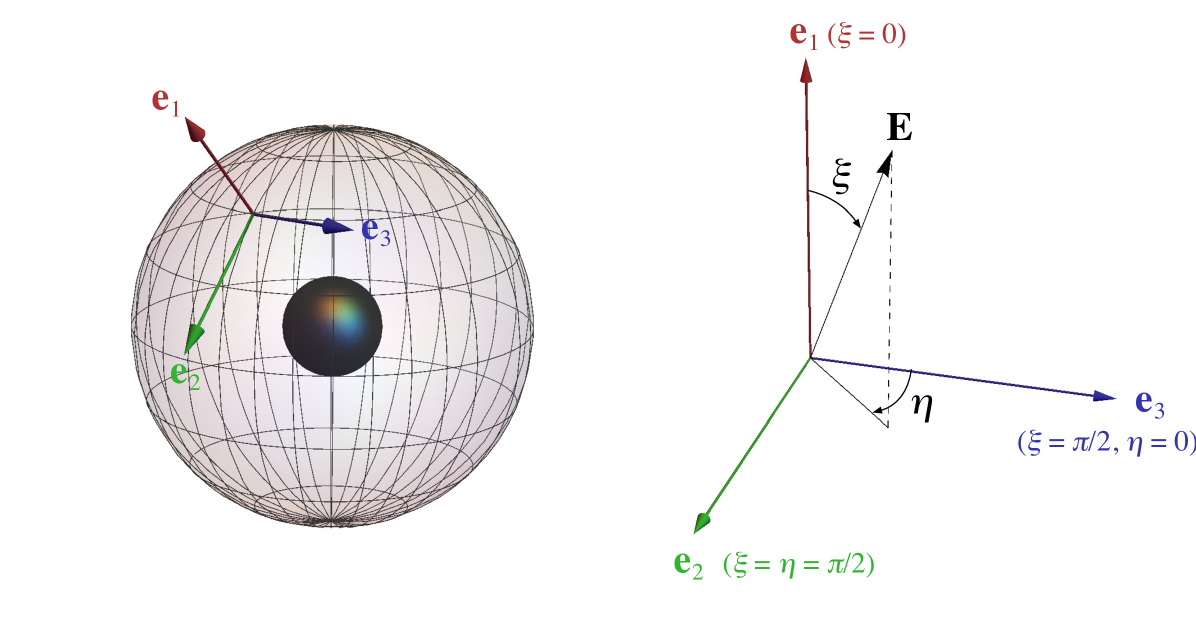}
    \caption{The spatial vectors of the tetrad basis
        and the emission direction vector. Left panel: The configuration
        of $\mathbf{e}_1$, $\mathbf{e}_2$, and $\mathbf{e}_3$ at some spatial point on a $t$-constant hypersurface, where $\mathbf{e}_1$, $\mathbf{e}_2$, and $\mathbf{e}_3$ correspond
        to Eqs. \eqref{e1}, \eqref{e2}, and \eqref{e3}, respectively.
        The black sphere is the event horizon, and the larger sphere indicates an $r$-constant surface.
        At each point, $\mathbf{e}_1$, $\mathbf{e}_2$, and $\mathbf{e}_3$
        are parallel to $\partial_r$, $\partial_\theta$, and $\partial_\phi$, respectively.
        Right panel: The definition of $\xi$ and $\eta$ to parametrize the
        emission direction given in Eq.~\eqref{xieta}.
        Here, $\mathbf{E}$ indicates the emission direction vector given by Eq.~\eqref{emission-direction-vector} [the spatial part of Eq.~\eqref{xieta}]. 
      }
    \label{DH_kerr_tetrad_figure}
    \end{center}
    \end{figure}

At each point $p$ of the Kerr spacetime,
we consider the (initial) null tangent vector $k^\mu$
of a null geodesic that corresponds to a photon emitted
from the point $p$. Adopting the normalization
$k^\mu\left(e_0\right)_\mu=-1$ , the initial null tangent vector
$k^\mu$ is expressed in the form of
\ba
k^\mu&=&\left(e_0\right)^\mu+\cos\xi\left(e_1\right)^\mu+\sin\xi\sin\eta\left(e_2\right)^\mu+\sin\xi\cos\eta\left(e_3\right)^\mu,\label{xieta}
\ea
where $0\le\xi\le\pi$ and $-\pi\le\eta<\pi$.
See the illustration for the definition of $\xi$ and $\eta$
in the right panel of Fig.~\ref{DH_kerr_tetrad_figure}.
Then, at each point $p$, the covariant components $k_t$ and $k_\phi$
are written down as  
\ba
\left.k_t\right|_p&=&-\sqrt{\frac{\Delta\Sigma}{A}}-\frac{2Mar\sin\theta\sin\xi\cos\eta}{\sqrt{A\Sigma}},\\
\left.k_\phi\right|_p&=&\sqrt{\frac{A}{\Sigma}}\sin\theta\sin\xi\cos\eta,\label{bbb}
\ea
where the values of $r$ and $\theta$ at the point $p$ must be used
in these equations.
After the emission, the motion of the photon
follows the null geodesic. In a Kerr spacetime, each null geodesic
possesses
the conserved quantities, the energy $E$, the angular momentum $L$,
and Carter's constant $Q$ \cite{Carter:1968rr}
(we follow the notation of Ref.~\cite{Ogasawara:2020frt}
for Carter's constant).
Their values are determined by the initial condition
({\it i.e.}, the information at the point $p$), and 
for the energy and angular momentum, we have 
$E=\left.-k_t\right|_p$ and $L=\left.k_\phi\right|_p$, respectively.  
The impact parameter $b$ is defined by $b:=L/E$,
and its value becomes
\ba
b\,=\,\left.-\frac{k_\phi}{k_t}\right|_p=\frac{A\sin\theta\sin\xi\cos\eta}{\sqrt{\Delta}\Sigma+2Mar\sin\theta\sin\xi\cos\eta}.\label{impactetaxi}
\ea
In addition, we introduce the dimensionless Carter's constant
$q=Q/E^2$, and the values of $q$ and $b$ satisfy
\ba
q+(b-a)^2&=&\frac{A\sin^2\xi\sin^2\eta+\left[\left(r^2+a^2\right)\sin\xi\cos\eta-a\sqrt{\Delta}\sin\theta\right]^2}{\left(\sqrt{\Delta}+2Mar\sin\theta\sin\xi\cos\eta/\Sigma\right)^2}.\label{qqq}
\ea
Eliminating $\eta$ 
from Eqs.~\eqref{impactetaxi} and \eqref{qqq}, 
one has
\begin{equation}
\sin^2\xi\,=\,\frac{ F(b,q)\Delta}{(A-2Mabr)^2},\label{sinxi-squared}
\end{equation}
where
\ba
F(b,q):=A\left(q+a^2\cos^2\theta\right)+b^2\left[r^4+a^2\cos^2\theta(\Delta-a^2)\right].
\ea
From Eq.~\eqref{sinxi-squared},
it is found that the value of $F(b,q)$ must be nonnegative,
that is, 
the impact parameter $b$ and the dimensionless Carter's constant $q$ can only take values
such that $F(b,q)\ge 0$ is satisfied. 
Taking the square root of Eq.~\eqref{sinxi-squared}
and substituting it into Eq.~\eqref{impactetaxi},
we have
\begin{subequations}
\ba
\sin\xi&=&\frac{\sqrt{ F(b,q)\Delta}}{|A-2Mabr|},\label{sinxi}\\
\cos\eta&=&\mathrm{sgn}(A-2Mabr)\frac{b\Sigma}{\sin\theta\sqrt{F(b,q)}}.\label{coseta}
\ea
\end{subequations}

Let us discuss the relation between $\xi$ and $\eta$ on the boundary of the escape/capture cones.
It is known that the null geodesics corresponding to the boundary of the escape cone neither reach future null infinity nor fall into the black hole.
These null geodesics asymptote to the spherical orbits around the black hole with specific values of $b$ and $q$, which are called the ``spherical photon orbits" \cite{Teo:2020sey}.
To discuss the null geodesics which correspond to the boundary of the escape cone, let us look at the relation between $b$ and $q$ for the spherical photon orbits.

Along the null geodesic, the following first-order differential equation holds~\cite{Carter:1968rr}:
\ba
\Sigma^2 r'^2=R,\label{Kerrgeoeq}
\ea
where 
\ba
R:=\left[E\left(r^2+a^2\right)-La\right]^2-\Delta\left[Q+(L-aE)^2\right].\label{R}
\ea
A spherical photon orbit with the radius $r_s$ satisfies 
$R|_{r=r_s}=0$ and $dR/dr|_{r=r_s}=0$, and these conditions are rewritten as 
\ba
b&=& -\frac{r_s^3-3Mr_s^2+a^2r_s+Ma^2}{a(r_s-M)},
\label{bSPO}\\
q&=& -\frac{r_s^3 \left(r_s^3-6Mr_s^2+9M^2r_s-4a^2M\right)}{a^2\left(r_s-M\right)^2}.\label{qSPO}
\ea
Equations~\eqref{bSPO} and \eqref{qSPO}  give the relation between $b$ and $q$ via a parameter $r_s$ 
that is satisfied by various spherical photon orbits.

Next, consider null geodesics asymptoting to spherical photon orbits in the future ($t\to \infty$). These null geodesics correspond to the boundary of the escape cone at a point on these null geodesics. 
Since $b$ and $q$ of these null geodesics are the same as those of corresponding spherical photon orbits, the same relation of Eqs.~\eqref{bSPO} and \eqref{qSPO} holds.
By substituting Eqs.~\eqref{bSPO} and \eqref{qSPO} into Eqs.~\eqref{sinxi} and \eqref{coseta}, we obtain the relation between $\sin\xi$ and $\cos\eta$
as parametrized by $r_s$, and this relation gives
the boundary of the escape cone.
However, 
since two values of $\xi$ give the same $\sin\xi$ in the
range $0\le \xi\le \pi$, 
the obtained relation
between $\xi$ and $\eta$ 
includes both
the emission directions of photons that asymptote to the
spherical photon orbits in the future ($t\to \infty$) and in the past
($t\to -\infty$). Therefore, we must select an appropriate boundary
to specify the real escape cone.
This can be done by several criteria:
\begin{itemize}
\item[(i)] The structures of the escape cones for
a Schwarzschild spacetime are well known, and the escape cones
change smoothly as the value of $a_*$ is increased;
\item[(ii)] There exist continuous mappings from a spacetime point to a configuration of escape and capture cones;
\item[(iii)] The boundary of the escape/capture cones must be
smooth on the two-sphere of emission directions.
\end{itemize}
We have found that the formula of $\cos\xi$ that satisfies
the above requirements is given by
\begin{equation}
  \cos\xi \ = \ \mathrm{sgn}(r_s-r)\sqrt{1-\frac{ F(b,q)\Delta}{(A-2Mabr)^2}}.
  \label{cos-xi-for-appropriate-boundary}
\end{equation}
In fact, the function in the square root
of Eq.~\eqref{cos-xi-for-appropriate-boundary}
has the factor $(r_s-r)^2$, and hence,
Eq.~\eqref{cos-xi-for-appropriate-boundary}
gives an analytic formula with respect to $r_s$ for any fixed
$r$ and $\theta$. The explicit formula is shown
as Eq.~\eqref{yrs} with Eq.~\eqref{Def-hrs} 
in App.~\ref{Sec:Proof_convex-upward}.

\vspace{2.7mm}
\begin{figure}[htbp]
    \begin{center}
    \includegraphics[width=6.5cm]{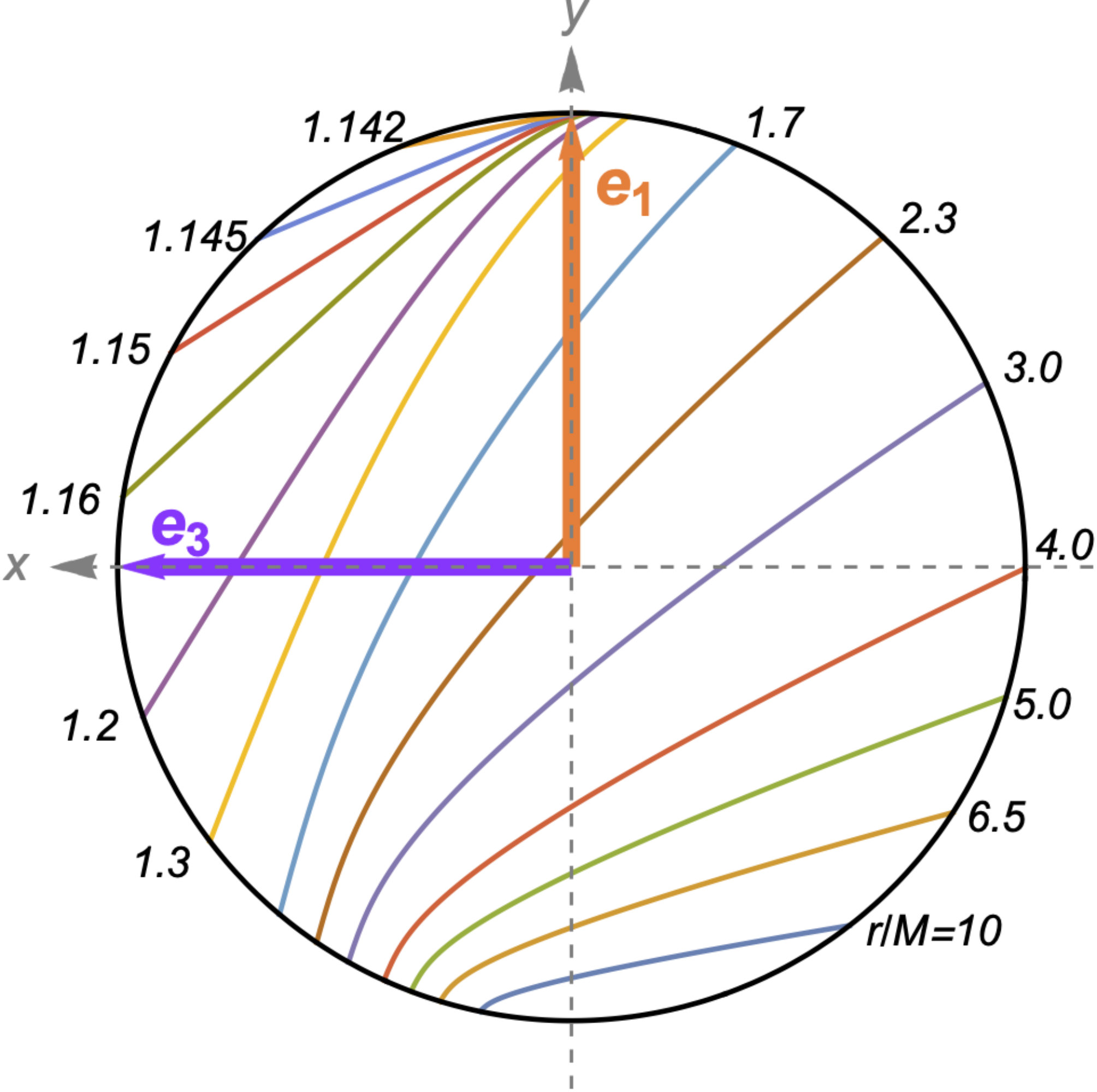}
    \caption{
      The boundaries of the escape and capture cones
      projected onto the plane spanned by $\mathbf{e}_3$ and $\mathbf{e}_1$
      for $r/M=1.142$, $1.145$, $1.15$,
      $1.16$, $1.2$, $1.3$, $1.7$, $2.3$, $3.0$, $4.0$, $5.0$,
      $6.5$, and $10.0$ in the equatorial plane $\theta=\pi/2$ 
      for a Kerr black hole with $a_*=0.99$. 
      The upper region of each curve is the escape cone, while the
      lower region is the capture cone.
      }
    \label{escape}
    \end{center}
    \end{figure}

Here, we present examples of the boundaries of the escape and capture cones.
For this purpose, we consider the projection of the two-sphere
of emission directions onto the plane spanned by $\mathbf{e}_1$
and $\mathbf{e}_3$.
In other words, we consider a ``side view'' of the two-sphere
as seen from a distant position
with $\xi=\eta=\pi/2$ in the tangent subspace.  
Then, the two-sphere is projected
to a unit disk, and the boundary of the escape and
capture cones becomes a curve that has
the endpoints on the unit circle due to the symmetry
in the transformation $\eta\to -\eta$.
Figure~\ref{escape} presents the projected
boundary of the escape and capture cones
for various $r$ values on the equatorial plane $\theta=\pi/2$
for $a_*=0.99$. The upper region of each curve is the escape cone,
while the lower region is the capture cone.
Due to the effect of dragging into rotation of the Kerr black hole,
the escape cone tends to be directed in the $\mathbf{e}_3$ direction.

We point out important properties
of the curve of the projected boundary of the escape and
capture cones. 
Let us introduce the Cartesian coordinates $(x,y)$
on the two-dimensional plane
defined by $\partial_x=\mathbf{e}_3$ and $\partial_y=\mathbf{e}_1$. 
The origin of the coordinates is set so that
the projected two-sphere of emission direction
is given by $x^2+y^2\le 1$. 
Note that $\left(\xi,\eta\right)=\left(\pi/2,\pm \pi/2\right)$ are both projected to the origin.
The curve of the projected boundary
can be expressed as $y=y(x)$. Then,
as proved in App.~\ref{Sec:Proof_convex-upward},
it is possible to show the inequalities
\begin{equation}
  \frac{dy}{dx}\, \le \, 0,
  \label{Eq:negative-slope_boundary}
\end{equation}
and
\begin{equation}
  \frac{d^2y}{dx^2}\, \le \, 0,
  \label{Eq:convex-upward}
\end{equation}
where equalities hold on and only on the rotation axis.
Denoting the two endpoints of the projected boundary
of the escape and capture cones
as $(\xi,\eta)=(\xi_+,0)$ and $(\xi_-,\pi)$, 
the first inequality of Eq.~\eqref{Eq:negative-slope_boundary}
indicates that 
$\xi_+\ge\xi_-$ always holds for $a_*>0$. 
The second inequality of Eq.~\eqref{Eq:convex-upward} means that
the curve of the projected boundary
is convex upward ({\it i.e.} concave downward). This property will be used in
figuring out the ODH and IDH.

For a later convenience, 
let us examine 
the behavior of a null geodesic
whose tangent vector $k^\mu$ of Eq.~\eqref{xieta} has the parameter values
$(\xi,\eta)=(\pi/2,\pm\pi/2)$, corresponding to $(x,y)=(0,0)$, at the emission point $p$
of the corresponding photon.
The motion of such a photon is initially directed toward
$\pm(e_2)^\mu$ in the ZAMO frame and tangent to
the $r$-constant surface, $r^\prime=0$. The photon
possesses zero impact parameter, $b=0$, by Eq.~\eqref{impactetaxi}. 
Due to Eq.~\eqref{Kerrgeoeq}, $r'=0$ is equivalent to $R=0$,
and the value of $dR/dr$ under the condition $R=0$
is computed as 
\ba
\left.\frac{dR}{dr}\right|_{R=0}&=&\frac{2E^2}{\Delta}\left(r^2+a^2\right)\left(r^3-3Mr^2+a^2r+Ma^2\right).
\ea
We can easily see that $\left.{dR}/{dr}\right|_{R=0}=0$
has three real solutions,
and only one solution outside the event horizon,
$r>r_{\rm H}^+$, is given by
\ba
r=r_s^{(0)}:=M+2\left(M^2-\frac{a^2}{3}\right)^{1/2}\cos\left[\frac{1}{3}\cos^{-1}\left\{\frac{M\left(M^2-a^2\right)}{\left(M^2-a^2/3\right)^{3/2}}\right\}\right]\label{SPORadius}.
\ea
This gives the radius of the spherical photon orbit, and the trajectory
of a photon in this orbit crosses the rotation axis due to
the property $b=0$. 
It is easy to see that $\left.dR/dr\right|_{R=0}<0$ holds
in the range $r_{\rm H}^+\le r< r_s^{(0)}$, and the emitted photon falls into the black hole.
We also see $\left.dR/dr\right|_{R=0}>0$
in the range $r> r_s^{(0)}$, which means that $r''>0$
initially.
For such situations, 
$R$ is positive in the region $r> r_s^{(0)}$,
and hence, the emitted photon escapes to future null infinity.

We would like to study one more case: the case
$\sin\xi=0$, corresponding to $(x,y)=(0,\pm 1)$, at the emission point $p$.
In this case, the photon is emitted in the $\pm(e_1)^\mu$
direction in the ZAMO frame.
Equations~\eqref{impactetaxi} and \eqref{qqq}
indicate $b=0$ and $q=\left.-a^2\cos^2\theta\right|_p$. Then,
the function $R$ of Eq.~\eqref{Kerrgeoeq}
is positive and there is no turning point for the radial motion.
Therefore, the photon emitted in the $+(e_1)^\mu$ direction
escapes to infinity, while the photon emitted in the $-(e_1)^\mu$ direction
falls into the black hole.
This means that the points $\xi=0$ and $\xi=\pi$ on the
sphere of emission directions belong to the escape cone
and the capture cone, respectively, 
at an arbitrary spacetime point outside of the horizon.

\subsection{Dark horizons on the rotation axis}
\label{Subsec:Kerr-axis}

Here, we consider the ODH and IDH on the rotation axis.
Strictly speaking, since the tetrad frame of Eqs.~\eqref{e0}--\eqref{e3}
becomes singular at the rotation axis, $\theta=0$ and $\pi$,
the above analysis must be reconsidered
if the initial emission point is on the rotation axis.
However, this case can be 
handled with a minor modification to the case $\theta\neq 0$ or $\pi$.
Let us focus on the upper axis, $\theta=0$, for simplicity. 
The most convenient way would be to consider
the limit $\theta\to 0$ of the tetrad frame
along $\phi=0$ with $t,r=\mathrm{constant}$.
In this case, $(e_1)^\mu$ is the radial unit vector,
and $(e_2)^\mu$ and $(e_3)^\mu$ are the unit vector
directed to $+\theta$ directions
along $\phi=0$ and $\phi=\pi/2$, respectively. 
Then, the results of Sec.~\ref{subsec:prepKerr}
hold also on the rotation axis.

We can also obtain the radial coordinate of the ODH and IDH, analytically.
Because of the axial symmetry, the escape cone
at a point on the rotation axis also becomes
axially symmetric, and hence,
the boundary of the escape cone
is given by $\xi=\mathrm{const}$. 
Because of this property,
the ODH and IDH become degenerate, 
similarly to the spherically symmetric case.
At the ODH and IDH,
the boundary of the escape cone becomes
$\xi=\pi/2$. 
From the results of Sec.~\ref{subsec:prepKerr},
$\xi=\pi/2$ belongs to the capture cone for $r_{\rm H}^+< r< r_s^{(0)}$,
belongs to the escape cone for $r> r_s^{(0)}$,
and becomes the boundary of the escape and capture cones for $r= r_s^{(0)}$.
Therefore, the location of the ODH and IDH on the rotation axis
is $r= r_s^{(0)}$, {\it i.e.} the location of the
spherical photon orbit with $b=0$.

\subsection{Outer Dark Horizon in Kerr spacetime}
\label{Subsec:KerrODD}
In this subsection, we depict the ODH of the Kerr black hole by applying the results in subsection~\ref{subsec:prepKerr}.
In order to find the ODD, we can
use Thm.~\ref{antipodal-points-capture-cone-ODD}
which tells us that if a pair of antipodal points are included
in the capture cone, that position is in the ODD.
As a pair of antipodal points, we focus 
on the points $(\xi,\eta)=(\pi/2,\pm\pi/2)$, that correspond
to the emission directions $\pm (e_2)^\mu$. 
The motion of the corresponding photons
are discussed in Sec.~\ref{subsec:prepKerr}:
In the regions $r_{\rm H}^+<r< r_s^{(0)}$, $r= r_s^{(0)}$, and
$r> r_s^{(0)}$, they fall into the black hole,
propagate along the circular orbits eternally,
and escape to infinity, respectively.
This means that the region
$r_{\rm H}^+<r\le r_s^{(0)}$ is the ODD. 

We now examine whether $r=r_s^{(0)}$ is the ODH or not
by applying Prop.~\ref{IDH-ODH-conditions}.
There exists an orthodrome
that is tangent to the boundary of the escape and capture
cones at $(\xi,\eta)=(\pi/2,\pm\pi/2)$ for the following reason.
Consider the tangent subspace in
which the two-sphere of emission directions exists.
Since the boundary of the 
cones has the symmetry in the transformation $\eta\to-\eta$,
the tangent lines of the boundary at $(\xi,\eta)=(\pi/2,\pm\pi/2)$
are parallel to each other. Then, there exists
a plane on which the two tangent lines exist,
and the intersection between the plane
and the two-sphere of emission directions becomes
the desired orthodrome.
If that orthodrome is included in the escape cone
except at $(\xi,\eta)=(\pi/2,\pm\pi/2)$,
the surface $r=r_s^{(0)}$ is confirmed to be the ODH. 

\vspace{2.7mm}
\begin{figure}[htbp]
    \begin{center}
    \includegraphics[width=6.0cm]{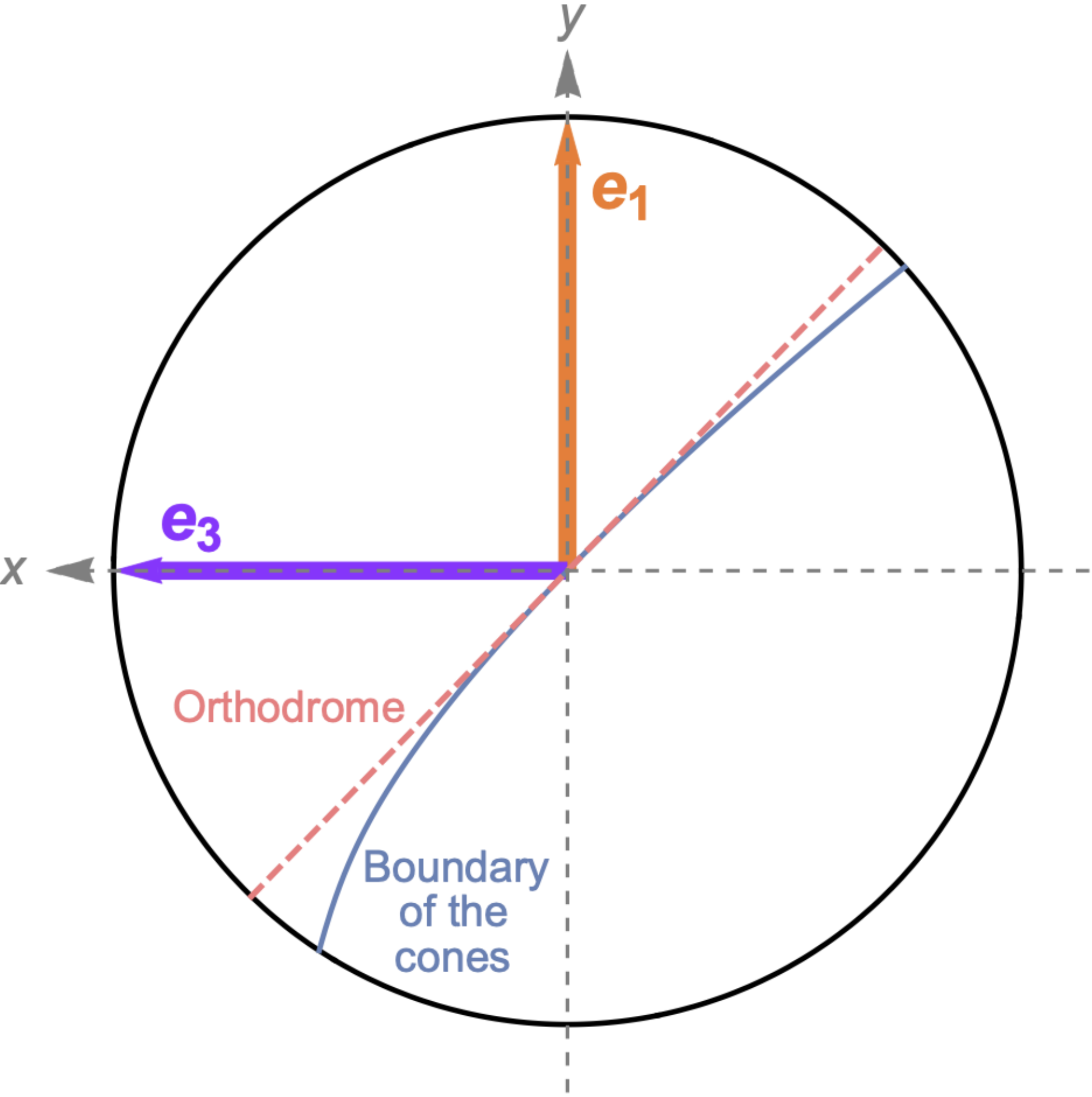}
    \caption{
      The boundary of the escape and capture cones
      projected onto the plane spanned by $\mathbf{e}_3$ and $\mathbf{e}_1$
      for $r/M=r_s^{(0)}/M\approx 2.431$
      in the equatorial plane 
      of a Kerr spacetime with $a_*=0.99$. 
      The projected orthodrome that is tangent to the boundary
      at $(\xi,\eta)=(\pi/2,\pm\pi/2)$ 
      is shown by the dashed line. Because
      the boundary of the cones is convex upward,
      the orthodrome is included in the escape cone
      except at $(\xi,\eta)=(\pi/2,\pm\pi/2)$.
      }
    \label{Boundary_cones_a0990_equatorial-ODH}
    \end{center}
    \end{figure}

Figure~\ref{Boundary_cones_a0990_equatorial-ODH}
shows an example of the projected boundary of the escape and capture cones
and the projected orthodrome
that is tangent to the boundary at $(\xi,\eta)=(\pi/2,\pm\pi/2)$
on the two-dimensional plane spanned by
$\mathbf{e}_3$ and $\mathbf{e}_1$ for $r=r_s^{(0)}$
on the equatorial plane in the case $a_*=0.99$. 
The projected boundary of the cones becomes
a curve that passes through the origin,
and the projected orthodrome becomes a tangent line
to the curve of the projected boundary at the origin. 
The projected orthodrome divides the unit disk
into the upper and lower half-disks, and then, 
the condition that the orthodrome is included in the
escape cone except for $(\xi,\eta)=(\pi/2,\pm\pi/2)$ is
equivalent to that the projected boundary of the cones
is included in the lower half-disk. 
This is guaranteed by the fact that the
projected boundary is convex upward,
as indicated by the inequality of Eq.~\eqref{Eq:convex-upward}.
Therefore, we can safely declare that
$r=r_s^{(0)}$ is the ODH.

\begin{figure}[htbp]
  \begin{minipage}[b]{0.48\linewidth}
      \begin{center}
          \includegraphics[width=6.15cm]{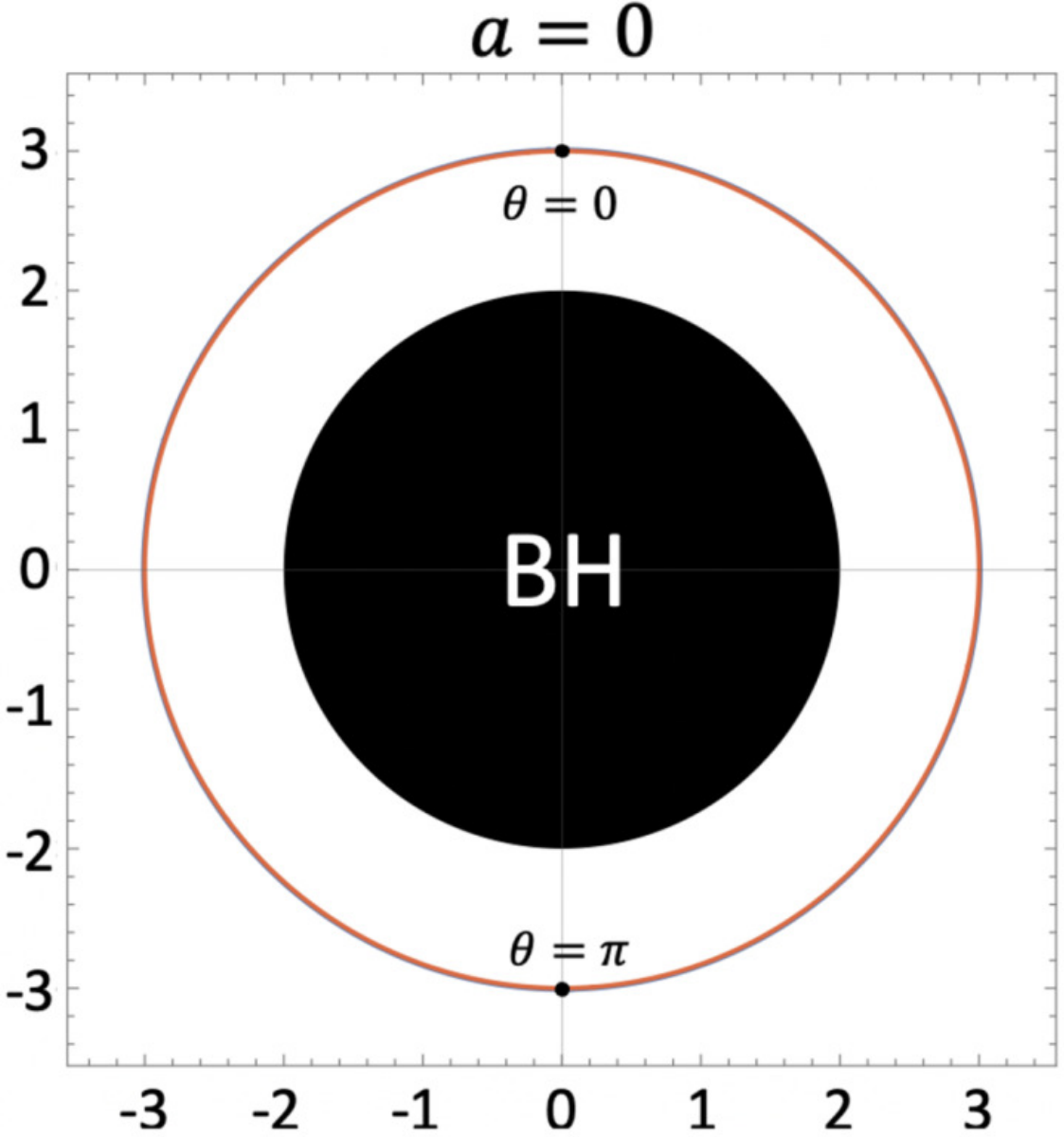}
          \end{center}
  \end{minipage}
  \begin{minipage}[b]{0.02\linewidth}
      ~
  \end{minipage}
  \begin{minipage}[b]{0.48\linewidth}
      \begin{center}
        \vspace{0.7mm}
          \includegraphics[width=6.15cm]{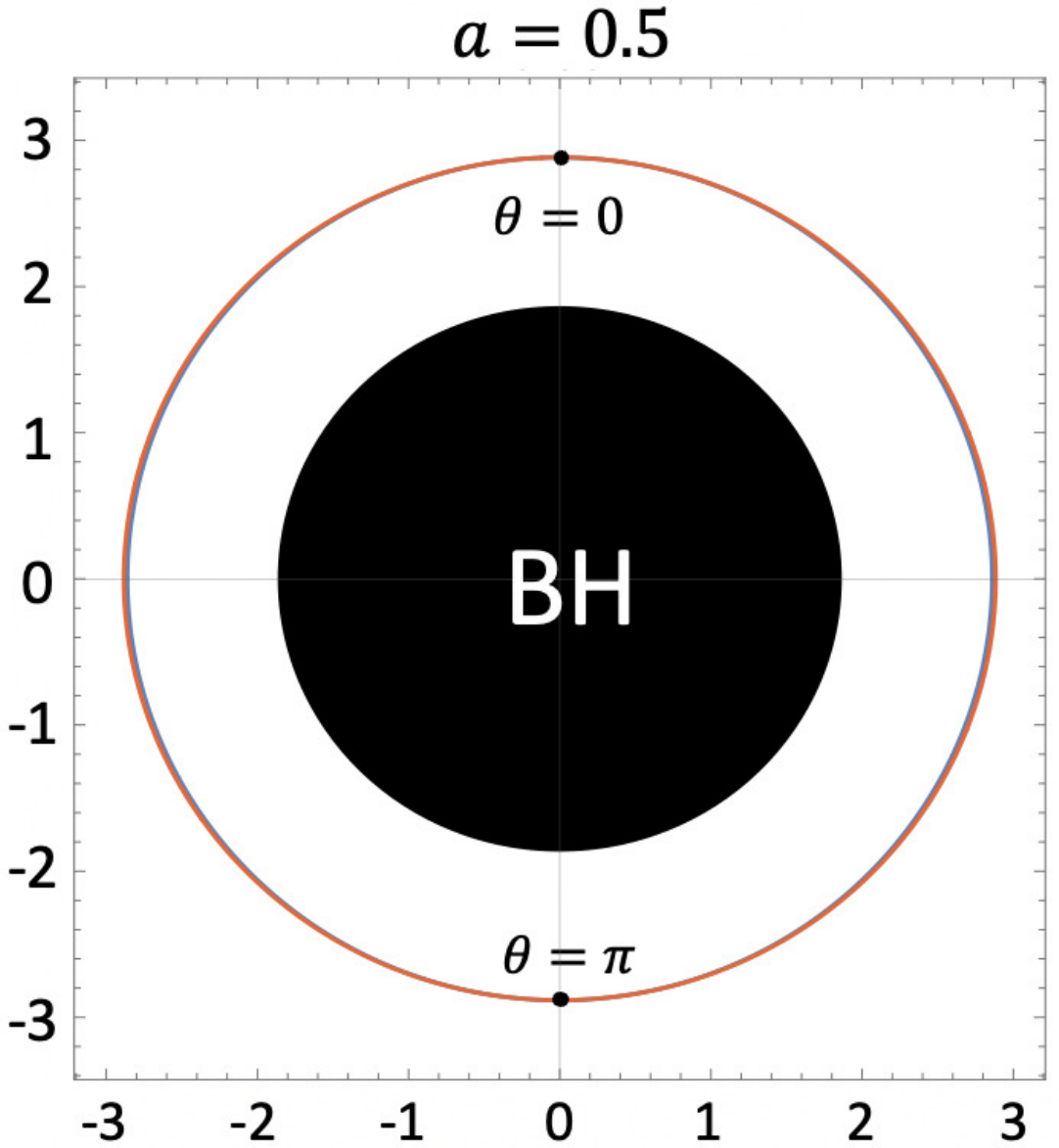}
          \vspace{-0.7mm}
          \end{center}
  \end{minipage}

  \vspace{5.5mm}
  \begin{minipage}[b]{0.48\linewidth}
      \begin{center}
          \includegraphics[width=6.2cm]{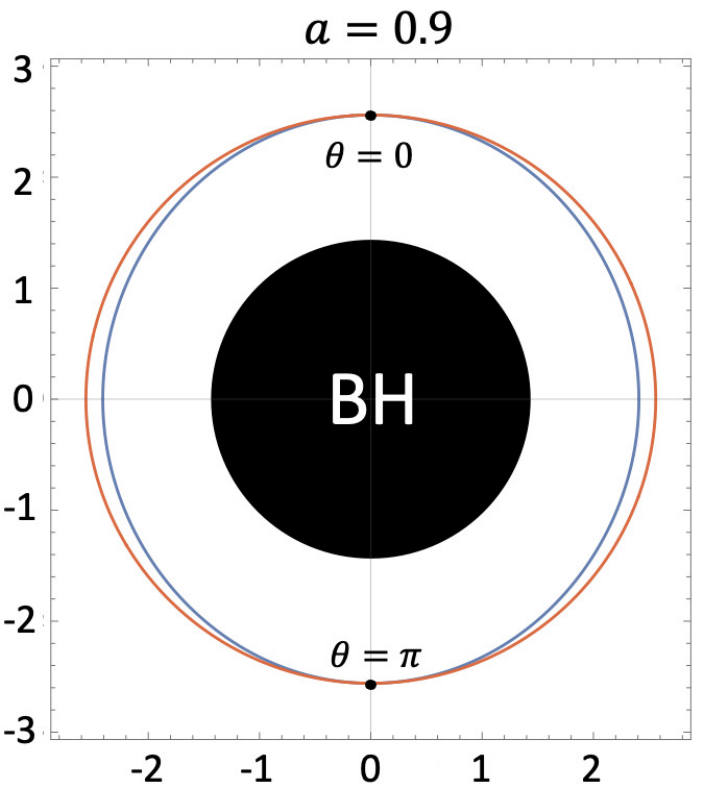}
          \end{center}
  \end{minipage}
  \begin{minipage}[b]{0.02\linewidth}
      ~
  \end{minipage}
  \begin{minipage}[b]{0.48\linewidth}
      \begin{center}
          \includegraphics[width=6.1cm]{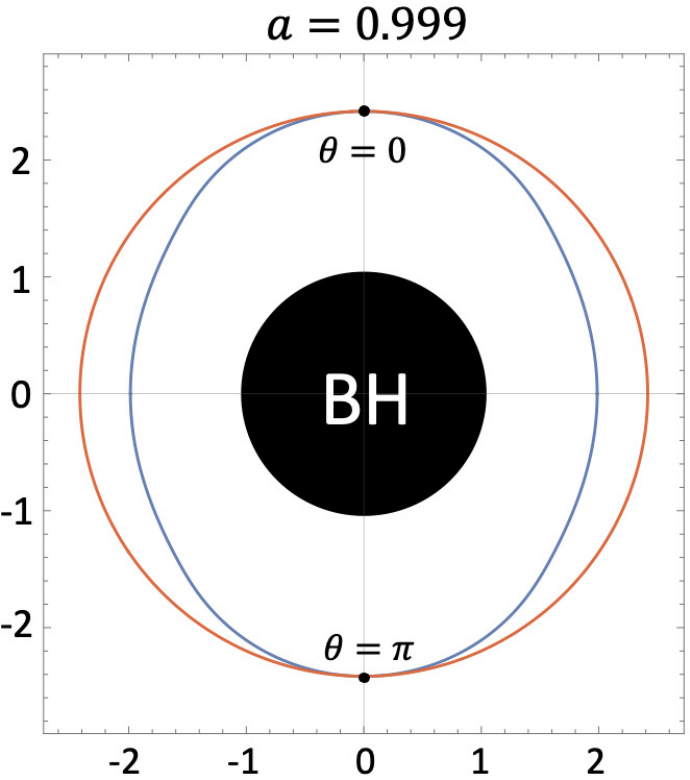}
          \end{center}
  \end{minipage}
  \caption{The outer dark horizons (orange) and inner dark horizons (blue) in the Kerr spacetime seen from a distant position
    on the equatorial plane for $a_*=0,0.5,0.9,0.999$.
    The unit of both axes is $M$.
    The central black disks represent black holes (BHs).
    In the case of $a_*=0$, the outer dark horizon and the inner dark horizon coincide with each other due to Thm.~\ref{ODHsp}.}
    \label{plotDHKerr}
\end{figure}

We depicted the ODH in the Kerr spacetime, {\it i.e.,} $r=r_s^{(0)}$ in Fig.~\ref{plotDHKerr} 
for the cases $a_*=0.0$, $0.5$, $0.9$, and $0.999$. 
In Fig.~\ref{plotDHKerr}, the IDHs, which will be discussed in the next subsection, are shown by blue closed curves.

\subsection{Inner Dark Horizon in Kerr spacetime}
\label{Subsec:KerrIDD}

We now focus our attention to the IDH. The IDH
requires numerical calculations
to specify its location.
As the numerical strategy, 
we make use of
Thm.~\ref{antipodal-points-capture-cone-nonIDD}
and Prop.~\ref{IDH-ODH-conditions},
and in this case, the positions of the 
boundary of the escape and capture cones at $\eta=0$ and $\pi$
play an important role.
Suppose that the position of the boundary
is given by $\xi=\xi_+$ at $\eta=0$ ({\it i.e.}, $\cos\eta=+1$),
and $\xi=\xi_-$  at $\eta=\pi$ ({\it i.e.}, $\cos\eta=-1$).
Then, $\xi_+\le\xi\le\pi$ at $\eta=0$ and $\xi_-\le\xi\le\pi$ at $\eta=\pi$
belong to the capture cone, while $0\le\xi<\xi_+$ at $\eta=0$ and
$0\le\xi<\xi_-$ at $\eta=\pi$ belong to the escape cone. 
For a spacetime point $p$ that satisfies $\xi_-+\xi_+>\pi$,
a pair of antipodal points exists in the escape cone,
such as $(\xi,\eta)=((\xi_+-\xi_-+\pi)/2,0)$
and $((\xi_--\xi_++\pi)/2,\pi)$.
Then, the point $p$ is not in the IDD
from Thm.~\ref{antipodal-points-capture-cone-nonIDD}.

At the distant place $r\gg M$,
the value of $\xi_++\xi_-$ is close to $2\pi$.
As $r$ is decreased, the value of $\xi_++\xi_-$
becomes smaller.
Suppose that a point $p$ satisfying
$\xi_++\xi_-=\pi$ is found.
At that point, we can introduce
the orthodrome that is tangent to the boundary of the escape cone
at $(\xi,\eta)=(\xi_+,0)$ and $(\xi_-,\pi)$
since these two points are antipodal to each other and
the tangent vectors of the boundary of cones at these two points both have vanishing $\xi$ components due to the symmetries in the transformation $\eta\to-\eta$ and in the transformation $\eta-\pi\to-(\eta-\pi)$. 
If that orthodrome is included in the capture cone,
the point $p$ is on the IDH due to Prop.~\ref{IDH-ODH-conditions}.
Therefore, our strategy is firstly to solve for the point
where $\xi_++\xi_-=\pi$, and secondly to
confirm that the obtained point is actually the point of the IDH.

Let us start from solving for 
a position that satisfies $\xi_++\xi_-=\pi$.
We focus on each $\theta$-constant
surface in the Kerr spacetime.
For each $r$, the formulas to parametrically specify the 
boundary of the escape cone are given by Eqs.~\eqref{sinxi}
and \eqref{coseta} with Eqs.~\eqref{bSPO} and \eqref{qSPO}.
These formulas are symbolically written as 
$\sin\xi = \sin\xi(r,r_{\rm s})$ and $\cos\eta = \cos\eta(r,r_{\rm s})$.
Writing the values of $r_s$ that satisfy $\eta=0$ and $\pi$ ({\it i.e.},
$\cos\eta=\pm 1$) as $r_{\rm s}^{\pm}$, we have
$\pm 1  =  \cos\eta(r,r_{\rm s}^{\pm})$.
Then, $\xi_\pm$ are determined by 
$\sin\xi_\pm = \sin \xi(r,r_{\rm s}^{\pm})$.
Since $\xi_++\xi_-=\pi$ leads to the condition
$\sin \xi_+ \ = \ \sin \xi_-$, we obtain the
set of equations to determine the radius $r_{\rm IDH}$ of the IDH
for a given $\theta$:
\begin{subequations}
  \begin{equation}
    \sin \xi(r_{\rm IDH},r_{\rm s}^{+}) \ = \ \sin \xi(r_{\rm IDH},r_{\rm s}^{-}),
    \end{equation}
  \begin{eqnarray}
+1 & = & \cos\eta(r_{\rm IDH},r_{\rm s}^{+}),\\
-1 & = & \cos\eta(r_{\rm IDH},r_{\rm s}^{-}).
  \end{eqnarray}
\end{subequations}
This set of equations can be solved numerically
easily. Except for $\theta=0$ and $\pi$, the value of $r_{\rm IDH}$
is smaller than $r_s^{(0)}$
(interested readers can see blue curves of Fig.~\ref{plotDHKerr}
in advance).

\vspace{2.7mm}
\begin{figure}[htbp]
    \begin{center}
    \includegraphics[width=6.0cm]{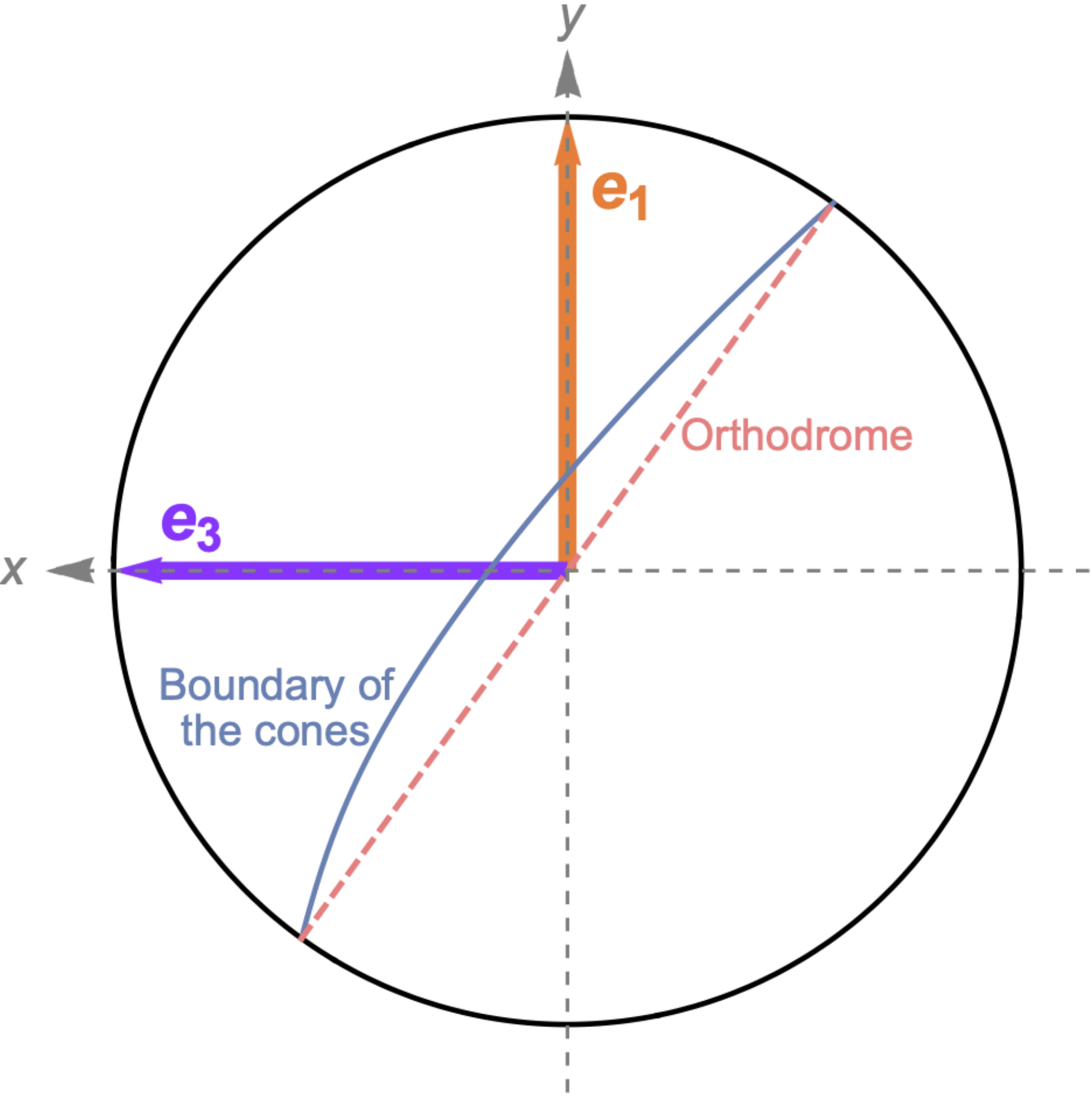}
    \caption{
      The boundary of the escape and capture cones
      projected onto the plane spanned by $\mathbf{e}_3$ and $\mathbf{e}_1$
      for $r/M=r_{\rm IDH}/M\approx 2.092$ 
      in the equatorial plane of a Kerr spacetime with $a_*=0.99$. 
      The orthodrome tangent to the boundary of the cones
      at $(\xi,\eta)=(\xi_+,0)$ and $(\xi_-,\pi)$
      is shown by the dashed line. Because the boundary of the cones
      is convex upward, the orthodrome is included in the
      capture cone.
      }
    \label{Boundary_cones_a0990_equatorial-IDH}
    \end{center}
    \end{figure}

Let us move on to the next step at which we check the obtained surface, $r=r_{\rm IDH}(\theta)$,
is actually the IDH.
Figure~\ref{Boundary_cones_a0990_equatorial-IDH} shows an
example of the boundary of the escape and capture cones and 
the orthodrome tangent to it that are projected onto the
two-dimensional plane spanned by $\mathbf{e}_3$ and $\mathbf{e}_1$ for $r=r_{\rm IDH}$ on the equatorial plane in the case of $a_*=0.99$. 
On the two-dimensional plane, the curve of the projected boundary
and the line of the projected orthodrome have the
same endpoints on the unit circle. 
By the projected orthodrome, the projected
two-sphere of emission directions
is divided into the upper and lower half-disks,
and the condition that the orthodrome is included in the
capture cone 
is equivalent to that the projected boundary of the cones
is included in the upper half-disk.
This is guaranteed by the fact that
the curve of the projected boundary of the cones is
convex upward,
as indicated by the inequality of Eq.~\eqref{Eq:convex-upward}.
Therefore, the obtained numerical solution $r=r_{\rm IDH}(\theta)$
is confirmed to be the IDH. 

In Fig.~\ref{plotDHKerr}, we plot the IDH with blue closed curves
in the Kerr spacetime with several Kerr parameters.
The ODH and IDH both exist in the Kerr spacetime due to Props.~\ref{ODHE} and \ref{IDHE}.
In the case of $a_*=0$, the ODH and IDH coincide with each other by Thm.~\ref{ODHsp}, and their radii are given by $3M$. In the case of $a_*>0$, the IDH coincides with the ODH only on the rotation axis,
and at other positions, 
the ODH is located
in the outside region of the IDH due to Cor.~\ref{IDD-is-included-in-ODD}.
In each case, the ODH is given by $r=r_s^{(0)}$, which coincides with the edge of the shadow observed from infinity on the axis.

%
%
%
\section{Conclusion \& Discussion}

In this paper, we have proposed new concepts dark horizons (the outer dark horizon, ODH for short, and the inner dark horizon, IDH for short) as generalizations of the photon sphere, from static and spherically symmetric spacetimes to general asymptotically flat spacetimes.
 In short, the dark horizons are defined with some timelike vector field $T^\mu$ or the timeslice $\Sigma_t$ which can be chosen along the motion of light sources, and the dark horizons are defined as the boundary of the region where escape (capture) cone is large (small) enough in the frame associated with $T^\mu$ or $\Sigma_t$.
 This definition captures the escape probability of photons emitted from the light sources, and thus, the dependence of the position and the motion of light sources on the brightness observed from infinity.
We have focused on four-dimensional asymptotically flat spacetimes. We can also apply the same definitions in higher dimensions as the dark horizons. In higher-dimensional case, the positive condition for $\Omega_i$, which is defined in Eq.~\eqref{defOi}, is not needed
in Lems.~\ref{OLP}, \ref{ILP}, Props.~\ref{ODHE}, \ref{IDHE}.

We have strictly shown several statements for the dark horizons. First, the ODD and IDD are both absent in the Minkowski spacetime (see Props.~\ref{OLM} and \ref{ILM}) and are both present in the spacetime with black hole(s) under a natural assumption (see Thms.~\ref{OLB} and \ref{ILB}). 
Second, two dark horizons both exist in black hole spacetime under a natural condition (see Props.~\ref{ODHE} and \ref{IDHE}). Third, the IDH is typically located inside of or coinsides with the ODH (see Cor.~\ref{IDD-is-included-in-ODD}).
Fourth, in spherical symmetric spacetimes, the ODH and IDH coincide with each other.
In addition, we have clarified explicit shapes of the  dark horizons in the Vaidya and Kerr spacetimes with simply chosen $T^\mu$. However, we left plots with $T^\mu$ determined by more reasonable motions of light sources for future work.

Through analyses in static and spherically symmetric spacetimes with $T^\mu$ proportional to the timelike Killing field, we have seen that any photon sphere is a dark horizon if null geodesics passing slightly outside of that photon sphere with $r'=0$ reach future null infinity. In this sense, the dark horizon is a generalization of a ``visible'' photon sphere. In addition, if we adopt $T^\mu$ with nonzero radial component that represents the light source moving in radial directions, the dark horizon is located at a different position from the photon sphere. In this sense, the dark horizon is a further generalization of the visible photon spheres to include the relativistic beaming effect.

The dark horizons are defined by referring to null geodesics reaching future null infinity, at which ideal observers are supposed to be distributed.
In this sense, the location of dark horizons are affected not only by the geometry around the dark horizons but also by that in the distant region.
This distinguishes the concepts of the dark horizon from some of other generalizations of the photon sphere defined in terms of the local geometrical quantities, such as the extrinsic curvature (e.g. \cite{Shiromizu:2017ego,Yoshino:2019dty}).

The dark horizons rather resemble the concept of the black room of Ref.~\cite{Siino:2021kep} in the sense that these are defined in terms of the global behavior of the null geodesics.
Here, the black room is defined as a spacetime region such that any photon entering the black room never goes from it.
According to this definition, photons emitted at a point on the boundary of the black room in the tangential
directions to the boundary must either propagate on the boundary of the black room or fall into the black room. 
Then, if we prepare a timelike vector field $T^\mu$ 
which is tangent to the boundary of the maximal black room, the IDD associated with it must include the boundary of the black room.

The main difference between the black room and the dark horizons is as follows.
On the one hand, the condition for a specific point to be in the black room refers to the global behavior of the null geodesics both in the future and past directions of this point, and therefore, it cannot have past endpoints.
Then, the black room does not exist in spacetimes in which a black hole exists only after the gravitational collapse. In addition, the black room does not exist for spacetimes without a black hole~\cite{Siino:2021kep}.
On the other hand, each point of the dark horizons refers to the global behavior of null geodesics only in the future light cone of this point. 
As a result, it is possible for  dark horizons to exist in black hole spacetimes with the gravitational collapse. In addition, it is also possible that the dark horizons exist in spacetimes without a black hole
and they coincide with the visible photon sphere. 
These differences would originate from the difference of the supposed positions of the light sources in these concepts. 
The black room concept would correspond to the situation where light sources are located at past null infinity (as the simple theoretical models of the black hole shadows), whereas the dark horizons correspond to the situation where light sources are located in and around the ODH and IDH, which are typically near the black hole. 
We expect that the dark horizons serve as important concepts to describe the properties of the shadows whose light sources
distribute in strong gravity regions in varieties of asymptotically flat spacetimes.

\bmhead{Acknowledgments}
M.~A. is grateful to Roberto Emparan, Shinji Mukohyama and Takahiro Tanaka for continuous encouragements and valuable suggestions. 
We are grateful to Youka Kaku, Takahiro Tanaka, Shinya Tomizawa, Kazumasa Okabayashi, and Chul-Moon Yoo for useful discussions.
M.~A. is supported by the ANRI Fellowship, JSPS Overseas Challenge Program for Young Researchers and Grant-in-Aid for JSPS Fellows No. 22J20147 and 22KJ1933.
M.~A., K.~I.  and T.~S. are supported by Grant-Aid for Scientific Research from Ministry of Education, Science, Sports 
and Culture of Japan (JP21H05182). K.~I., H.~Y. and T.~S. are also supported by JSPS(No. JP21H05189). 
K.~I. is also supported by 
JSPS Grants-in-Aid for Scientific Research (B) (JP20H01902)
and JSPS Bilateral Joint Research Projects (JSPS-DST collaboration) (JPJSBP120227705). 
T.~S. is also supported by JSPS Grants-in-Aid for Scientific Research (C) (JP21K03551). 
H.~Y. is in part supported by JSPS KAKENHI Grant Numbers JP22H01220,
and is partly supported by Osaka Central Advanced Mathematical Institute 
(MEXT Joint Usage/Research Center on Mathematics and Theoretical Physics JPMXP0619217849).

\begin{appendices}

  \section{Proof of the inequalities of Eqs.~\eqref{Eq:negative-slope_boundary}
    and \eqref{Eq:convex-upward}}
\label{Sec:Proof_convex-upward}

In this appendix, we present the proof of the inequalities
of Eqs.~\eqref{Eq:negative-slope_boundary} and \eqref{Eq:convex-upward}
in the analysis of the Kerr spacetime of Sec.~\ref{Sec:Kerr}.
Before starting, it is useful to discuss the
range of values that $r_s$ can take in Eqs.~\eqref{bSPO}
and \eqref{qSPO}.
Since the range of $r_s$
depends on $\theta$, we regard that the range is given by 
$r_{\rm SPO}^{(+)}(\theta)\le r_s\le r_{\rm SPO}^{(-)}(\theta)$ for a given $\theta$,
where $r=r_{\rm SPO}^{(+)}(\theta)$
and $r=r_{\rm SPO}^{(-)}(\theta)$ correspond to the prograde and retrograde
spherical photon orbits, respectively. 
In the equatorial plane, they take the values (e.g. \cite{Frolov:1998})
\begin{equation}
  r_{\rm SPO}^{(\pm)}(\pi/2) = 2M\left\{
1+\cos\left[\frac23\arccos\left(\mp a_*\right)\right]
  \right\}.
\end{equation}
As the value of $\theta$ is decreased, the values of
$r_{\rm SPO}^{(+)}(\theta)$ increases, while the value of $r_{\rm SPO}^{(-)}(\theta)$
decreases, and both converge to $r_s^{(0)}$
of Eq.~\eqref{SPORadius} in the limit $\theta \to 0$.
In any case, it is sufficient to know that 
the value of $r_s$
satisfies $r_s>r_{\rm H}^+>M$, where $r_{\rm H}^+$ is defined in
Eq.~\eqref{Location-horizons},
since the spherical photon orbits are located outside
the event horizon.

As a preparation, we show that the inequality
\begin{equation}
  A-2Mabr>0
  \label{Inequality:A-2Mabr}
\end{equation}
holds on the boundary of the escape and capture cones,
where $A$ is given in Eq.~\eqref{Kerr-metric-definitions}.
Since the value of $b$ 
is parametrically given by Eq.~\eqref{bSPO} 
on the boundary of the cones, after some algebra, 
we have 
\begin{multline}
  A-2Mabr \, = \,
  2Mr\Delta(r_s) + (r^2+2Mr+a^2\cos^2\theta)\Delta(r)
  \\
  +\frac{2M^2r\left[(r-r_{\rm H}^-)(r_s-r_{\rm H}^+)+(r-r_{\rm H}^+)(r_s-r_{\rm H}^-)\right]}{r_s-M},
\end{multline}
where $\Delta(r)$ is the same as $\Delta$ given in Eq.~\eqref{Kerr-metric-definitions}, $\Delta(r_s)$ is the same as $\Delta(r)$ but $r$ being replaced
by $r_s$, and the definitions of $r_{\rm H}^\pm$
are given in Eq.~\eqref{Location-horizons}.
The right-hand side of this equation
is manifestly positive for $r>r_{\rm H}^+$ and $r_s>r_{\rm H}^+$,
and therefore, the inequality of Eq.~\eqref{Inequality:A-2Mabr}
is satisfied.
This inequality means that
on the boundary of the cones, we do not have to take
an absolute value in the denominator of Eq.~\eqref{sinxi},
and $\mathrm{sgn}(A-2Mabr)$ in Eq.~\eqref{coseta} can be omitted.

We now present the proof of the inequalities
of Eqs.~\eqref{Eq:negative-slope_boundary} and \eqref{Eq:convex-upward}.
In the main article, we introduced the
coordinates $(\xi,\eta)$ on the two-sphere of emission directions
through Eq.~\eqref{xieta},
and in the paragraph with Eq.~\eqref{Eq:negative-slope_boundary},
we introduced the Cartesian coordinates $(x,y)$
on the unit disk that is the two-sphere projected onto the plane spanned by
$\mathbf{e}_3$ and $\mathbf{e}_1$.
The relation between these two coordinate systems is
\begin{subequations}
\begin{eqnarray}
  x &=& \sin\xi\cos\eta,\\
  y &=& \cos\xi.
\end{eqnarray}
\end{subequations}
Through Eqs.~\eqref{sinxi}, \eqref{coseta}, \eqref{bSPO}, and \eqref{qSPO},
the projected boundary of the escape and capture
cones is given in the form $x=x(r_s)$ and $y=y(r_s)$.
More specifically, 
\begin{subequations}
\begin{eqnarray}
  x &=& \frac{b\Sigma\sqrt{\Delta}}{(A-2Mabr)\sin\theta},\\
  y &=& \frac{(r_s-r)\sqrt{H(r_s)A}}{(r_s-M)(A-2Mabr)},
  \label{yrs}
\end{eqnarray}
\end{subequations}
with
\begin{equation}
  H(r_s)\, = \, (r_s-M)^2\left[(r_s+r-2M)^2+4M(r-M)\right]+4M(M^2-a^2)r_s.
  \label{Def-hrs}
\end{equation}
Equation~\eqref{yrs} with Eq.~\eqref{Def-hrs}
gives the explicit form of the right-hand side of
Eq.~\eqref{cos-xi-for-appropriate-boundary},
{\it i.e.}, the analytic expression of $\cos\xi$
for the correct boundary of the cones.

The quantity $dy/dx$ is calculated as
\begin{equation}
  \frac{dy}{dx}\ = \
  \frac{y'}{x'}
  \ = \ 
  -\frac{a\sin\theta}{\Sigma}\sqrt{\frac{\Delta}{A H(r_s)}}
  \left[rg(r_s)+a^2\cos^2\theta h(r_s)\right],\label{Eq:dydx}
\end{equation}
where the prime denotes the derivative with respect to $r_s$ in this appendix, and
\begin{eqnarray}
  g(r_s) & =& (r+M)r_s(r_s+r)+M(3r_s^2+r^2),\\
  h(r_s) &=& (r_s-M)^2 + (r-M)(r_s+M).
\end{eqnarray}
Because $g(r_s)$ and $h(r_s)$ are obviously positive for $r_s>M$ and $r>M$,
the right-hand side of Eq.~\eqref{Eq:dydx} is nonpositive for $a>0$, and 
becomes zero on and only on the rotation axis at which $\sin\theta=0$ is
satisfied~\footnote{Although we have discussed spacetime points off the rotation axis in this subsection, the points on the rotation axis can also be discussed due to the continuity of the escape cones.  See subsection \ref{Subsec:Kerr-axis} for further discussion on the points on the rotation axis.}.
Therefore, the inequality of Eq.~\eqref{Eq:negative-slope_boundary}
has been proved.

The quantity $d^2y/dx^2$ is calculated as
\begin{equation}
  \frac{d^2y}{dx^2}\ = \
  -\frac{Ma^2\sin^2\theta (r+3r_s)(r_s-M)^3(A-2Mabr)^3}
  {\Sigma^2 A^{3/2}K(r_s)H(r_s)^{3/2}},\label{Eq:d^2ydx^2}
\end{equation}
where
\begin{equation}
  K(r_s) \ =\ (r_s-M)^3+M(M^2-a^2).
\end{equation}
Then, the right-hand side of Eq.~\eqref{Eq:d^2ydx^2} is nonpositive
for the black hole configuration $M^2>a^2$ since
$r_s>M$, $r>M$, and the inequality of Eq.~\eqref{Inequality:A-2Mabr} hold,
and becomes zero on and only on the rotation axis.
Therefore, the inequality of Eq.~\eqref{Eq:convex-upward} has been proved.

\backmatter




\end{appendices}



\begin{thebibliography}{9}
    \bibitem{Akiyama:2019cqa}
  K.~Akiyama \textit{et al.} (Event Horizon Telescope Collaboration),
  ``First M87 Event Horizon Telescope results. I. The shadow of the supermassive black hole,''
  Astrophys. J. Lett. \textbf{875}, L1 (2019).
  
  \bibitem{EventHorizonTelescope:2022xnr}
  K.~Akiyama \textit{et al.} [Event Horizon Telescope],
  ``First Sagittarius A* Event Horizon Telescope Results. I. The Shadow of the Supermassive Black Hole in the Center of the Milky Way,''
  Astrophys. J. Lett. \textbf{930}, no.2, L12 (2022).
  
  \bibitem{EventHorizonTelescope:2019pgp}
  K.~Akiyama \textit{et al.} [Event Horizon Telescope],
  ``First M87 Event Horizon Telescope Results. V. Physical Origin of the Asymmetric Ring,''
  Astrophys. J. Lett. \textbf{875}, no.1, L5 (2019).
  
  \bibitem{Wang:2018prk}
  H.~M.~Wang, Y.~M.~Xu and S.~W.~Wei,
  ``Shadows of Kerr-like black holes in a modified gravity theory,''
  JCAP \textbf{03}, 046 (2019).
  
  \bibitem{Moffat:2019uxp}
  J.~W.~Moffat and V.~T.~Toth,
  ``Masses and shadows of the black holes Sagittarius A* and M87* in modified gravity,''
  Phys. Rev. D \textbf{101}, no.2, 024014 (2020).
  
  \bibitem{Khodadi:2020gns}
  M.~Khodadi and E.~N.~Saridakis,
  ``Einstein-\AE{}ther gravity in the light of event horizon telescope observations of M87*,''
  Phys. Dark Univ. \textbf{32}, 100835 (2021).
  
  \bibitem{Claudel:2000} 
  C.~M.~Claudel, K.~S.~Virbhadra, and G.~F.~R.~Ellis,
  ``The Geometry of photon surfaces,''
  J.\ Math.\ Phys.\  {\bf 42}, 818 (2001).
  
  \bibitem{Virbhadra:1999nm}
  K.~S.~Virbhadra and G.~F.~R.~Ellis, ``Schwarzschild black hole lensing,''
  Phys. Rev. D \textbf{62}, 084003  (2000).

  
\bibitem{Gralla:2019xty}
S.~E.~Gralla, D.~E.~Holz and R.~M.~Wald,
``Black Hole Shadows, Photon Rings, and Lensing Rings,''
Phys. Rev. D \textbf{100}, no.2, 024018 (2019).
  
  \bibitem{Yang:2019zcn}
  R.~Q.~Yang and H.~Lu, ``Universal bounds on the size of a black hole,''
  Eur. Phys. J. C \textbf{80} no.10, 949  (2020).
  
  \bibitem{Lu:2019zxb}
  H.~Lu and H.~D.~Lyu, ``Schwarzschild black holes have the largest size,''
  Phys. Rev. D \textbf{101} no.4, 044059  (2020).
    
  \bibitem{Shiromizu:2017ego}
  T.~Shiromizu, Y.~Tomikawa, K.~Izumi and H.~Yoshino,
  ``Area bound for a surface in a strong gravity region,''
  PTEP \textbf{2017}, no.3, 033E01 (2017).
  
  \bibitem{Yoshino:2017gqv}
  H.~Yoshino, K.~Izumi, T.~Shiromizu and Y.~Tomikawa,
  ``Extension of photon surfaces and their area: Static and stationary spacetimes,''
  PTEP \textbf{2017}, no.6, 063E01 (2017).
  
  \bibitem{Cao:2019vlu}
  L.~M.~Cao and Y.~Song,
  ``Quasi-local photon surfaces in general spherically symmetric spacetimes,''
  Eur. Phys. J. C \textbf{81}, 714 (2021).
  
  \bibitem{Yoshino:2019dty}
  H.~Yoshino, K.~Izumi, T.~Shiromizu and Y.~Tomikawa,
  ``Transversely trapping surfaces: Dynamical version,''
  PTEP \textbf{2020}, no.2, 023E02 (2020).
      
  \bibitem{Siino:2019vxh}
  M.~Siino,
  ``Causal concept for black hole shadows,''
  Class. Quant. Grav. \textbf{38}, no.2, 025005 (2021).
    
  \bibitem{Siino:2021kep}
  M.~Siino,
  ``Black hole shadow and wandering null geodesics,''
  Phys. Rev. D \textbf{106}, no.4, 044020 (2022).
  
  \bibitem{Bondi} 
  H.~Bondi, M.~G.~J.~van der Burg and A.~W.~K.~Metzner,
  ``Gravitational waves in general relativity. VII. Waves from axisymmetric isolated systems,''
  Proc. Roy. Soc. Lond. A \textbf{269}, 21-52 (1962).
  
  \bibitem{Sachs} 
  R.~K.~Sachs,
  ``Gravitational waves in general relativity. VIII. Waves in asymptotically flat space-times,''
  Proc. Roy. Soc. Lond. A \textbf{270}, 103-126 (1962).
  
  \bibitem{Tanabe:2011es}
  K.~Tanabe, S.~Kinoshita and T.~Shiromizu,
  ``Asymptotic flatness at null infinity in arbitrary dimensions,''
  Phys. Rev. D \textbf{84}, 044055 (2011).
  
  \bibitem{Hollands:2003xp}
  S.~Hollands and A.~Ishibashi,
  ``Asymptotic flatness at null infinity in higher dimensional gravity,''
  arXiv:hep-th/0311178.
  
  \bibitem{Hollands:2003ie}
  S.~Hollands and A.~Ishibashi,
  ``Asymptotic flatness and Bondi energy in higher dimensional gravity,''
  J. Math. Phys. \textbf{46}, 022503 (2005).
  
  \bibitem{Ishibashi:2007kb}
  A.~Ishibashi,
  ``Higher Dimensional Bondi Energy with a Globally Specified Background Structure,''
  Classical Quantum Gravity \textbf{25}, 165004 (2008).
   
  \bibitem{Amo:2021gcn}
  M.~Amo, K.~Izumi, Y.~Tomikawa, H.~Yoshino and T.~Shiromizu,
  ``Asymptotic behavior of null geodesics near future null infinity: Significance of gravitational waves,''
  Phys. Rev. D \textbf{104} (2021) no.6, 064025 [erratum: Phys. Rev. D \textbf{107}, 029901 (2023)].
  
  \bibitem{Amo:2021rxr}
  M.~Amo, T.~Shiromizu, K.~Izumi, H.~Yoshino and Y.~Tomikawa,
  ``Asymptotic behavior of null geodesics near future null infinity. II. Curvatures, photon surface, and dynamically transversely trapping surface,''
  Phys. Rev. D \textbf{105}, no.6, 064074 (2022).
  
  \bibitem{Amo:2022tcg}
    M.~Amo, K.~Izumi, Y.~Tomikawa, H.~Yoshino and T.~Shiromizu,
    ``Asymptotic behavior of null geodesics near future null infinity. III. Photons towards inward directions,''
    Phys. Rev. D \textbf{106}, no.8, 084007 (2022) [erratum: Phys. Rev. D \textbf{107}, 029902 (2023)].
  
  
  \bibitem{Amo:2023qws}
  M.~Amo, K.~Izumi, Y.~Tomikawa, T.~Shiromizu and H.~Yoshino,
  ``Asymptotic behavior of null geodesics near future null infinity IV: Null-access theorem for generic asymptotically flat spacetime,''
  [arXiv:2305.01767 [gr-qc]].
  
  \bibitem{Israel:1966}
  W.~Israel,
  ``Singular hypersurfaces and thin shells in general relativity,''
  Nuovo Cim. B \textbf{44S10}, 1 (1966)
  [erratum: Nuovo Cim. B \textbf{48}, 463 (1967)].
  
  \bibitem{Ogasawara:2020frt}
  K.~Ogasawara and T.~Igata, ``Complete classification of photon escape in the Kerr black hole spacetime,''
  Phys. Rev. D \textbf{103} (2021) no.4, 044029.
  
  \bibitem{Carter:1968rr}
  B.~Carter, ``Global structure of the Kerr family of gravitational fields,''
  Phys. Rev. \textbf{174} (1968), 1559-1571.
  
  \bibitem{Walker:1970un}
  M.~Walker and R.~Penrose,
  ``On quadratic first integrals of the geodesic equations for type [22] spacetimes,''
  Commun. Math. Phys. \textbf{18}, 265-274 (1970).
  
  \bibitem{Koga:2022dsu}
  Y.~Koga, N.~Asaka, M.~Kimura and K.~Okabayashi,
  ``Dynamical photon sphere and time evolving shadow around black holes with temporal accretion,''
  Phys. Rev. D \textbf{105}, no.10, 104040 (2022).
  
  \bibitem{Mishra:2019trb}
  A.~K.~Mishra, S.~Chakraborty and S.~Sarkar,
  ``Understanding photon sphere and black hole shadow in dynamically evolving spacetimes,''
  Phys. Rev. D \textbf{99}, no.10, 104080 (2019).
    
  \bibitem{Frolov:1998}
    V.~P.~Frolov and I.~D.~Novikov,
    {\it Black Hole Physics: Basic Concepts and New Developments},
    (Kluwer Academic Publishers, Dordrecht and Boston, 1998).
  
  \bibitem{Teo:2020sey}
  E.~Teo, ``Spherical orbits around a Kerr black hole,''
  Gen. Rel. Grav. \textbf{35}, 1909, (2003).
  
  \end{thebibliography}
\end{document}